\definecolor{lightblue}{rgb}{.90,.95,1}
\theoremstyle{plain}
\newtheorem{Th}{Theorem}[section]
\newtheorem{lemma}[Th]{Lemma}
\newtheorem{remark}[Th]{Remark}
\newtheorem{prop}[Th]{Proposition}
\newtheorem{cor}[Th]{Corollary}
\newtheorem{definition}[Th]{Definition}
\newcommand{\bb}[1]{\ensuremath{{{\color{gray}\blacklozenge}}_{\raisebox{-1pt}{\tiny $\mathrm{#1}$}}}}
\newcommand{\ww}[1]{\lozenge_{\raisebox{-1pt}{\tiny $\mathrm{#1}$}}}
\newcommand{\V}[1]{\mathcal{V}_{\raisebox{-1pt}{\tiny#1}}}
\newcommand{\bblack}[1]{\ensuremath{{{\color{black}\blacklozenge}}_{\raisebox{-1pt}{\tiny $\mathrm{#1}$}}}}
\newcommand{\wwhite}[1]{\ensuremath{{{\color{white}\blacklozenge}}_{\raisebox{-1pt}{\tiny $\mathrm{#1}$}}}}
\newcommand{\om}{\Omega}
\newcommand{\intom}{\mathrm{Int}\, \Omega}
\newcommand{\dV}{\partial\V{}}
\newcommand{\vl}{v_{\raisebox{-1pt}{\tiny$\lambda$}}}
\newcommand{\vlbr}{v_{\raisebox{-1pt}{\tiny $\overline\lambda$}}} 
\newcommand{\uR}{u_{\raisebox{-1pt}{\tiny R}}}
\newcommand{\uI}{u_{\raisebox{-1pt}{\tiny I}}}
\newcommand{\intbb}{\mathrm {Int}\bb{}}
\newcommand{\intb}[1]{\mathrm {Int}\bb{#1}}
\newcommand{\intw}[1]{\mathrm {Int}\ww{#1}}
\newcommand{\dintb}[1]{\partial_{\mathrm {int}}\bb{#1}}
\newcommand{\dintw}[1]{\partial_{\mathrm {int}}\ww{#1}}
\newcommand{\diom}{\partial_{\mathrm {int}}\Omega}
\newcommand{\dib}{\partial_{\mathrm {int}}\bb{}}
\newcommand{\diw}{\partial_{\mathrm {int}}\ww{}}
\newcommand{\clom}{\overline{\om}}
\newcommand{\clb}{\bar{\bb{}}}
\newcommand{\clw}{\bar{\ww{}}}
\newcommand{\clbb}[1]{\bar{\ensuremath{{\color{gray}\blacklozenge}}}_{\raisebox{-1pt}{\tiny $\mathrm{#1}$}}}
\newcommand{\clww}[1]{\bar{\ensuremath{{\lozenge}}}_{\raisebox{-1pt}{\tiny $\mathrm{#1}$}}}
\newcommand{\dom}{\partial\Omega}
\newcommand{\db}{\partial\ensuremath{{\color{gray}\blacklozenge}}}
\newcommand{\dw}{\partial\lozenge}
\newcommand{\dbb}[1]{\partial\bb{#1}}
\newcommand{\dww}[1]{\partial\ww{#1}}
\newcommand{\bbb}{u_0}
\newcommand{\www}{v_0}
\newcommand{\F}{F}
\newcommand{\G}{G}
\newcommand{\HH}{H}
\newcommand{\ff}[1]{f_{{#1}}}
\newcommand{\Cmd}{C_{\Omega^\delta}}
\newcommand{\re}{\mathrm {Re}}
\newcommand{\im}{\mathrm {Im}}
\newcommand{\mdel}{M^\delta}
\newcommand{\Ftilda}{\widetilde{F}}
\newcommand{\old}[1]{}
\begin{document}

\title{Dominos in hedgehog domains}
\large

\author[Marianna Russkikh]{Marianna Russkikh$^\mathrm{\sharp}$}

\thanks{\textsc{
${}^\mathrm{\sharp}$ 
Department of Mathematics, Massachusetts Institute of Technology. 77 Massachusetts Avenue
Cambridge, MA 02139-4307, USA}}

%\thanks{\textsc{
%${}^\mathrm{\sharp}$ 
%Section de Math\'ematiques, Universit\'e de Gen\`eve.
%2-4 rue du Li\`evre, Case postale~64, 1211 Gen\`eve 4, Suisse.}}

%\thanks{\textsc{
%${}^\mathrm{\flat}$
%Chebyshev Laboratory, Department of Mathematics and Mechanics, St. Petersburg
%State University. 14th Line, 29b, 199178 St. Petersburg, Russia}}

\thanks{{\it E-mail addresses:} \texttt{Russkikh@mit.edu}}

\begin{abstract} 
We introduce a new class of discrete approximations of planar domains that we call ``hedgehog domains''. In particular, this class of approximations contains two-step Aztec diamonds and similar shapes. We show that fluctuations of the height function of a random dimer tiling on hedgehog discretizations of a planar domain converge in the scaling limit to the Gaussian Free Field with Dirichlet boundary conditions. Interestingly enough, in this case the dimer model coupling function satisfies the same Riemann-type boundary conditions as fermionic observables in the Ising model.
%We study the large-scale behavior of the height function in the dimer model on the square lattice. Richard Kenyon has shown, that the fluctuations of the height function on Temperley discretizations of a planar domain converge in the scaling limit (as the mesh size tends to zero) to the Gaussian Free Field with Dirichlet boundary conditions. We extend Kenyon's result to a more general class of discretizations.

In addition, using the same factorization of the double-dimer model coupling function as in~\cite{rus},  we show that in the case of approximations by hedgehog domains the expectation of the double-dimer height function is harmonic in the scaling limit.
%For Temperley discretizations with appropriate boundary modifications, Kenyon has shown that the double-dimer height function converges to a harmonic function in the scaling limit. We use the new factorization to extend the Kenyon's result to a class of all polygonal discretizations, that are not necessarily Temperley. Furthermore, we show, that, quite surprisingly, the expectation of the double-dimer height function in the Temperley case is exactly discrete harmonic (for an appropriate choice of Laplacian) even before taking the scaling limit.
\end{abstract}

\maketitle

\tableofcontents

%\newpage

\section{Introduction}
A dimer covering of a graph is a subset of edges that covers every vertex exactly once. The dimer model is a random covering of a given graph by dimers. In this paper, we are interested in uniform random coverings of finite subgraphs (or domains) of the square lattice. Such a dimer covering may be viewed as a random tiling of a domain on the dual lattice by dominos $2\times 1$.

\begin{figure}
\begin{center}
\begin{tikzpicture}[x={(0.7cm,0cm)}, y={(0cm,0.7cm)}]
\begin{scope}
\draw (0,0) -- (2,0) -- (2,1) -- (3,1) -- (3,3) -- (0,3)-- cycle;

\draw  [draw, fill=gray!20](1,0) rectangle (2,1);
\draw  [draw, fill=gray!20](2,1) rectangle (3,2);
\draw  [draw, fill=gray!20](1,2) rectangle (2,3);
\draw  [draw, fill=gray!20](0,1) rectangle (1,2);

\draw[line width=1.3pt] (0,0) rectangle (2,1);
\draw[line width=1.3pt] (1,1) rectangle (3,2);
\draw[line width=1.3pt] (1,2) rectangle (3,3);
\draw[line width=1.3pt] (0,1) rectangle (1,3);

%\draw[white] (0,0) -- (0,1) -- (2,1) -- (2,2);
%\draw[gray!80, ultra thick, dashed] (0,0) -- (0,1) -- (2,1) -- (2,2);
\draw[ line width=2.3pt] (0,0) -- (0,1) -- (2,1) -- (2,2);
\draw[white, line width=1pt,dashed] (0,0) -- (0,1) -- (2,1) -- (2,2);

\fill[black] (0,0)  node[below,font=\tiny]{$0$};
\fill[black] (0,1) node[left,font=\tiny]{$1$};
\fill[black] (0,2)  node[left,font=\tiny]{$0$};
\fill[black] (0,3) node[left,font=\tiny]{$1$};
\fill[black] (1,3) node[above,font=\tiny]{$2$};
\fill[black] (2,3) node[above,font=\tiny]{$1$};
\fill[black] (3,3) node[above,font=\tiny]{$2$};
\fill[black] (3,2) node[right,font=\tiny]{$3$};
\fill[black] (3,1) node[right,font=\tiny]{$2$};
\fill[black] (2,0) node[below,font=\tiny]{$0$};
\fill[black] (1,0) node[below,font=\tiny]{$-1$};
\fill[black] (0.7,1) node[below,font=\tiny]{$2$};
\fill[black] (1.7,1) node[above,font=\tiny]{$1$};
\fill[black] (2.2,2) node[above,font=\tiny]{$4$};
\fill[black] (0.8,2) node[above,font=\tiny]{$3$};

\fill[black] (0,0) circle (1.5pt);
\fill[black] (0,1) circle (1.5pt);
\fill[black] (0,2) circle (1.5pt);
\fill[black] (0,3) circle (1.5pt);
\fill[black] (1,3) circle (1.5pt);
\fill[black] (2,3) circle (1.5pt);
\fill[black] (3,3) circle (1.5pt);
\fill[black] (3,2) circle (1.5pt);
\fill[black] (3,1) circle (1.5pt);
\fill[black] (1,0) circle (1.5pt);
\fill[black] (0,1) circle (1.5pt);
\fill[black] (1,1) circle (1.5pt);
\fill[black] (2,1) circle (1.5pt);
\fill[black] (2,2) circle (1.5pt);
\fill[black] (1,2) circle (1.5pt);
\end{scope}

\begin{scope}[xshift=5cm]
\draw  (-1,3.5) rectangle (0,2.5) node (v3) {};

\draw  [draw, fill=gray!20](-1,4.5) rectangle (0,3.5) node (v1) {};

\draw [draw, fill=gray!20] (v1) rectangle (1,2.5) node (v4) {};

\draw  [draw, fill=gray!20](-2,3.5) rectangle (-1,2.5) node (v2) {};

\draw  [draw, fill=gray!20](v2) rectangle (0,1.5) node (v5) {};

\draw  (v3) rectangle (1,1.5);
\draw  [draw, fill=gray!20](v4) rectangle (2,1.5);
\draw  [draw, fill=gray!20](v5) rectangle (1,0.5);

%up
\draw[draw, line width=1.5pt]  (-0.5,4) edge (-0.5,2);
\draw (-0.55,3.6) node[anchor=west]{$i$};
\draw (-0.45,2.4) node[anchor=east]{$-i$};
\draw[draw, line width=1.5pt]  (-1.5,3) edge (0.5,3);
\draw (0.1,2.95) node[anchor=south,font=\small]{$1$};
\draw (-1.1,2.95) node[anchor=south,font=\small]{$-1$};

%down
\draw[draw, line width=1.5pt]  (0.5,3) edge (0.5,1);
\draw (0.45,2.6) node[anchor=west]{$i$};
\draw (0.55,1.4) node[anchor=east]{$-i$};
\draw[draw, line width=1.5pt]  (-0.5,2) edge (1.5,2);
\draw (1.1,1.95) node[anchor=south,font=\small]{$1$};
\draw (-0.05,1.95) node[anchor=south,font=\small]{$-1$};

%up
%\path[draw, fill=black] (0.5,3) node (v9) {} circle[radius=0.1cm];
\path (0.5,3) node[]{$\bblack{}$};
%\path[draw, fill=black] (-1.5,3) node (v8) {} circle[radius=0.1cm];
\path (-1.5,3) node[]{$\bblack{}$};
%\path[draw, fill=black] (-0.5,4) node (v6) {} circle[radius=0.1cm];
\path (-0.5,4) node[]{$\bblack{}$};
%\path[draw, fill=black] (-0.5,2) node (v7) {} circle[radius=0.1cm];
\path (-0.5,2) node[]{$\bblack{}$};
%\path[draw,line width=1.2pt, fill=white] (-0.5,3) node () {} circle[radius=0.1cm];
\path (-0.5,3) node[]{$\wwhite{}$};
\path (-0.5,3) node[]{$\ww{}$};

%down
%\path[draw,line width=1.2pt, fill=white] (0.5,2) node () {} circle[radius=0.1cm];
\path (0.5,2) node[]{$\wwhite{}$};
\path (0.5,2) node[]{$\ww{}$};
%\path[draw, fill=black] (0.5,1) node (v6) {} circle[radius=0.1cm];
\path (0.5,1) node[]{$\bblack{}$};
%\path[draw, fill=black] (1.5,2) node (v7) {} circle[radius=0.1cm];
\path (1.5,2) node[]{$\bblack{}$};
\end{scope}

\end{tikzpicture} 
 \begin{tikzpicture}[x={(0.4cm,0cm)}, y={(0cm,0.4cm)}]
 
 \draw  [draw, white](-2,5) rectangle (-1,6);
\draw  [draw, fill=gray!20](2,4) rectangle (3,5);

\draw  [draw, fill=gray!20](4,4) rectangle (5,5);
\draw  [draw, fill=gray!20](4,6) rectangle (5,7);

\draw  [draw, fill=gray!20](6,2) rectangle (7,3);
\draw  [draw, fill=gray!20](6,4) rectangle (7,5);
\draw  [draw, fill=gray!20](6,6) rectangle (7,7);
\draw  [draw, fill=gray!20](6,8) rectangle (7,9);

\draw  [draw, fill=gray!20](8,4) rectangle (9,5);
\draw  [draw, fill=gray!20](8,6) rectangle (9,7);
\draw  [draw, fill=gray!20](8,8) rectangle (9,9);

\draw  [draw, fill=gray!20](10,4) rectangle (11,5);
\draw  [draw, fill=gray!20](10,6) rectangle (11,7);

\draw  [draw, fill=gray!90](1,3) rectangle (2,4);
\draw  [draw, fill=gray!90](1,5) rectangle (2,6);

\draw  [draw, fill=gray!90](3,3) rectangle (4,4);
\draw  [draw, fill=gray!90](3,5) rectangle (4,6);
\draw  [draw, fill=gray!90](3,7) rectangle (4,8);

\draw  [draw, fill=gray!90](5,1) rectangle (6,2);
\draw  [draw, fill=gray!90](5,3) rectangle (6,4);
\draw  [draw, fill=gray!90](5,5) rectangle (6,6);
\draw  [draw, fill=gray!90](5,7) rectangle (6,8);
\draw  [draw, fill=gray!90](5,9) rectangle (6,10);

\draw  [draw, fill=gray!90](7,1) rectangle (8,2);
\draw  [draw, fill=gray!90](7,3) rectangle (8,4);
\draw  [draw, fill=gray!90](7,5) rectangle (8,6);
\draw  [draw, fill=gray!90](7,7) rectangle (8,8);
\draw  [draw, fill=gray!90](7,9) rectangle (8,10);

\draw  [draw, fill=gray!90](9,3) rectangle (10,4);
\draw  [draw, fill=gray!90](9,5) rectangle (10,6);
\draw  [draw, fill=gray!90](9,7) rectangle (10,8);
\draw  [draw, fill=gray!90](9,9) rectangle (10,10);

\draw  [draw, fill=gray!90](11,3) rectangle (12,4);
\draw  [draw, fill=gray!90](11,5) rectangle (12,6);
\draw  [draw, fill=gray!90](11,7) rectangle (12,8);

%\draw (7,12) node[]{$\bf{\mathrm {Temperleyan}}$\,\,$\bf{\mathrm {domain}}$};

\draw[draw, line width=1pt,black!90]  
(5,2)--(6,2)--(6,1)--(8,1)--(8,3)--(12,3)--(12,8)--(10,8)--(10,10)--(5,10)--(5,8)--(3,8)--(3,6)--(1,6)--(1,3)--(5,3)--cycle;
\draw (6,0.65) node[font=\tiny]{$\bf{0}$};
\draw (7,0.65) node[font=\tiny]{\bf{-1}};
\draw (8,0.65) node[font=\tiny]{$\bf{0}$};
\draw (8.3,1.8) node[font=\tiny]{$\bf{1}$};
\draw (8.3,2.7) node[font=\tiny]{$\bf{0}$};
\draw (9,2.65) node[font=\tiny]{\bf{-1}};
\draw (10,2.65) node[font=\tiny]{$\bf{0}$};
\draw (11,2.65) node[font=\tiny]{\bf{-1}};
\draw (12.3,2.65) node[font=\tiny]{$\bf{0}$};

\draw (12.3,4) node[font=\tiny]{$\bf{1}$};
\draw (12.3,5) node[font=\tiny]{$\bf{0}$};
\draw (12.3,6) node[font=\tiny]{$\bf{1}$};
\draw (12.3,7) node[font=\tiny]{$\bf{0}$};
\draw (12.3,8.35) node[font=\tiny]{$\bf{1}$};

\draw (11.2,8.35) node[font=\tiny]{$\bf{2}$};
\draw (10.3,8.35) node[font=\tiny]{$\bf{1}$};

\draw (10.3,9.15) node[font=\tiny]{$\bf{0}$};
\draw (10.3,10.35) node[font=\tiny]{$\bf{1}$};

\draw (9,10.35) node[font=\tiny]{$\bf{2}$};
\draw (8,10.35) node[font=\tiny]{$\bf{1}$};
\draw (7,10.35) node[font=\tiny]{$\bf{2}$};
\draw (6,10.35) node[font=\tiny]{$\bf{1}$};
\draw (4.7,10.35) node[font=\tiny]{$\bf{2}$};

\draw (4.7,9.25) node[font=\tiny]{$\bf{3}$};
\draw (4.7,8.35) node[font=\tiny]{$\bf{2}$};

\draw (3.9,8.35) node[font=\tiny]{$\bf{1}$};
\draw (2.7,8.35) node[font=\tiny]{$\bf{2}$};

\draw (2.7,7.35) node[font=\tiny]{$\bf{3}$};
\draw (2.7,6.35) node[font=\tiny]{$\bf{2}$};

\draw (1.9,6.35) node[font=\tiny]{$\bf{1}$};
\draw (0.7,6.35) node[font=\tiny]{$\bf{2}$};

\draw (0.7,5) node[font=\tiny]{$\bf{3}$};
\draw (0.7,4) node[font=\tiny]{$\bf{2}$};
\draw (0.7,2.65) node[font=\tiny]{$\bf{3}$};

\draw (2,2.65) node[font=\tiny]{$\bf{4}$};
\draw (3,2.65) node[font=\tiny]{$\bf{3}$};
\draw (4,2.65) node[font=\tiny]{$\bf{4}$};
\draw (4.7,2.65) node[font=\tiny]{$\bf{3}$};

\draw (4.7,1.65) node[font=\tiny]{$\bf{2}$};
\draw (5.7,1.65) node[font=\tiny]{$\bf{1}$};
\end{tikzpicture}
\caption{{\bf Left:} a domino tiling of a domain; an edge-path $\gamma$ from $z_0$ to $z$ and the height along this path: $h^\delta(z_0)=0$, $h^\delta(z)=-4$. {\bf Center:} Weights of the Kasteleyn matrix~$K_\Omega$ on the square lattice (proposed by Kenyon in~\cite{Kdom}). {\bf Right:} A Temperleyan domain.  %The height function along each edge of the domain varies between two values. On the boundary of a Temperleyan domain this pair increases by one if the boundary turns to the left, and decreases by one if it turns to the right, since all corner squares are black. Therefore, 
The difference of the height function at two boundary vertices is related to the amount of the winding of the boundary (the number of left turns minus the number of right turns) between them. 
}\label{hf_cf}\label{temp}\end{center}
\end{figure}
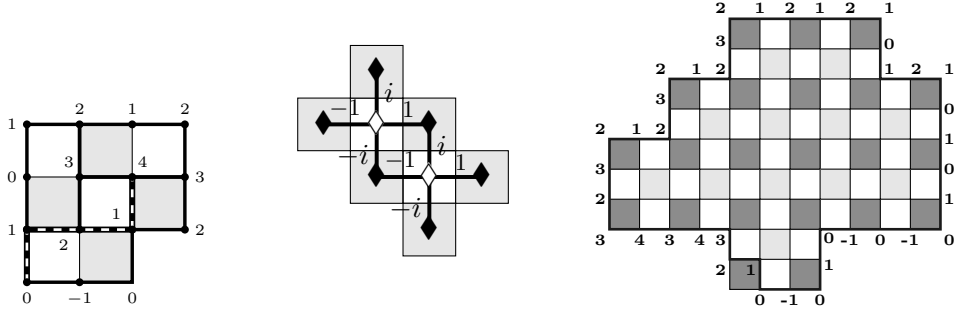

Thurston~\cite{Ter} introduced the height function of a domino tiling which assigns real values to all vertices as follows. 
Fix a vertex $z_0$ and set $h(z_0)=0$.
For every other vertex $z$ in the tiling, take an edge-path $\gamma$ from $z_0$ to $z$. The height along $\gamma$ changes by $\pm 1$ if the traversed edge does not cross a domino from the tiling or by $\mp 3$ otherwise (depending on the colour of the square on the left of the traversed edge), see Fig.~\ref{hf_cf}. 
Vice versa, a domino tiling can be reconstructed from the values of the height function. Thus, one can think of a random domino tiling as a random height function on the vertex set of the domain. One key question in the dimer model concerns the large-scale behavior of the expectation of Thurston's height function and of its fluctuations, see for instance~\cite{Kdom, BLR, BG, CKP}. 

In our paper, we use the classical approach of Kenyon based on the Kasteleyn theory of the dimer model on planar graphs. Kasteleyn~\cite{Kast} showed that the partition function of the dimer model can be evaluated as the determinant of a signed adjacency matrix $K_\Omega$, whose rows are indexed by the black vertices and columns are  indexed by the white vertices, the {\it Kasteleyn matrix}, see Fig~\ref{hf_cf}. The local statistics for the uniform measure on dimer configurations can be computed using the inverse Kasteleyn matrix, %~$K^{-1}_\Omega$, 
see~\cite{Klocstat}. The latter can be viewed as a function of two squares (one black $u\in\bb{}$, one white $v\in\ww{}$), called the {\it coupling function}~\cite{Kdom}. The main properties of the coupling function $C_\Omega\colon\clb\times\clw\to\mathbb{C}$ are the following: 
 \begin{enumerate}
\item[$\rhd$] if $v\in\ww{}$, then $C_\Omega(u,v)$ is a {discrete holomorphic} function of $u$ with a simple pole at $v$; 
\item[$\rhd$]if $u$ and $v$ are adjacent squares, then $|C_\Omega(u,v)|$ is equal to the probability that the domino  $[uv]$ is contained in a random domino tiling of~$\Omega$;
\item[$\rhd$]moreover, all the joint probabilities to see a collection of dominos $\{[u_kv_k]\}_{k=1}^n$ in a random domino tiling of~$\Omega$, can be expressed via $C_\Omega$ as $n\times n$ determinants.
\end{enumerate}
In other words, the study of the local statistics of random tilings can be reduced to the study of the convergence of discrete holomorphic functions~\cite{Kdom, rus}. We are interested in the scaling limit of the dimer model on the square lattice as the mesh size $\delta$ tends to zero.
Assuming that the functions $\frac{1}{\delta}C_{\Omega^\delta}(\cdot,v)$ are uniformly bounded away from $v$, it follows from
the Arzel\`a–Ascoli theorem that subsequential limits exist. To show that such a limit is unique, we study the
boundary conditions of the coupling function and show that they survive as the mesh size tends to zero and determine the limit uniquely. %Thus, we are interested in the boundary conditions of the coupling function.

The classical situation in which boundary conditions of the coupling function can be easily described is {\it Temperleyan} discretizations, see Fig.~\ref{temp}. These are discrete domains in which all corner squares have even coordinates. More precisely, consider a checkerboard tiling of a discrete plane with unit squares, and split the set of black squares into two sets $\bb{1}$ and $\bb{0}$ (dark grey and light grey squares on~Fig.~\ref{temp}).
A domain in which all corner squares are of type $\bb{1}$ is called almost Temperleyan domain. To obtain a  Temperleyan domain one removes one black square of type $\bb{1}$ adjacent to the boundary from an almost Temperleyan domain. As the coupling function is discrete holomorphic, its real part (living on $\bb{0}$) and imaginary part (living on $\bb{1}$) are discrete harmonic. 
 In the Temperleyan case, the real part of the coupling function has Dirichlet boundary conditions. 
%Temperley domains correspond to Dirichlet boundary conditions or Neumann boundary conditions for the discrete harmonic components of the coupling function, see Fig.~\ref{b_c}. 
Kenyon~\cite{Kdom, KGff} used this approach to prove the conformal invariance of the limiting distribution  of the height function in the case of Temperleyan discretizations.

Note that the height function on the boundary does not depend on a domino tiling, and is completely determined by the shape of the boundary. In a Temperleyan domain the boundary values of the height function are related to the winding of the boundary,
%along each edge of the domain the height varies between two values, this pair increase by one if the boundary turn to the left, and decrease by one if it turn to the right (since all corner squares are black), 
see Fig~\ref{temp}. 
Kenyon has shown~\cite{KGff}, that the fluctuations of the height function on Temperleyan discretizations of a planar domain converge in the scaling limit (as the mesh size tends to zero) to the Gaussian Free Field with Dirichlet boundary conditions. This result has been extended for piecewise Temperleyan discretizations in~\cite{rus}. The latter discrete domains correspond to mixed Dirichlet and Neumann boundary conditions for the real part of the coupling function, with prescribed number of changes between them.

\begin{figure}
\begin{center}
 \begin{tikzpicture}[x={(0.38cm,0cm)}, y={(0cm,0.38cm)}]
\begin{scope}
\draw  [draw, fill=gray!20](0,6) rectangle (1,7);
\draw  [draw, fill=gray!20](0,4) rectangle (1,5);

\draw  [draw, fill=gray!20](2,2) rectangle (3,3);
\draw  [draw, fill=gray!20](2,4) rectangle (3,5);
\draw  [draw, fill=gray!20](2,6) rectangle (3,7);
\draw  [draw, fill=gray!20](2,8) rectangle (3,9);

\draw  [draw, fill=gray!20](4,0) rectangle (5,1);
\draw  [draw, fill=gray!20](4,2) rectangle (5,3);
\draw  [draw, fill=gray!20](4,4) rectangle (5,5);
\draw  [draw, fill=gray!20](4,6) rectangle (5,7);
\draw  [draw, fill=gray!20](4,8) rectangle (5,9);
\draw  [draw, fill=gray!20](4,10) rectangle (5,11);

\draw  [draw, fill=gray!20](6,0) rectangle (7,1);
\draw  [draw, fill=gray!20](6,2) rectangle (7,3);
\draw  [draw, fill=gray!20](6,4) rectangle (7,5);
\draw  [draw, fill=gray!20](6,6) rectangle (7,7);
\draw  [draw, fill=gray!20](6,8) rectangle (7,9);
\draw  [draw, fill=gray!20](6,10) rectangle (7,11);

\draw  [draw, fill=gray!20](8,0) rectangle (9,1);
\draw  [draw, fill=gray!20](8,2) rectangle (9,3);
\draw  [draw, fill=gray!20](8,4) rectangle (9,5);
\draw  [draw, fill=gray!20](8,6) rectangle (9,7);
\draw  [draw, fill=gray!20](8,8) rectangle (9,9);

\draw  [draw, fill=gray!20](10,8) rectangle (11,9);
\draw  [draw, fill=gray!20](10,6) rectangle (11,7);
\draw  [draw, fill=gray!20](10,4) rectangle (11,5);
\draw  [draw, fill=gray!20](10,2) rectangle (11,3);
\draw  [draw, fill=gray!20](10,0) rectangle (11,1);

\draw  [draw, fill=gray!20](12,10) rectangle (13,11);
\draw  [draw, fill=gray!20](12,8) rectangle (13,9);
\draw  [draw, fill=gray!20](12,6) rectangle (13,7);
\draw  [draw, fill=gray!20](12,4) rectangle (13,5);
\draw  [draw, fill=gray!20](12,2) rectangle (13,3);
\draw  [draw, fill=gray!20](12,0) rectangle (13,1);

\draw  [draw, fill=gray!20](14,12) rectangle (15,13);
\draw  [draw, fill=gray!20](14,10) rectangle (15,11);
\draw  [draw, fill=gray!20](14,8) rectangle (15,9);
\draw  [draw, fill=gray!20](14,6) rectangle (15,7);
\draw  [draw, fill=gray!20](14,4) rectangle (15,5);
\draw  [draw, fill=gray!20](14,2) rectangle (15,3);
\draw  [draw, fill=gray!20](14,0) rectangle (15,1);

\draw  [draw, fill=gray!20](16,12) rectangle (17,13);
\draw  [draw, fill=gray!20](16,10) rectangle (17,11);
\draw  [draw, fill=gray!20](16,8) rectangle (17,9);
\draw  [draw, fill=gray!20](16,6) rectangle (17,7);
\draw  [draw, fill=gray!20](16,4) rectangle (17,5);
\draw  [draw, fill=gray!20](16,2) rectangle (17,3);

\draw  [draw, fill=gray!20](18,6) rectangle (19,7);
\draw  [draw, fill=gray!20](18,4) rectangle (19,5);

\draw  [draw, fill=gray!90](1,7) rectangle (2,8);
\draw  [draw, fill=gray!90](1,5) rectangle (2,6);

\draw  [draw, fill=gray!90](3,3) rectangle (4,4);
\draw  [draw, fill=gray!90](3,5) rectangle (4,6);
\draw  [draw, fill=gray!90](3,7) rectangle (4,8);
\draw  [draw, fill=gray!90](3,9) rectangle (4,10);

\draw  [draw, fill=gray!90](5,1) rectangle (6,2);
\draw  [draw, fill=gray!90](5,3) rectangle (6,4);
\draw  [draw, fill=gray!90](5,5) rectangle (6,6);
\draw  [draw, fill=gray!90](5,7) rectangle (6,8);
\draw  [draw, fill=gray!90](5,9) rectangle (6,10);
\draw  [draw, fill=gray!90](5,11) rectangle (6,12);

\draw  [draw, fill=gray!90](7,1) rectangle (8,2);
\draw  [draw, fill=gray!90](7,3) rectangle (8,4);
\draw  [draw, fill=gray!90](7,5) rectangle (8,6);
\draw  [draw, fill=gray!90](7,7) rectangle (8,8);
\draw  [draw, fill=gray!90](7,9) rectangle (8,10);
\draw  [draw, fill=gray!90](7,11) rectangle (8,12);

\draw  [draw, fill=gray!90](9,1) rectangle (10,2);
\draw  [draw, fill=gray!90](9,3) rectangle (10,4);
\draw  [draw, fill=gray!90](9,5) rectangle (10,6);
\draw  [draw, fill=gray!90](9,7) rectangle (10,8);
\draw  [draw, fill=gray!90](9,9) rectangle (10,10);

\draw  [draw, fill=gray!90](11,9) rectangle (12,10);
\draw  [draw, fill=gray!90](11,7) rectangle (12,8);
\draw  [draw, fill=gray!90](11,5) rectangle (12,6);
\draw  [draw, fill=gray!90](11,3) rectangle (12,4);
\draw  [draw, fill=gray!90](11,1) rectangle (12,2);

\draw  [draw, fill=gray!90](13,11) rectangle (14,12);
\draw  [draw, fill=gray!90](13,9) rectangle (14,10);
\draw  [draw, fill=gray!90](13,7) rectangle (14,8);
\draw  [draw, fill=gray!90](13,5) rectangle (14,6);
\draw  [draw, fill=gray!90](13,3) rectangle (14,4);
\draw  [draw, fill=gray!90](13,1) rectangle (14,2);

\draw  [draw, fill=gray!90](15,13) rectangle (16,14);
\draw  [draw, fill=gray!90](15,11) rectangle (16,12);
\draw  [draw, fill=gray!90](15,9) rectangle (16,10);
\draw  [draw, fill=gray!90](15,7) rectangle (16,8);
\draw  [draw, fill=gray!90](15,5) rectangle (16,6);
\draw  [draw, fill=gray!90](15,3) rectangle (16,4);
\draw  [draw, fill=gray!90](15,1) rectangle (16,2);

\draw  [draw, fill=gray!90](17,13) rectangle (18,14);
\draw  [draw, fill=gray!90](17,11) rectangle (18,12);
\draw  [draw, fill=gray!90](17,9) rectangle (18,10);
\draw  [draw, fill=gray!90](17,7) rectangle (18,8);
\draw  [draw, fill=gray!90](17,5) rectangle (18,6);
\draw  [draw, fill=gray!90](17,3) rectangle (18,4);

\draw  [draw, fill=gray!90](19,7) rectangle (20,8);
\draw  [draw, fill=gray!90](19,5) rectangle (20,6);

\draw[draw, line width=2pt] (0,4) -- (0,8) -- (2,8) -- (2,10) -- (4,10) -- (4,12) -- (8,12) -- (8,10) -- (10,10) --  (12,10) -- (12,12) -- (14,12) -- (14,14) -- (18,14) -- (18,8) -- (20,8) -- (20,4) -- (18,4) -- (18,2) -- (16,2) -- (16,0) -- (4,0)-- (4,2) -- (2,2) -- (2,4)
-- cycle;

\draw (-0.3,3.65) node[font=\tiny]{$\bf{0}$};

\draw (-0.35,5) node[font=\tiny]{\bf{-1}};
\draw (-0.3,6) node[font=\tiny]{$\bf{0}$};
\draw (-0.35,7) node[font=\tiny]{\bf{-1}};

\draw (-0.3,8.35) node[font=\tiny]{$\bf{0}$};

\draw (0.8,8.35) node[font=\tiny]{$\bf{1}$};
\draw (1.7,8.35) node[font=\tiny]{$\bf{0}$};

\draw (1.65,9.2) node[font=\tiny]{\bf{-1}};

\draw (1.7,10.35) node[font=\tiny]{$\bf{0}$};
\draw (2.8,10.35) node[font=\tiny]{$\bf{1}$};
\draw (3.7,10.35) node[font=\tiny]{$\bf{0}$};

\draw (3.65,11.2) node[font=\tiny]{\bf{-1}};

\draw (3.7,12.35) node[font=\tiny]{$\bf{0}$};
\draw (5,12.35) node[font=\tiny]{$\bf{1}$};
\draw (6,12.35) node[font=\tiny]{$\bf{0}$};
\draw (7,12.35) node[font=\tiny]{$\bf{1}$};
\draw (8.3,12.35) node[font=\tiny]{$\bf{0}$};

\draw (8.4,11.2) node[font=\tiny]{\bf{-1}};
\draw (8.3,10.35) node[font=\tiny]{$\bf{0}$};

\draw (9,10.35) node[font=\tiny]{$\bf{1}$};
\draw (10,10.35) node[font=\tiny]{$\bf{0}$};
\draw (11,10.35) node[font=\tiny]{$\bf{1}$};
\draw (11.7,10.35) node[font=\tiny]{$\bf{0}$};

\draw (11.65,11.2) node[font=\tiny]{\bf{-1}};
\draw (11.7,12.35) node[font=\tiny]{$\bf{0}$};

\draw (12.8,12.35) node[font=\tiny]{$\bf{1}$};
\draw (13.7,12.35) node[font=\tiny]{$\bf{0}$};
\draw (13.65,13.2) node[font=\tiny]{\bf{-1}};
\draw (13.7,14.35) node[font=\tiny]{$\bf{0}$};

\draw (15,14.35) node[font=\tiny]{$\bf{1}$};
\draw (16,14.35) node[font=\tiny]{$\bf{0}$};
\draw (17,14.35) node[font=\tiny]{$\bf{1}$};
\draw (18.3,14.35) node[font=\tiny]{$\bf{0}$};

\draw (18.4,13) node[font=\tiny]{\bf{-1}};
\draw (18.3,12) node[font=\tiny]{$\bf{0}$};
\draw (18.4,11) node[font=\tiny]{\bf{-1}};
\draw (18.3,10) node[font=\tiny]{$\bf{0}$};
\draw (18.4,9.2) node[font=\tiny]{\bf{-1}};
\draw (18.3,8.35) node[font=\tiny]{$\bf{0}$};

\draw (19.2,8.35) node[font=\tiny]{$\bf{1}$};
\draw (20.3,8.35) node[font=\tiny]{$\bf{0}$};

\draw (20.4,7) node[font=\tiny]{\bf{-1}};
\draw (20.3,6) node[font=\tiny]{$\bf{0}$};
\draw (20.4,5) node[font=\tiny]{\bf{-1}};
\draw (20.3,3.65) node[font=\tiny]{$\bf{0}$};

\draw (19.2,3.65) node[font=\tiny]{$\bf{1}$};
\draw (18.3,3.65) node[font=\tiny]{$\bf{0}$};

\draw (18.4,2.8) node[font=\tiny]{\bf{-1}};
\draw (18.3,1.65) node[font=\tiny]{$\bf{0}$};

\draw (17.2,1.65) node[font=\tiny]{$\bf{1}$};
\draw (16.3,1.65) node[font=\tiny]{$\bf{0}$};

\draw (16.4,0.8) node[font=\tiny]{\bf{-1}};
\draw (16.3,-0.35) node[font=\tiny]{$\bf{0}$};

\draw (15,-0.35) node[font=\tiny]{$\bf{1}$};
\draw (14,-0.35) node[font=\tiny]{$\bf{0}$};
\draw (13,-0.35) node[font=\tiny]{$\bf{1}$};
\draw (12,-0.35) node[font=\tiny]{$\bf{0}$};
\draw (11,-0.35) node[font=\tiny]{$\bf{1}$};
\draw (10,-0.35) node[font=\tiny]{$\bf{0}$};
\draw (9,-0.35) node[font=\tiny]{$\bf{1}$};
\draw (8,-0.35) node[font=\tiny]{$\bf{0}$};
\draw (7,-0.35) node[font=\tiny]{$\bf{1}$};
\draw (6,-0.35) node[font=\tiny]{$\bf{0}$};
\draw (5,-0.35) node[font=\tiny]{$\bf{1}$};
\draw (3.7,-0.35) node[font=\tiny]{$\bf{0}$};

\draw (3.65,0.75) node[font=\tiny]{\bf{-1}};
\draw (3.7,1.65) node[font=\tiny]{$\bf{0}$};

\draw (3,1.65) node[font=\tiny]{$\bf{1}$};
\draw (1.7,1.65) node[font=\tiny]{$\bf{0}$};

\draw (1.65,2.75) node[font=\tiny]{\bf{-1}};
\draw (1.7,3.65) node[font=\tiny]{$\bf{0}$};

\draw (1,3.65) node[font=\tiny]{$\bf{1}$};

\draw (7,14) node[]{$\bf{\mathrm{Even}}\,\, \bf{\mathrm{domain} }$};
\end{scope}

\begin{scope}[xshift=8.3cm]
\draw  [draw, fill=gray!20](0,6) rectangle (1,7);
\draw  [draw, fill=gray!20](0,4) rectangle (1,5);

\draw  [draw, fill=gray!20](2,2) rectangle (3,3);
\draw  [draw, fill=gray!20](2,4) rectangle (3,5);
\draw  [draw, fill=gray!20](2,6) rectangle (3,7);
\draw  [draw, fill=gray!20](2,8) rectangle (3,9);

\draw  [draw, fill=gray!20](4,0) rectangle (5,1);
\draw  [draw, fill=gray!20](4,2) rectangle (5,3);
\draw  [draw, fill=gray!20](4,4) rectangle (5,5);
\draw  [draw, fill=gray!20](4,6) rectangle (5,7);
\draw  [draw, fill=gray!20](4,8) rectangle (5,9);
\draw  [draw, fill=gray!20](4,10) rectangle (5,11);

\draw  [draw, fill=gray!20](6,0) rectangle (7,1);
\draw  [draw, fill=gray!20](6,2) rectangle (7,3);
\draw  [draw, fill=gray!20](6,4) rectangle (7,5);
\draw  [draw, fill=gray!20](6,6) rectangle (7,7);
\draw  [draw, fill=gray!20](6,8) rectangle (7,9);
\draw  [draw, fill=gray!20](6,10) rectangle (7,11);

\draw  [draw, fill=gray!20](8,0) rectangle (9,1);
\draw  [draw, fill=gray!20](8,2) rectangle (9,3);
\draw  [draw, fill=gray!20](8,4) rectangle (9,5);
\draw  [draw, fill=gray!20](8,6) rectangle (9,7);
\draw  [draw, fill=gray!20](8,8) rectangle (9,9);

\draw  [draw, fill=gray!20](10,8) rectangle (11,9);
\draw  [draw, fill=gray!20](10,6) rectangle (11,7);
\draw  [draw, fill=gray!20](10,4) rectangle (11,5);
\draw  [draw, fill=gray!20](10,2) rectangle (11,3);
\draw  [draw, fill=gray!20](10,0) rectangle (11,1);

\draw  [draw, fill=gray!20](12,10) rectangle (13,11);
\draw  [draw, fill=gray!20](12,8) rectangle (13,9);
\draw  [draw, fill=gray!20](12,6) rectangle (13,7);
\draw  [draw, fill=gray!20](12,4) rectangle (13,5);
\draw  [draw, fill=gray!20](12,2) rectangle (13,3);
\draw  [draw, fill=gray!20](12,0) rectangle (13,1);

\draw  [draw, fill=gray!20](14,12) rectangle (15,13);
\draw  [draw, fill=gray!20](14,10) rectangle (15,11);
\draw  [draw, fill=gray!20](14,8) rectangle (15,9);
\draw  [draw, fill=gray!20](14,6) rectangle (15,7);
\draw  [draw, fill=gray!20](14,4) rectangle (15,5);
\draw  [draw, fill=gray!20](14,2) rectangle (15,3);
\draw  [draw, fill=gray!20](14,0) rectangle (15,1);

\draw  [draw, fill=gray!20](16,12) rectangle (17,13);
\draw  [draw, fill=gray!20](16,10) rectangle (17,11);
\draw  [draw, fill=gray!20](16,8) rectangle (17,9);
\draw  [draw, fill=gray!20](16,6) rectangle (17,7);
\draw  [draw, fill=gray!20](16,4) rectangle (17,5);
\draw  [draw, fill=gray!20](16,2) rectangle (17,3);

\draw  [draw, fill=gray!20](18,6) rectangle (19,7);
\draw  [draw, fill=gray!20](18,4) rectangle (19,5);

\draw  [draw, fill=gray!90](1,7) rectangle (2,8);
\draw  [draw, fill=gray!90](1,5) rectangle (2,6);

\draw  [draw, fill=gray!90](3,3) rectangle (4,4);
\draw  [draw, fill=gray!90](3,5) rectangle (4,6);
\draw  [draw, fill=gray!90](3,7) rectangle (4,8);
\draw  [draw, fill=gray!90](3,9) rectangle (4,10);

\draw  [draw, fill=gray!90](5,1) rectangle (6,2);
\draw  [draw, fill=gray!90](5,3) rectangle (6,4);
\draw  [draw, fill=gray!90](5,5) rectangle (6,6);
\draw  [draw, fill=gray!90](5,7) rectangle (6,8);
\draw  [draw, fill=gray!90](5,9) rectangle (6,10);
\draw  [draw, fill=gray!90](5,11) rectangle (6,12);

\draw  [draw, fill=gray!90](7,1) rectangle (8,2);
\draw  [draw, fill=gray!90](7,3) rectangle (8,4);
\draw  [draw, fill=gray!90](7,5) rectangle (8,6);
\draw  [draw, fill=gray!90](7,7) rectangle (8,8);
\draw  [draw, fill=gray!90](7,9) rectangle (8,10);
\draw  [draw, fill=gray!90](7,11) rectangle (8,12);

\draw  [draw, fill=gray!90](9,1) rectangle (10,2);
\draw  [draw, fill=gray!90](9,3) rectangle (10,4);
\draw  [draw, fill=gray!90](9,5) rectangle (10,6);
\draw  [draw, fill=gray!90](9,7) rectangle (10,8);
\draw  [draw, fill=gray!90](9,9) rectangle (10,10);

\draw  [draw, fill=gray!90](11,9) rectangle (12,10);
\draw  [draw, fill=gray!90](11,7) rectangle (12,8);
\draw  [draw, fill=gray!90](11,5) rectangle (12,6);
\draw  [draw, fill=gray!90](11,3) rectangle (12,4);
\draw  [draw, fill=gray!90](11,1) rectangle (12,2);

\draw  [draw, fill=gray!90](13,11) rectangle (14,12);
\draw  [draw, fill=gray!90](13,9) rectangle (14,10);
\draw  [draw, fill=gray!90](13,7) rectangle (14,8);
\draw  [draw, fill=gray!90](13,5) rectangle (14,6);
\draw  [draw, fill=gray!90](13,3) rectangle (14,4);
\draw  [draw, fill=gray!90](13,1) rectangle (14,2);

\draw  [draw, fill=gray!90](15,13) rectangle (16,14);
\draw  [draw, fill=gray!90](15,11) rectangle (16,12);
\draw  [draw, fill=gray!90](15,9) rectangle (16,10);
\draw  [draw, fill=gray!90](15,7) rectangle (16,8);
\draw  [draw, fill=gray!90](15,5) rectangle (16,6);
\draw  [draw, fill=gray!90](15,3) rectangle (16,4);
\draw  [draw, fill=gray!90](15,1) rectangle (16,2);

\draw  [draw, fill=gray!90](17,13) rectangle (18,14);
\draw  [draw, fill=gray!90](17,11) rectangle (18,12);
\draw  [draw, fill=gray!90](17,9) rectangle (18,10);
\draw  [draw, fill=gray!90](17,7) rectangle (18,8);
\draw  [draw, fill=gray!90](17,5) rectangle (18,6);
\draw  [draw, fill=gray!90](17,3) rectangle (18,4);

\draw  [draw, fill=gray!90](19,7) rectangle (20,8);
\draw  [draw, fill=gray!90](19,5) rectangle (20,6);

\draw[draw, line width=1pt] (0,4) -- (0,8) -- (2,8) -- (2,10) -- (4,10) -- (4,12) -- (8,12) -- (8,10) -- (10,10) --  (12,10) -- (12,12) -- (14,12) -- (14,14) -- (18,14) -- (18,8) -- (20,8) -- (20,4) -- (18,4) -- (18,2) -- (16,2) -- (16,0) -- (4,0)-- (4,2) -- (2,2) -- (2,4)
-- cycle;

\draw (-0.3,3.65) node[font=\tiny]{$\bf{0}$};

\draw (-0.35,5) node[font=\tiny]{\bf{-1}};
\draw (-0.3,6) node[font=\tiny]{$\bf{0}$};
\draw (-0.35,7) node[font=\tiny]{\bf{-1}};

\draw (-0.3,8.35) node[font=\tiny]{$\bf{0}$};

\draw (0.8,8.35) node[font=\tiny]{$\bf{1}$};
\draw (1.7,8.35) node[font=\tiny]{$\bf{0}$};

\draw (1.65,9.2) node[font=\tiny]{\bf{-1}};

\draw (1.7,10.35) node[font=\tiny]{$\bf{0}$};
\draw (2.8,10.35) node[font=\tiny]{$\bf{1}$};
\draw (3.7,10.35) node[font=\tiny]{$\bf{0}$};

\draw (3.65,11.2) node[font=\tiny]{\bf{-1}};

\draw (3.7,12.35) node[font=\tiny]{$\bf{0}$};
\draw (5,12.35) node[font=\tiny]{$\bf{1}$};
\draw (6,12.35) node[font=\tiny]{$\bf{0}$};
\draw (7,12.35) node[font=\tiny]{$\bf{1}$};
\draw (8.3,12.35) node[font=\tiny]{$\bf{0}$};

\draw (8.4,11.2) node[font=\tiny]{\bf{-1}};
\draw (8.3,10.35) node[font=\tiny]{$\bf{0}$};

\draw (9,10.35) node[font=\tiny]{$\bf{1}$};
\draw (10,10.35) node[font=\tiny]{$\bf{0}$};
\draw (11,10.35) node[font=\tiny]{$\bf{1}$};
\draw (11.7,10.35) node[font=\tiny]{$\bf{0}$};

\draw (11.65,11.2) node[font=\tiny]{\bf{-1}};
\draw (11.7,12.35) node[font=\tiny]{$\bf{0}$};

\draw (12.8,12.35) node[font=\tiny]{$\bf{1}$};
\draw (13.7,12.35) node[font=\tiny]{$\bf{0}$};
\draw (13.65,13.2) node[font=\tiny]{\bf{-1}};
\draw (13.7,14.35) node[font=\tiny]{$\bf{0}$};

\draw (15,14.35) node[font=\tiny]{$\bf{1}$};
\draw (16,14.35) node[font=\tiny]{$\bf{0}$};
\draw (17,14.35) node[font=\tiny]{$\bf{1}$};
\draw (18.3,14.35) node[font=\tiny]{$\bf{0}$};

\draw (18.4,13) node[font=\tiny]{\bf{-1}};
\draw (18.3,12) node[font=\tiny]{$\bf{0}$};
\draw (18.4,11) node[font=\tiny]{\bf{-1}};
\draw (18.3,10) node[font=\tiny]{$\bf{0}$};
\draw (18.4,9.2) node[font=\tiny]{\bf{-1}};
\draw (18.3,8.35) node[font=\tiny]{$\bf{0}$};

\draw (19.2,8.35) node[font=\tiny]{$\bf{1}$};
\draw (20.3,8.35) node[font=\tiny]{$\bf{0}$};

\draw (20.4,7) node[font=\tiny]{\bf{-1}};
\draw (20.3,6) node[font=\tiny]{$\bf{0}$};
\draw (20.4,5) node[font=\tiny]{\bf{-1}};
\draw (20.3,3.65) node[font=\tiny]{$\bf{0}$};

\draw (19.2,3.65) node[font=\tiny]{$\bf{1}$};
\draw (18.3,3.65) node[font=\tiny]{$\bf{0}$};

\draw (18.4,2.8) node[font=\tiny]{\bf{-1}};
\draw (18.3,1.65) node[font=\tiny]{$\bf{0}$};

\draw (17.2,1.65) node[font=\tiny]{$\bf{1}$};
\draw (16.3,1.65) node[font=\tiny]{$\bf{0}$};

\draw (16.4,0.8) node[font=\tiny]{\bf{-1}};
\draw (16.3,-0.35) node[font=\tiny]{$\bf{0}$};

\draw (15,-0.35) node[font=\tiny]{$\bf{1}$};
\draw (14,-0.35) node[font=\tiny]{$\bf{0}$};
\draw (13,-0.35) node[font=\tiny]{$\bf{1}$};
\draw (12,-0.35) node[font=\tiny]{$\bf{0}$};
\draw (11,-0.35) node[font=\tiny]{$\bf{1}$};
\draw (10,-0.35) node[font=\tiny]{$\bf{0}$};
\draw (9,-0.35) node[font=\tiny]{$\bf{1}$};
\draw (8,-0.35) node[font=\tiny]{$\bf{0}$};
\draw (7,-0.35) node[font=\tiny]{$\bf{1}$};
\draw (6,-0.35) node[font=\tiny]{$\bf{0}$};
\draw (5,-0.35) node[font=\tiny]{$\bf{1}$};
\draw (3.7,-0.35) node[font=\tiny]{$\bf{0}$};

\draw (3.65,0.75) node[font=\tiny]{\bf{-1}};
\draw (3.7,1.65) node[font=\tiny]{$\bf{0}$};

\draw (3,1.65) node[font=\tiny]{$\bf{1}$};
\draw (1.7,1.65) node[font=\tiny]{$\bf{0}$};

\draw (1.65,2.75) node[font=\tiny]{\bf{-1}};
\draw (1.7,3.65) node[font=\tiny]{$\bf{0}$};

\draw (1,3.65) node[font=\tiny]{$\bf{1}$};

\draw (7,14) node[]{$\bf{\mathrm{Hedgehog}}\,\, \bf{\mathrm{domain} }$};

\draw[draw, line width=2pt] (0,4) -- (0,8) -- (2,8) -- (2,10) -- (4,10) -- (4,12) -- (8,12) -- (8,10) -- (10,10) --  (12,10) -- (12,12) -- (14,12) -- (14,14) -- (18,14) -- (18,8) -- (20,8) -- (20,4) -- (18,4) -- (18,2) -- (16,2) -- (16,0) -- (4,0)-- (4,2) -- (2,2) -- (2,4)
-- cycle;
\draw[draw, line width=2pt] (18,10)--(16,10);
\draw (17.2,10.35) node[font=\tiny]{\bf{1}};
\draw (16.2,10.35) node[font=\tiny]{$\bf{0}$};

\draw[draw, line width=2pt] (10,10)--(10,8);
\draw (9.65,9.35) node[font=\tiny]{\bf{-1}};
\draw (9.65,8.35) node[font=\tiny]{$\bf{0}$};

\draw[draw, line width=2pt] (8,0)--(8,2);
\draw (7.65,0.75) node[font=\tiny]{\bf{-1}};
\draw (7.65,1.75) node[font=\tiny]{$\bf{0}$};

\draw[draw, line width=2pt] (12,0)--(12,2);
\draw (11.65,0.75) node[font=\tiny]{\bf{-1}};
\draw (11.65,1.75) node[font=\tiny]{$\bf{0}$};

\end{scope}
\end{tikzpicture}
\end{center}
\caption{ %In a 
Even domain: the boundary height function is almost trivial, it varies between three values. Hedgehog domains form a subclass of even ones.}\label{b_h_f}
\end{figure}
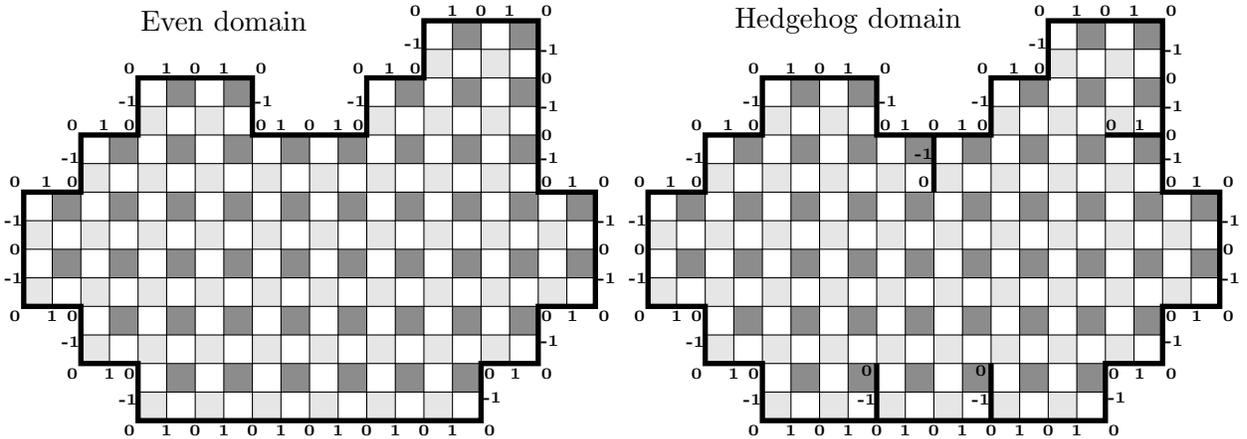

However, it seems that the most natural discretizations are given by {\it even} domains, domains with all edges of even length (e.g., see a discussion in~\cite[Section 8]{Kdom}). Such a domain obviously always has domino tilings. Furthermore, the boundary height function in this case is almost trivial, see Fig~\ref{b_h_f}. Unfortunately, in this case the boundary conditions for the coupling function are much less transparent, 
%We show that in a spacial type of discretizations with all edges of even length the coupling function satisfies Riemann-type boundary conditions. We will call this type of domains even domains, see a formal definition in Section~\ref{section even}.
so that the following question is still open.

 \medskip
 
\noindent{\bf Open problem (\cite{Kdom}).} {\it Prove that the fluctuations of the height function on even discretizations of a planar domain converge in the scaling limit (as the mesh size tends to zero) to the Gaussian Free Field with Dirichlet boundary conditions.}

 \medskip

In particular, Temperleyan discretizations are never even domains and, more generally, situations when an even domain can be treated as a piecewise Temperleyan one are very rare.

In this paper, we introduce a special subclass of even discretizations. We call this type of domains {\it hedgehog domains}, see Fig.~\ref{b_h_f}, a formal definition is given in Section~\ref{section even}. The class of hedgehog discretizations contains {\it two-step Aztec diamonds} (see~\cite[Fig. $4$]{CEP}), which is one of the Aztec diamond-type shapes considered in~\cite{CEP}. Compared to the class of (piecewise) Temperlyan discretizations, the class of hedgehog discretizations is of a very different nature.
%Hedgehog domains are of different nature with (piecewise) Temperlyan domains. 
In particular,
\[
\lbrace \rm{(piecewise)\,\, Temperlyan\,\, discretizations\rbrace \cap \lbrace Hedgehog\,\, discretizations}\rbrace =\varnothing.
\]
 \old{One can think that Temperlyan discretizations are discretizations without white corner squares. Then the class of piecewise Temperlyan discretizations is a subclass of discretizations with fixed finite number of white corner squares as the mesh size tends to zero. While in hedgehog discretizations the number of both white and black corner squares goes to infinity as the mesh size tends to zero. 
 As for even discretizations of a given domain:
 \[
\lbrace \rm{Temperlyan\,\, discretizations\rbrace \cap \lbrace Even\,\, discretizations}\rbrace =\varnothing,
\]
\[
\varnothing\neq(\rm{\lbrace Piecewise\,\,Temperlyan\,\, discr.\rbrace \cap \lbrace Even\,\, discr.\rbrace) \varsubsetneq \lbrace Piecewise\,\,Temperlyan\,\, discr.\rbrace },
\]
 \[
\lbrace \rm{Hedgehog\,\, discretizations\rbrace \subset \lbrace Even\,\, discretizations}\rbrace.
\]
Hedgehog discretizations are of different nature 
in terms of the boundary conditions %(see Fig.~\ref{T-pwT-H}) 
and the scaling limit of the coupling function as well.} 
The most conceptual difference between (piecewise) Temperlyan and hedgehog domains lies in boundary conditions satisfied by the coupling function. 
In the Temperleyan case the coupling function 
$\frac1\delta C_{\om^\delta}(\cdot,\www^\delta)$ satisfies 
$\mathrm {Re} [C_{\om^\delta}(\cdot,\www^\delta)]=0$ on the boundary, see~\cite{Kdom}. 
In the piecewise Temperleyn~\cite{rus} case one deals with mixed $\mathrm {Re} [C_{\om^\delta}(\cdot,\www^\delta)]=0 \,\, / \,\,\mathrm {Im} [C_{\om^\delta}(\cdot,\www^\delta)]=0$ boundary conditions
with the number of changes prescribed in advance. In sharp contrast, hedgehog domains give rise to Riemann-type boundary conditions of the coupling function, see Section~\ref{coupl}.
%Hedgehog domains correspond to more involved boundary conditions: In this case, the coupling function satisfies Riemann-type boundary conditions, see Section~\ref{coupl}. 
This is consistent with the fact that scaling limits of the coupling function in the Temperlyan~\cite{Kdom} and piecewise Temperlyan~\cite{rus} cases are different, though being different from each other they have the same conformal covariance~$(0,1)$, while the scaling limit of the coupling function in the case of hedgehog approximations has conformal covariance~$(\frac12, \frac12)$, see Theorem~\ref{convF} and Proposition~\ref{f}.
%The scaling limits of the coupling function in the Temperlyan~\cite{Kdom} and piecewise Temperlyan~\cite{rus} cases are different but they have the same conformal covariance~$(0,1)$. The scaling limit of the coupling function in the case of hedgehog approximations has conformal covariance~$(\frac12, \frac12)$, see Theorem~\ref{convF} and Proposition~\ref{f}. 

We prove the convergence of the fluctuations of the height function to the Gaussian Free Field in the case of hedgehog discretizations. Our result is based on the following convergence theorem for the dimer coupling function.

\old{In this paper, we introduce a special subclass of even discretizations and prove the convergence of the fluctuations of the height function to the Gaussian Free Field in this new case. We call this type of domains {\it hedgehog domains}, see Fig.~\ref{b_h_f}, a formal definition is given in Section~\ref{section even}. The class of hedgehog discretizations contains {\it two-step Aztec diamonds} (see~\cite[Fig. $4$]{CEP}), which is one of the Aztec diamond-type shapes considered in~\cite{CEP}.
%We show the convergence of the dimer coupling function in the case of approximations by {\it hedgehog domains} (see Fig.~\ref{s-h}), domains which correspond to Riemann-type boundary conditions for the coupling function.   Note that on a hedgehog domain the coupling function $C(u,\www)$ with fixed $\www$ coincides with a discrete holomorphic function $\F_{\operatorname{s-hol}}(u)$ described in Section~\ref{coupl}. 
Our result is based on the following convergence theorem for the dimer coupling function.}

\begin{Th}\label{main-convF}
Let $\om^\delta$ be a sequence of discrete hedgehog domains of mesh size $\delta$ approximating a simply connected domain $\om$.  Let $\www^{\delta}$ approximate a point $v\in\om$. Then, $\frac1\delta C_{\om^\delta}(\cdot,\www^\delta)$ converges uniformly on compact subsets of  $\om\smallsetminus\{v\}$ to a continuous holomorphic function satisfying Riemann-type boundary conditions 
\[
\im[f(z)\sqrt{n(z)}]=0,\quad z\in\dom,
\]
where $n(z)$ is the outer normal to the boundary at $z$.
\end{Th}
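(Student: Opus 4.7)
The plan follows the standard discrete complex analysis pipeline: (i) establish a priori bounds and precompactness for $\frac1\delta C_{\om^\delta}(\cdot, \www^\delta)$ on compact subsets of $\om\smallsetminus\{v\}$; (ii) extract subsequential limits via Arzel\`a--Ascoli, which are automatically holomorphic; (iii) identify them as solutions to a Riemann boundary value problem; and (iv) use uniqueness of that problem to upgrade subsequential convergence to full convergence.

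For (i)--(ii), I would first control the pole: near $\www^\delta$ the coupling function agrees, up to $O(\delta)$, with the full-plane inverse Kasteleyn, so $C_{\om^\delta}(u,\www^\delta) = O(\delta/|u-v|)$ on a fixed neighborhood of $v$. Away from $v$, the real and imaginary parts of $C_{\om^\delta}$ are discrete harmonic on complementary sublattices (being the real/imaginary parts of a discrete holomorphic function), so once a crude $O(1)$ upper bound on $|C_{\om^\delta}|$ along the boundary is obtained from its probabilistic interpretation as a domino probability, a discrete maximum principle yields equicontinuity on compact subsets of $\om\smallsetminus\{v\}$. Any subsequential limit $f$ is then holomorphic with a simple pole at $v$, by standard preservation of discrete holomorphicity.

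Step (iii) is the heart of the proof. Along the boundary of a hedgehog domain the black and white squares adjacent to $\dom^\delta$ occur in a very specific alternating pattern of spikes. The Kasteleyn equation at each boundary-adjacent black square, combined with the prescribed absence of the exterior white neighbor, forces a local real-linear relation on $C_{\om^\delta}(\cdot,\www^\delta)$. A direct case check using Kenyon's weights in Figure~\ref{hf_cf} (center) should show that, at each boundary point $z^\delta$ with local outer normal $n(z^\delta)$, these relations collapse to $\im[C_{\om^\delta}(\cdot,\www^\delta)\cdot\sqrt{n(z^\delta)}]=0$, up to $o(1)$ errors. Passing to the limit gives the Riemann condition $\im[f(z)\sqrt{n(z)}]=0$ on $\dom$. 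For (iv), uniqueness is a standard Riemann--Hilbert fact: conformally mapping $\om$ to the unit disk and absorbing the $(\tfrac12,\tfrac12)$ covariance factor reduces the boundary condition to $\im f =0$ on $\mathbb{S}^1$, and a meromorphic function on the disk with a single prescribed simple pole and real boundary values is determined by its residue; hence any subsequential limit coincides with the explicit function constructed in Proposition~\ref{f}.

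The principal obstacle I anticipate is step (iii): the outer normal $n(z^\delta)$ of a hedgehog domain oscillates at scale $\delta$ and does \emph{not} converge pointwise to the macroscopic normal $n(z)$. One therefore cannot simply take a pointwise limit of the boundary relation. Instead, one must show that the discrete boundary relations, which are pinned to the microscopic normal, average out along each macroscopic boundary arc to produce the Riemann condition in the scaling limit; this is analogous to the averaging that underlies s-holomorphicity arguments for fermionic observables in the Ising model, consistent with the remark in the abstract that the hedgehog coupling function satisfies the same Riemann boundary condition as the Ising fermion.
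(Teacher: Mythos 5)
There is a genuine gap, and it sits exactly where you flagged it: step (iii). For the raw coupling function the boundary Kasteleyn relations do \emph{not} collapse to $\im[C_{\om^\delta}(\cdot,\www^\delta)\sqrt{n(z^\delta)}]=0$ at every discrete boundary point. What they give (see Proposition~\ref{half_dirichlet}) is a relation of the form $i\,\check{\F}_{\operatorname{s-hol}}^\delta(\uI)=\pm\check{\F}_{\operatorname{s-hol}}^\delta(\uR)$ coupling two \emph{different} black squares, which translates into the Riemann condition only at half of the boundary vertices (those of the dashed lattice); at the other half the condition genuinely fails — the paper computes $\check{\HH}^\delta(\hat z)-\check{\HH}^\delta(\check z')\neq 0$ explicitly. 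Your proposed remedy, ``averaging along macroscopic arcs,'' is not carried out and is not what is needed: the paper instead exploits the one remaining real degree of freedom at each vertex of $\partial\hat{\mathcal{V}}^\delta_\circ$ to \emph{modify} the s-holomorphic extension on a single boundary layer of squares (Proposition~\ref{second_half_dirichlet}), obtaining an exact discrete Riemann boundary condition at every boundary vertex while leaving the function equal to $\frac1\delta C_{\om^\delta}(\cdot,\www^\delta)$ on $\intbb{}^\delta$. The boundary condition is then transported to the limit not through the oscillating discrete normal at all, but through Smirnov's primitive $\HH^\delta=\int\re[(F^\delta_{\operatorname{s-hol}})^2dz]$: Dirichlet boundary values, sub/superharmonicity for the (modified) leap-frog Laplacians, sign of the normal derivative, and the characterisation (h1)--(h3) of the continuous solution. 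This also disposes of your worry about rough boundaries and non-convergent normals, which your ``pass to the limit in the boundary relation'' scheme cannot handle.

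Step (i) is also not sound as written. The probabilistic interpretation bounds $|C_{\om^\delta}(u,v)|$ only for \emph{adjacent} pairs, so it gives no $O(1)$ control of $C_{\om^\delta}(u,\www^\delta)$ for $u$ near $\dom^\delta$ and $\www^\delta$ fixed in the bulk; moreover discrete harmonicity of $\re C$ and $\im C$ breaks at the layer of squares adjacent to the boundary, so the maximum principle does not convert a boundary bound into interior equicontinuity in the way you describe. Uniform boundedness near a non-Temperleyan boundary is precisely the hard analytic point, and the paper does not assume it: in the proof of Theorem~\ref{convF} precompactness is extracted from boundedness of $\HH^\delta$, which is itself established by the normalisation/contradiction trichotomy (renormalising by $M^\delta(r)$, using the maximum principle for $\HH^\delta$ and the full-plane observable $\F^\delta_{\mathbb{C},\www^\delta}$ via $H^\delta_\star$ to rule out both blow-up and collapse). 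Your step (iv) (uniqueness via the conformal map and the covariance $(\frac12,\frac12)$, matching Proposition~\ref{f}) and your idea of subtracting the full-plane inverse Kasteleyn near the pole are consistent with the paper, but without a correct replacement for (i) and (iii) the argument does not go through.
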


For a more precise statement, see Theorem~\ref{convF}. In particular, note that we do {\it not} assume that the boundary $\dom$ is smooth. %(though require that it contains a small straight segment for technical purposes). 
The main advantage of hedgehog-type discretizations is that the primitive of the square of 
s-holomorphic version of the coupling function is constant at (a half of) boundary vertices, see Proposition~\ref{half_dirichlet}. 
This allows to obtain, after slight modification, Riemann boundary conditions for the coupling function on a discrete level, see Proposition~\ref{second_half_dirichlet}, and to use discrete complex analysis techniques originally developed in the Ising model context~\cite{Stas07}.

Due to~\cite{KGff} the dimer height function converges to the Gaussian Free Field in the setup of Theorem~\ref{main-convF}.

 \begin{cor}\label{main-cor2}
 Let $\om$ be a bounded simply connected domain in $\mathbb{C}$, whose boundary $\dom$ contains a straight segment. Let $\om^\delta$ be a hedgehog domain approximating $\om$. Let $h^\delta$ be the height function of~$\om^\delta$. Then $h^\delta - \mathbb{E}{h^\delta}$ converges weakly in distribution to the Gaussian Free Field on $\om$ with Dirichlet boundary conditions, as $\delta$ tends to $0$.
 \end{cor}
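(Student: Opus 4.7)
\bigskip

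\noindent\textbf{Proof proposal for Corollary~\ref{main-cor2}.}
The plan is to reduce the convergence of $h^\delta - \mathbb{E} h^\delta$ to the Dirichlet Gaussian Free Field to Kenyon's moment framework from~\cite{KGff}, using Theorem~\ref{main-convF} as the single analytic input. Fix a reference vertex $z_0$ lying on the straight segment of $\partial\om$; by the hedgehog structure and the triviality of the boundary height function on even domains (see Fig.~\ref{b_h_f}), $h^\delta(z_0)$ is deterministic, so that it is legitimate to compare $h^\delta - \mathbb{E} h^\delta$ to a zero-boundary field on $\om$.

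\medskip

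\emph{Step 1 (moments via the coupling function).} For each test point $v$ and each edge-path $\gamma^\delta$ from $z_0^\delta$ to $v^\delta$, I would write
\[
h^\delta(v^\delta) - h^\delta(z_0^\delta) \;=\; \sum_{e \in \gamma^\delta} a_e \bigl(\mathbf{1}_{e \text{ is a domino}} - p_e^\delta\bigr) + \text{(deterministic)},
\]
with explicit $\pm$-coefficients $a_e$ coming from Thurston's rule. Applying Kasteleyn's determinantal formula, the centered $k$-point moment
\[
\mathbb{E}\Bigl[\,\prod_{j=1}^k \bigl(h^\delta(v_j^\delta) - \mathbb{E} h^\delta(v_j^\delta)\bigr)\Bigr]
\]
expands into a sum over pairings of $k$-fold sums along the $k$ paths, each summand being a product of entries $C_{\om^\delta}(u_i, v_{\sigma(i)})$ of the inverse Kasteleyn matrix.

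\medskip

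\emph{Step 2 (passage to the limit and GFF identification).} By Theorem~\ref{main-convF}, $\tfrac{1}{\delta} C_{\om^\delta}(\cdot, v^\delta) \to f(\cdot, v)$ uniformly on compacts of $\om \setminus \{v\}$, where $f(\cdot,v)$ is holomorphic with a simple pole at $v$ and satisfies $\im[f(z,v)\sqrt{n(z)}]=0$ on $\partial\om$. Discrete holomorphicity of the coupling function allows one to telescope the sums along $\gamma^\delta$ into line integrals of $f$ in the scaling limit. The Riemann-type boundary condition, together with the pole prescription, determines $f(\cdot,v)$ uniquely via the Schwarz reflection principle; a direct computation then identifies the resulting limiting two-point function with the Dirichlet Green's function $G_\om$ of the Laplacian on $\om$, up to a universal multiplicative constant. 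Wick's factorization for the Gaussian Free Field emerges automatically from the Leibniz expansion of the Kasteleyn determinants, exactly as in~\cite{KGff}.

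\medskip

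\emph{Step 3 (tightness, and where the real work lies).} To upgrade pointwise convergence of moments of smooth linear functionals to weak convergence of random distributions, I would derive tightness from uniform moment bounds, which in turn reduce to uniform estimates on $\tfrac{1}{\delta} C_{\om^\delta}$ away from the diagonal. The main obstacle is the near-boundary control of the coupling function: since the Riemann-type condition of Theorem~\ref{main-convF} is not the classical Dirichlet condition used in~\cite{KGff} and~\cite{rus}, standard discrete potential theory does not apply directly. Instead, I would invoke the s-holomorphic machinery developed in the Ising context (cf.\ Section~\ref{coupl} and~\cite{Stas07}), using the key observation behind Theorem~\ref{main-convF} that the primitive of the squared s-holomorphic coupling function is constant on half of the boundary vertices, which yields the requisite uniform boundary regularity. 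Once that regularity is in hand, the integration along $\gamma^\delta$ up to non-smooth portions of $\partial\om$ reproduces the expected continuum integrals against $\nabla G_\om$, completing the identification of the limit with the Dirichlet GFF.
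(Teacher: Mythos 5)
Your proposal is correct and follows essentially the same route as the paper: Kenyon's moment framework of~\cite{KGff,Kdom}, fed by the convergence of the coupling function (Theorem~\ref{convF}, plus the reflection argument of Proposition~\ref{cnb} giving convergence up to the straight boundary segment where the paths start), and concluded by identifying the limiting moments with GFF moments and applying the method of moments. The only remark worth making is that the paper's genuinely new content sits exactly in the step you compress into ``a direct computation'' — namely that in the hedgehog case $f_{+}(z,w)=\frac{1}{\pi(z-w)}$ while $f_{-}(z,w)=\frac{i}{\pi(z-\overline{w})}$, so the determinants in the moment formula coincide with the Temperleyan ones despite the Riemann-type boundary conditions (Lemma~\ref{eqlimits}) — and that the paper stops at moment convergence of finite-dimensional marginals via Billingsley's lemma rather than proving the distributional tightness you outline in Step 3.
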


The existence of a straight part of the boundary is a technical condition, which we use in the proof. However, we believe that it can be relaxed.

In addition, we generalise the result of~\cite{rus}, on convergence of the mean values of the double-dimer height functions for hedgehog domains.  A double-dimer configuration is the union of two dimer coverings, we will consider coverings of a pair of domains $(\om^\delta$, $\widehat{\om}^\delta)$ that differ by two squares. Namely, let $\widehat{\om}^\delta$ be obtained from $\om^\delta$ by removing black and white squares $u_0$ and $v_0$, adjacent to the boundary. We define the {\it double-dimer coupling function} on $\om^\delta=\om^\delta\cup\widehat{\om}^\delta$ as the difference of the two dimer coupling functions on domains $\om^\delta$ and $\widehat{\om}^\delta$
\[C_{\operatorname{dbl-d}, \om^\delta}(u, v) := C_{\om^\delta}(u, v) - C_{\widehat{\om}^\delta}(u, v).\]
Similarly, the height function $h_{\operatorname{dbl-d}, \om^\delta}$ in the double-dimer model is the difference of the two height functions corresponding to two independent uniform dimer coverings of the domains $\om^\delta$ and $\widehat{\om}^\delta$.

In~\cite{rus} it was shown that in the double-dimer model the coupling function $C_{\operatorname{dbl-d}}(u,v)$ has a factorization into a product of two discrete holomorphic functions $\F(u)$ and $\G(v)$, and therefore for any discrete domain the expectation of the height function of the double-dimer model can be interpreted as a primitive of $\F(u)\G(v)$.
In the case of hedgehog approximations the functions $F$ and $G$ solve the discrete Riemann boundary value problems described in Section~\ref{RH}. This allows us to prove the following theorem.

\begin{Th}\label{double} 
Let $\om$ be a bounded simply connected domain. Let $\bbb$ and $\www$ be points on straight parts of the boundary of $\om$.  Suppose that a sequence of discrete hedgehog domains $\om^\delta$ on a grid with mesh size $\delta$ approximates the domain $\om$. %in a proper way.
 Let black and white squares $\bbb^\delta$ and $\www^\delta$ of the domain $\om^\delta$ tend to boundary points $\bbb$ and $\www$ of the domain $\om$. Finally, let $h^\delta$ be the height function of a uniform double-dimer configuration on $\om$. Then $\mathbb{E}h^\delta$ converges to the harmonic measure $\operatorname{hm}_{\om}(\,\cdot\,, (\bbb\www))$ of the boundary arc~$(\bbb\www)$ on the domain $\om$.
 \end{Th}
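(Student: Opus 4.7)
The plan is to adapt the strategy of~\cite{rus} to the hedgehog setting, combining the factorization of the double-dimer coupling function with the s-holomorphic convergence machinery established for Theorem~\ref{main-convF}. The starting point is the fact from~\cite{rus} that
\[
C_{\operatorname{dbl-d},\om^\delta}(u,v) \;=\; \F^\delta(u)\,\G^\delta(v),
\]
where $\F^\delta$ is a discrete holomorphic function on black squares with a pole-type singularity associated to the boundary square $\bbb^\delta$, and $\G^\delta$ is the analogous white-square function with a singularity near $\www^\delta$. In the hedgehog setup, by Section~\ref{RH}, both factors satisfy discrete Riemann-type boundary conditions of the same form as the coupling function in Theorem~\ref{main-convF}, with additional data encoding the source points $\bbb^\delta$ and $\www^\delta$.

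First, I would upgrade this discrete picture to a scaling limit. Applying the machinery of Theorem~\ref{convF}, together with the half-Dirichlet property for the primitive of the square of the s-holomorphic modification (Propositions~\ref{half_dirichlet} and~\ref{second_half_dirichlet}), one should obtain $\delta^{-1/2}\F^\delta \to f$ and $\delta^{-1/2}\G^\delta \to g$ uniformly on compact subsets of $\om$ away from $\bbb$ and $\www$ respectively. The limits are continuous holomorphic functions satisfying $\im[f(z)\sqrt{n(z)}]=\im[g(z)\sqrt{n(z)}]=0$ on $\dom$ (with appropriate branch conventions along the boundary) together with prescribed square-root singularities at their respective source points. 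The straight-segment hypothesis near $\bbb$ and $\www$ ensures that the local discrete normalization matches the continuous one.

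Second, the product $f\cdot g$ is determined by its boundary values and singularities, and can be identified via a conformal uniformization $\phi:\om\to\mathbb{H}$ sending $\bbb\mapsto 0$ and $\www\mapsto\infty$: one obtains, up to a multiplicative constant fixed by the normalization of $\F^\delta$ and $\G^\delta$,
\[
f(z)\,g(z) \;=\; \frac{1}{\pi}\cdot\frac{\phi'(z)}{\phi(z)},
\]
whose $\im$-primitive $\tfrac{1}{\pi}\im\log\phi(z)$ equals $\operatorname{hm}_\om(z,(\bbb\www))$.

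Finally, by the interpretation of $\mathbb{E}h^\delta$ as a discrete primitive of $\F^\delta\G^\delta$ established in~\cite{rus}, the uniform convergence of $\delta^{-1}\F^\delta\G^\delta$ on compacta, together with an a priori equicontinuity argument up to the straight parts of $\dom$, yields $\mathbb{E}h^\delta\to\operatorname{hm}_\om(\,\cdot\,,(\bbb\www))$. I expect the hardest step to be this last one: controlling the discrete contour integration in a neighborhood of $\bbb^\delta$ and $\www^\delta$, where the integrand blows up. The straight-segment hypothesis is used here to route the integration contour along flat portions of $\dom$ and to match the endpoint normalization with the known boundary values $0$ and $1$ of the harmonic measure on the two complementary arcs.
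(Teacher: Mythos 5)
Your proposal follows essentially the same route as the paper: the factorization $C_{\operatorname{dbl-d}}=\mathrm{const}\cdot F^\delta(u)G^\delta(v)$ from~\cite{rus}, convergence of each factor to the solution of the Riemann boundary value problem with a boundary source point (which the paper handles via Lemma~\ref{fboundary}, the discrete Schwarz reflection and Proposition~\ref{cnb}, relying on the straight-segment hypothesis exactly as you indicate), and identification of the primitive of $\re[f_\om g_\om\,dz]$ with the harmonic measure of the arc $(\bbb\www)$ through a conformal uniformization (the paper maps to the disk, you to the half-plane — an equivalent computation, as is your choice of normalization constants, which in both cases are pinned down by the known boundary values of the harmonic measure). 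The minor bookkeeping differences (the power of $\delta$ in the normalization of $F^\delta$, $G^\delta$ and the $\re$ versus $\im$ of the primitive, which differ only by the phase factor $i\lambda$ appearing in the paper's computation) do not affect the argument.
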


 \bigskip
 
\noindent{\bf Organization of the paper.} The paper is organised as follows. The notation and basic definitions are given in Section~\ref{not}. Section~\ref{4}  is devoted to the Riemann-type boundary value problem. In Section~\ref{3}, we show that the coupling function satisfies Riemann-type boundary conditions on hedgehog domains, and we prove Theorem~\ref{main-convF}.
 Section~\ref{5} contains the proof of Corollary~\ref{main-cor2}. Finally, in Section~\ref{6} we prove Theorem~\ref{double}. 
\bigskip
\medskip

%%%%%%%%%%%%%%%%%%%%%%%%%%%%%%%%%%%%%%%%%%%%%%%%%%%%%%

%\subsection{Ising model}

%\[F^a_\om(z) := \frac{e^{-i\frac\pi8}}{Z\cdot \cos \frac\pi8}\sum_{\operatorname{conf:} a\to z \operatorname{+ loops}}(\sqrt2-1)^{\# \operatorname{edges}}\cdot e^{-\frac i2 \operatorname{wind}(a\to z)}\]

%The function $F^a_\Omega$ is s-holomorphic on $\om\smallsetminus\{a\}$. Moreover, $F^a_\Omega(z) || \frac{1}{\sqrt{n(z)}}, \quad z\in\dom.$

%Near singularity: 
%\vspace{0.5em}

%$\begin{cases}
%[\Delta_{Lf}F](y)=-1,\\
%[\Delta_{Lf}F](x)=1-\tan\frac\pi8,\\
%[\Delta_{Lf}F](z)=\tan\frac\pi8.
%\end{cases}$

%Let $\xi=e^{i\frac\pi4}-\tan\frac\pi8 e^{i\frac{3\pi}{4}}$, then

%\begin{Th}[?]
%$\frac1\delta F^{a^\delta}_{\om^\delta}(z) \to f^a_\om(z),$ where
%\[
%\begin{cases}
%f^a_\Omega(z) || \frac{1}{\sqrt{n(z)}}, \quad z\in\dom,\\
%f^a_\Omega\, is \, holomorphic \, on \, \om\smallsetminus\{a\},\\
%f^a_\Omega(z)=\frac{1}{2\pi}\cdot\frac{\xi}{z-a}+\operatorname{const}+O(z-a).
%\end{cases}
%\]
%\end{Th}

%\begin{remark}
%$\operatorname{const}=\phi'(a)\cdot\bar{\xi}$
%\end{remark}

%\newpage

%\addtocontents{toc}{\SkipTocEntry}
%\subsection*
\noindent{\bf Acknowledgements.} 
The author thanks Dmitry Chelkak for many helpful discussions, and Stanislav Smirnov for valuable insights and support. The author also thanks the anonymous referee for helpful comments and suggestions. Research of Section~\ref{6} was supported by the Russian Science Foundation grant 19-71-30002. %Russian Science Foundation grant 14-21-00035. The author also received partial support from the NCCR SwissMAP of the SNSF and ERC AG COMPASP. 
Part of the work was done while the author was at the University of Geneva, supported by the NCCR SwissMAP of the SNSF and ERC AG COMPASP.  The author also received partial support from the Swiss NSF grant P2GEP2\_184555.% and  from ``Native towns'', a social investment program of PJSC ``Gazprom Neft''.
\medskip

\section{Notation}\label{not}
We will use the same notations as in~\cite{rus}. Put $\lambda = e^{i\frac{\pi}{4}}$ and $\bar{\lambda} = e^{-i\frac{\pi}{4}}$.
Consider a checkerboard tiling $\mathbb{C}^\delta$ of $\mathbb{C}$ with  
squares, each square has side $\delta$ and centered at a lattice point of $$\left\{\left(\frac{\delta n}{\sqrt{2}},\frac{\delta m}{\sqrt{2}}\right)\, | \,n, m \in \mathbb{Z}; n+m \in 2\mathbb{Z}\right\}$$
(see~Fig.~\ref{rotation}). The pair $(n,m)$ is called the coordinates of a point on this lattice. 
%%%
A discrete bounded simply connected domain $\om^\delta$ is defined as a non-empty bounded connected component of the complement of 
a connected union of edges 
%any closed possibly self-tangent path 
of the lattice~$\mathbb{C}^\delta$. The boundary of such a domain is the boundary of corresponding open set. 
%Let $\om^\delta$ be a simply connected discrete domain formed by a finite number of squares of~$\mathbb{C}^\delta$,
 %be a simply connected discrete domain composed of a finite number of squares of 
 %bounded by a disjoint simple closed lattice path.   
Note that $\om^\delta$ is allowed to have slits on the boundary, see~Fig.~\ref{rotation}. Let us call the edges on the slits of the boundary {\it slit boundary edges}.
%%%
Let $\V{}^\delta$ be the vertex set of $\om^\delta$. 
We will denote the set of black squares by $\bb{}^\delta$ and the set of white squares of $\om^\delta$ by $\ww{}^\delta$. Thus, $\om^\delta = \bb{}^\delta \sqcup \ww{}^\delta$. Let the coordinates of a square be the coordinates of its center. Then we can define the sets $\bb{0}^\delta$ and $\bb{1}^\delta$ of black squares of $\om^\delta$ and the sets $\ww{0}^\delta$ and $\ww{1}^\delta$  of white squares by the following properties:
\begin{enumerate}
\item[($\bb{0}^\delta$)] both coordinates are even and the sum of coordinates is divisible by~$4$;
\item[($\bb{1}^\delta$)] both coordinates are even and the sum of coordinates is not divisible by~$4$;
\item[($\ww{0}^\delta$)] both coordinates are odd and the sum of coordinates is not divisible by~$4$;
\item[($\ww{1}^\delta$)] both coordinates are odd and the sum of coordinates is divisible by~$4$.
\end{enumerate}
Let us divide the set vertex $\mathcal{V}^\delta$ into three sets $\mathcal{V}_\circ^\delta$ , $\mathcal{V}_\bullet^\delta$  and $\mathcal{V}_\diamond^\delta$ as is shown on Fig.~\ref{rotation}. 

Define $\dV^\delta$ to be the set of vertices on the boundary. 
%%%
Let $\partial_{\operatorname{out}}\om^\delta$ be the set of squares adjacent to~$\om^\delta$ but not in $\om^\delta$. Let us glue to the slits the copy of each square of $\om^\delta$ adjacent to a slit boundary edge. Denote this set of glued squares by $\partial_{\operatorname{slit}}\om^\delta$.
%Let $\partial_{\operatorname{slit}}\om^\delta$ be the set of copies of squares of $\om^\delta$ adjacent to %boundary edges on the slits of the boundary glued to the slits. 
%slit boundary edges.
Let $\dom^\delta=\partial_{\operatorname{out}}\om^\delta \sqcup \partial_{\operatorname{slit}}\om^\delta$.
In other words, $\dom^\delta$ is an abstract set of squares which are across the boundary from the squares of $\om^\delta$.
%%%
Let $\db ^\delta$ and $\dw^\delta$ be the sets of black and white faces of $\dom^\delta$ correspondingly.  
Let $\tilde{\partial}\mathcal{V}^\delta$ be the set of vertices adjacent to $\dom^\delta$ but not in $\dV^\delta$.
Let $\diom^\delta$ be the set of interior faces that have a common edge with the boundary of $\om^\delta$. Similarly define sets $\dib^\delta$ and $\diw^\delta$ ($\diom^\delta=\dib^\delta\sqcup\diw^\delta$). Let $\intom^\delta=\om^\delta\smallsetminus\diom^\delta$. Let us denote by $\clom^\delta$ the set $\om^\delta\sqcup\dom^\delta$, define also sets $\clb^\delta$ and $\clw^\delta$, to be exact: $\clb^\delta=\bb{}^\delta\sqcup\db^\delta$ and $\clw^\delta=\ww{}^\delta\sqcup\dw^\delta$.   
In the same way we define the sets 
$\partial\mathcal{V}_\circ^\delta$,  
$\partial\mathcal{V}_\bullet^\delta$,  
$\partial\mathcal{V}_\diamond^\delta$,
$\tilde\partial\mathcal{V}_\circ^\delta$,  
$\tilde\partial\mathcal{V}_\bullet^\delta$,  
$\tilde\partial\mathcal{V}_\diamond^\delta$,
$\dbb{0,1}^\delta$, 
$\dww{0,1}^\delta$, 
$\dintb{0,1}^\delta$, 
$\dintw{0,1}^\delta$,
$\intb{0,1}^\delta$, $\intw{0,1}^\delta$,
$\clbb{0,1}^\delta$ and $\clww{0,1}^\delta$.

We say that a discrete domain $\om^\delta$ approximates a simply connected domain $\om$ if $\om^\delta\to\om$
in the sense of Carath\'eodory, see~\cite[Section 3.2]{C+S}.

\begin{figure}
\begin{center}
\begin{tikzpicture}[x={(0.46cm,0.46cm)}, y={(0.46cm,-0.46cm)}]

\draw  [draw, fill=gray!20](0,6) --(0,7)-- (1,7)--(1,6)--cycle;
\draw  [draw, fill=gray!20](0,4)--(0,5)-- (1,5)--(1,4)--cycle;

\draw  [draw, fill=gray!20](2,2)  --(2,3)-- (3,3)--(3,2)--cycle;
\draw  [draw, fill=gray!20](2,4)  --(2,5)-- (3,5)--(3,4)--cycle;
\draw  [draw, fill=gray!20](2,6)  --(2,7)-- (3,7)--(3,6)--cycle;
\draw  [draw, fill=gray!20](2,8)  --(2,9)-- (3,9)--(3,8)--cycle;

\draw  [draw, fill=gray!20](4,0) --(4,1)-- (5,1)--(5,0)--cycle;
\draw  [draw, fill=gray!20](4,2) --(4,3)-- (5,3)--(5,2)--cycle;
\draw  [draw, fill=gray!20](4,4) --(4,5)-- (5,5)--(5,4)--cycle;
\draw  [draw, fill=gray!20](4,6) --(4,7)-- (5,7)--(5,6)--cycle;
\draw  [draw, fill=gray!20](4,8) --(4,9)-- (5,9)--(5,8)--cycle;
\draw  [draw, fill=gray!20](4,10) --(4,11)-- (5,11)--(5,10)--cycle;

%\draw  [draw, fill=gray!20](6,-2) rectangle (7,-1);
\draw  [draw, fill=gray!20](6,0) --(6,1)-- (7,1)--(7,0)--cycle;
\draw  [draw, fill=gray!20](6,2) --(6,3)-- (7,3)--(7,2)--cycle;
\draw  [draw, fill=gray!20](6,4) --(6,5)-- (7,5)--(7,4)--cycle;
\draw  [draw, fill=gray!20](6,6) --(6,7)-- (7,7)--(7,6)--cycle;
\draw  [draw, fill=gray!20](6,8) --(6,9)-- (7,9)--(7,8)--cycle;
\draw  [draw, fill=gray!20](6,10) --(6,11)-- (7,11)--(7,10)--cycle;
%\draw  [draw, fill=gray!20](6,12) rectangle (7,13);

%\draw  [draw, fill=gray!20](8,-2) rectangle (9,-1);
\draw  [draw, fill=gray!20](8,0) --(8,1)--  (9,1)--(9,0)--cycle;
\draw  [draw, fill=gray!20](8,2) --(8,3)-- (9,3)--(9,2)--cycle;
\draw  [draw, fill=gray!20](8,4) --(8,5)-- (9,5)--(9,4)--cycle;
\draw  [draw, fill=gray!20](8,6) --(8,7)-- (9,7)--(9,6)--cycle;
\draw  [draw, fill=gray!20](8,8) --(8,9)-- (9,9)--(9,8)--cycle;
%\draw  [draw, fill=gray!20](8,10) rectangle (9,11);
%\draw  [draw, fill=gray!20](8,12) rectangle (9,13);

\draw  [draw, fill=gray!20](10,8) --(10,9)-- (11,9)--(11,8)--cycle;
%\draw  [draw, fill=gray!20](10,10) rectangle (11,11);
\draw  [draw, fill=gray!20](10,6) --(10,7)-- (11,7)--(11,6)--cycle;
\draw  [draw, fill=gray!20](10,4) --(10,5)-- (11,5)--(11,4)--cycle;
\draw  [draw, fill=gray!20](10,2) --(10,3)-- (11,3)--(11,2)--cycle;
\draw  [draw, fill=gray!20](10,0) --(10,1)-- (11,1)--(11,0)--cycle;

\draw  [draw, fill=gray!20](12,10) --(12,11)-- (13,11)--(13,10)--cycle;
\draw  [draw, fill=gray!20](12,8) --(12,9)-- (13,9)--(13,8)--cycle;
\draw  [draw, fill=gray!20](12,6) --(12,7)-- (13,7)--(13,6)--cycle;
\draw  [draw, fill=gray!20](12,4) --(12,5)-- (13,5)--(13,4)--cycle;
\draw  [draw, fill=gray!20](12,2) --(12,3)-- (13,3)--(13,2)--cycle;
\draw  [draw, fill=gray!20](12,0) --(12,1)-- (13,1)--(13,0)--cycle;

\draw  [draw, fill=gray!20](14,12) --(14,13)-- (15,13)--(15,12)--cycle;
\draw  [draw, fill=gray!20](14,10) --(14,11)-- (15,11)--(15,10)--cycle;
\draw  [draw, fill=gray!20](14,8) --(14,9)-- (15,9)--(15,8)--cycle;
\draw  [draw, fill=gray!20](14,6) --(14,7)-- (15,7)--(15,6)--cycle;
\draw  [draw, fill=gray!20](14,4) --(14,5)-- (15,5)--(15,4)--cycle;
\draw  [draw, fill=gray!20](14,2) --(14,3)-- (15,3)--(15,2)--cycle;
\draw  [draw, fill=gray!20](14,0) --(14,1)-- (15,1)--(15,0)--cycle;

%dom
%\draw  [draw, fill=gray!20](18,14) --(18,15)-- (19,15)--(19,14)--cycle;
%\draw  [draw, fill=gray!20](18,12) --(18,13)-- (19,13)--(19,12)--cycle;
%\draw  [draw, fill=gray!20](18,10) --(18,11)-- (19,11)--(19,10)--cycle;
\draw  [draw, fill=gray!20](18,8) --(18,9)-- (19,9)--(19,8)--cycle;
%\draw  [draw, fill=gray!20](20,8) --(20,9)-- (21,9)--(21,8)--cycle;
\draw  [draw, fill=gray!20](20,6) --(20,7)-- (21,7)--(21,6)--cycle;
\draw  [draw, fill=gray!20](20,4) --(20,5)-- (21,5)--(21,4)--cycle;
\draw  [draw, fill=gray!20](19,3) --(19,4)-- (20,4)--(20,3)--cycle;

\draw  [draw, fill=gray!20](18,2) --(18,3)-- (19,3)--(19,2)--cycle;
\draw  [draw, fill=gray!20](17,1) --(17,2)-- (18,2)--(18,1)--cycle;
\draw  [draw, fill=gray!20](16,0) --(16,1)-- (17,1)--(17,0)--cycle;
\draw  [draw, fill=gray!20](15,-1) --(15,0)-- (16,0)--(16,-1)--cycle;
%\draw [draw] (19,15)--(19,9)--(21,9)--(21,3)--(19,3)--(19,1)--(17,1)--(17,-1)--(3,-1);
\draw [draw] (19,9)--(20,9)--(20,8)--(21,8)--(21,4);%--(21,3)--(19,3)--(19,1)--(17,1)--(17,-1)--(15,-1);
\draw [draw] (14, 0)--(14,-1)--(15,-1);
%\path[draw, fill=black] (19,15) circle[radius=0.07cm];
%\path[draw, fill=black] (19,13) circle[radius=0.07cm];
%\path[draw, fill=black] (19,11) circle[radius=0.07cm];
\path[draw, fill=black] (19,9) circle[radius=0.07cm];
%\path[draw, fill=black] (21,9) circle[radius=0.07cm];
\path[draw, fill=black] (21,7) circle[radius=0.07cm];
\path[draw, fill=black] (21,5) circle[radius=0.07cm];
%\path[draw, fill=black] (21,3) circle[radius=0.07cm];
\path[draw, fill=black] (19,3) circle[radius=0.07cm];
%\path[draw, fill=black] (19,1) circle[radius=0.07cm];
\path[draw, fill=black] (17,1) circle[radius=0.07cm];
%\path[draw, fill=black] (17,-1) circle[radius=0.07cm];
\path[draw, fill=black] (15,-1) circle[radius=0.07cm];
%dom

%\foo{16}{12}
%\foo{16}{0}
%\foo{16}{8}
%\foo{16}{6}
%\foo{16}{4}
%\foo{16}{2}

\draw  [draw, fill=gray!20](16,12) --(16,13)-- (17,13)--(17,12)--cycle;
\draw  [draw, fill=gray!20](16,10) --(16,11)-- (17,11)--(17,10)--cycle;
\draw  [draw, fill=gray!20](16,8) --(16,9)-- (17,9)--(17,8)--cycle;
\draw  [draw, fill=gray!20](16,6) --(16,7)-- (17,7)--(17,6)--cycle;
\draw  [draw, fill=gray!20](16,4) --(16,5)-- (17,5)--(17,4)--cycle;
\draw  [draw, fill=gray!20](16,2) --(16,3)-- (17,3)--(17,2)--cycle;

\draw  [draw, fill=gray!20](18,6) --(18,7)-- (19,7)--(19,6)--cycle;
\draw  [draw, fill=gray!20](18,4) --(18,5)-- (19,5)--(19,4)--cycle;

\draw  [draw, fill=gray!20](1,7) --(1,8)-- (2,8)--(2,7)--cycle;
\draw  [draw, fill=gray!20](1,5) --(1,6)-- (2,6)--(2,5)--cycle;

\draw  [draw, fill=gray!20](3,3) --(3,4)-- (4,4)--(4,3)--cycle;
\draw  [draw, fill=gray!20](3,5) --(3,6)-- (4,6)--(4,5)--cycle;
\draw  [draw, fill=gray!20](3,7) --(3,8)-- (4,8)--(4,7)--cycle;
\draw  [draw, fill=gray!20](3,9) --(3,10)-- (4,10)--(4,9)--cycle;

\draw  [draw, fill=gray!20](5,1) --(5,2)-- (6,2)--(6,1)--cycle;
\draw  [draw, fill=gray!20](5,3) --(5,4)-- (6,4)--(6,3)--cycle;
\draw  [draw, fill=gray!20](5,5) --(5,6)-- (6,6)--(6,5)--cycle;
\draw  [draw, fill=gray!20](5,7) --(5,8)-- (6,8)--(6,7)--cycle;
\draw  [draw, fill=gray!20](5,9) --(5,10)-- (6,10)--(6,9)--cycle;
\draw  [draw, fill=gray!20](5,11) --(5,12)-- (6,12)--(6,11)--cycle;

%\draw  [draw, fill=gray!90](7,-1) rectangle (8,0);
\draw  [draw, fill=gray!20](7,1) --(7,2)-- (8,2)--(8,1)--cycle;
\draw  [draw, fill=gray!20](7,3) --(7,4)-- (8,4)--(8,3)--cycle;
\draw  [draw, fill=gray!20](7,5) --(7,6)-- (8,6)--(8,5)--cycle;
\draw  [draw, fill=gray!20](7,7) --(7,8)-- (8,8)--(8,7)--cycle;
\draw  [draw, fill=gray!20](7,9) --(7,10)-- (8,10)--(8,9)--cycle;
\draw  [draw, fill=gray!20](7,11) --(7,12)-- (8,12)--(8,11)--cycle;
%\draw  [draw, fill=gray!90](7,13) rectangle (8,14);

%\draw  [draw, fill=gray!90](9,-1) rectangle (10,0);
\draw  [draw, fill=gray!20](9,1) --(9,2)-- (10,2)--(10,1)--cycle;
\draw  [draw, fill=gray!20](9,3) --(9,4)-- (10,4)--(10,3)--cycle;
\draw  [draw, fill=gray!20](9,5) --(9,6)-- (10,6)--(10,5)--cycle;
\draw  [draw, fill=gray!20](9,7) --(9,8)-- (10,8)--(10,7)--cycle;
\draw  [draw, fill=gray!20](9,9) --(9,10)-- (10,10)--(10,9)--cycle;
%\draw  [draw, fill=gray!20](9,11) --(9,12)-- (10,12)--(10,11)--cycle;
%\draw  [draw, fill=gray!90](9,11) rectangle (10,12);
%\draw  [draw, fill=gray!90](9,13) rectangle (10,14);

\draw  [draw, fill=gray!20](11,1) --(11,2)-- (12,2)--(12,1)--cycle;
\draw  [draw, fill=gray!20](11,3) --(11,4)-- (12,4)--(12,3)--cycle;
\draw  [draw, fill=gray!20](11,5) --(11,6)-- (12,6)--(12,5)--cycle;
\draw  [draw, fill=gray!20](11,7) --(11,8)-- (12,8)--(12,7)--cycle;
\draw  [draw, fill=gray!20](11,9) --(11,10)-- (12,10)--(12,9)--cycle;

\draw  [draw, fill=gray!20](13,1) --(13,2)-- (14,2)--(14,1)--cycle;
\draw  [draw, fill=gray!20](13,3) --(13,4)-- (14,4)--(14,3)--cycle;
\draw  [draw, fill=gray!20](13,5) --(13,6)-- (14,6)--(14,5)--cycle;
\draw  [draw, fill=gray!20](13,7) --(13,8)-- (14,8)--(14,7)--cycle;
\draw  [draw, fill=gray!20](13,9) --(13,10)-- (14,10)--(14,9)--cycle;
\draw  [draw, fill=gray!20](13,11) --(13,12)-- (14,12)--(14,11)--cycle;

\draw  [draw, fill=gray!20](15,1) --(15,2)-- (16,2)--(16,1)--cycle;
\draw  [draw, fill=gray!20](15,3) --(15,4)-- (16,4)--(16,3)--cycle;
\draw  [draw, fill=gray!20](15,5) --(15,6)-- (16,6)--(16,5)--cycle;
\draw  [draw, fill=gray!20](15,7) --(15,8)-- (16,8)--(16,7)--cycle;
\draw  [draw, fill=gray!20](15,9) --(15,10)-- (16,10)--(16,9)--cycle;
\draw  [draw, fill=gray!20](15,11) --(15,12)-- (16,12)--(16,11)--cycle;
\draw  [draw, fill=gray!20](15,13) --(15,14)-- (16,14)--(16,13)--cycle;

\draw  [draw, fill=gray!20](17,3) --(17,4)-- (18,4)--(18,3)--cycle;
\draw  [draw, fill=gray!20](17,5) --(17,6)-- (18,6)--(18,5)--cycle;
\draw  [draw, fill=gray!20](17,7) --(17,8)-- (18,8)--(18,7)--cycle;
\draw  [draw, fill=gray!20](17,9) --(17,10)-- (18,10)--(18,9)--cycle;
\draw  [draw, fill=gray!20](17,11) --(17,12)-- (18,12)--(18,11)--cycle;
\draw  [draw, fill=gray!20](17,13) --(17,14)-- (18,14)--(18,13)--cycle;

\draw  [draw, fill=gray!20](19,5) --(19,6)-- (20,6)--(20,5)--cycle;
\draw  [draw, fill=gray!20](19,7) --(19,8)-- (20,8)--(20,7)--cycle;

\draw[draw,dashed, line width=0.5pt] (2,8)--(8,2);
\draw[draw,dashed, line width=0.5pt] (4,10)--(12,2);
\draw[draw,dashed, line width=0.5pt] (10,8)--(16,2);
\draw[draw,dashed, line width=0.5pt] (12,10)--(18,4);
\draw[draw,dashed, line width=0.5pt] (14,12)--(16,10);

\draw[draw,dashed, line width=0.5pt] (2,4)--(8,10);
\draw[draw,dashed, line width=0.5pt] (4,2)--(10,8);
\draw[draw,dashed, line width=0.5pt] (8,2)--(16,10);
\draw[draw,dashed, line width=0.5pt] (12,2)--(18,8);

\draw[draw,dashed, line width=0.5pt] (0,6)--(6,0)--(8,2)--(10,0)--(12,2)--(14,0)--(20,6)--(16,10)--(18,12)--(16,14)--(10,8)--(6,12)--cycle;

\path (0.5,6.5) node[]{$\bb{1}^{\flat}$};
\path (0.5,4.5) node[]{$\bb{1}^{\text{-}}$};
\path (0.5,5.5) node[]{$\ww{0}^{\text{-}}$};
\path (1.5,4.5) node[]{$\ww{1}^{\text{-}}$};

\path (2.5,2.5) node[]{$\bb{1}^{\text{-}}$};
\path (2.5,3.5) node[]{$\ww{0}^{\text{-}}$};
\path (3.5,2.5) node[]{$\ww{1}^{\text{-}}$};
%\path (2.5,4.5) node[]{$\bb{1}$};
%\path (2.5,6.5) node[]{$\bb{1}$};
\path (2.5,8.5) node[]{$\bb{1}^{\flat}$};

\path (4.5,0.5) node[]{$\bb{1}^{\text{-}}$};
\path (4.5,1.5) node[]{$\ww{0}^{\text{-}}$};
\path (5.5,0.5) node[]{$\ww{1}^{\text{-}}$};
%\path (4.5,2.5) node[]{$\bb{1}$};
%\path (4.5,4.5) node[]{$\bb{1}$};
%\path (4.5,6.5) node[]{$\bb{1}$};
%\path (4.5,8.5) node[]{$\bb{1}$};
\path (4.5,10.5) node[]{$\bb{1}^{\flat}$};

\path (6.5,0.5) node[]{$\bb{1}^{\sharp}$};
\path (7.5,0.5) node[]{$\ww{1}^{\sharp}$};
%\path (6.5,2.5) node[]{$\bb{1}$};
%\path (6.5,4.5) node[]{$\bb{1}$};
\path (6.5,6.5) node[]{$\bf \uI$};
\path (5.5,6.5) node[]{$\bf \vlbr$};
%\path (6.5,8.5) node[]{$\bb{1}$};
%\path (6.5,10.5) node[]{$\bb{1}$};

\path (8.5,0.5) node[]{$\bb{1}^{\text{-}}$};
\path (8.5,1.5) node[]{$\ww{0}^{\text{-}}$};
\path (9.5,0.5) node[]{$\ww{1}^{\text{-}}$};
%\path (8.5,2.5) node[]{$\bb{1}$};
\path (8.5,4.5) node[]{$\bb{1}$};
%\path (8.5,6.5) node[]{$\bb{1}$};
%\path (8.5,8.5) node[]{$\bb{1}$};

\path (10.5,0.5) node[]{$\bb{1}^{\sharp}$};
\path (11.5,0.5) node[]{$\ww{1}^{\sharp}$};
\path (10.5,2.5) node[]{$\bb{1}$};
\path (10.5,4.5) node[]{$\bb{1}$};
\path (10.5,6.5) node[]{$\bb{1}$};
\path (10.5,8.5) node[]{$\bb{1}^{\flat}$};

\path (12.5,0.5) node[]{$\bb{1}^{\text{-}}$};
\path (12.5,1.5) node[]{$\ww{0}^{\text{-}}$};
\path (13.5,0.5) node[]{$\ww{1}^{\text{-}}$};
%\path (12.5,2.5) node[]{$\bb{1}$};
\path (12.5,4.5) node[]{$\bb{1}$};
\path (12.5,6.5) node[]{$\bb{1}$};
%\path (12.5,8.5) node[]{$\bb{1}$};
\path (12.5,10.5) node[]{$\bb{1}^{\flat}$};

\path (14.5,0.5) node[]{$\bb{1}^{\sharp}$};
\path (15.5,0.5) node[]{$\ww{1}^{\sharp}$};
%\path (14.5,2.5) node[]{$\bb{1}$};
%\path (14.5,4.5) node[]{$\bb{1}$};
%\path (14.5,6.5) node[]{$\bb{1}$};
%\path (14.5,8.5) node[]{$\bb{1}$};
%\path (14.5,10.5) node[]{$\bb{1}$};
\path (14.5,12.5) node[]{$\bb{1}^{\flat}$};

%\path (1.5,5.5) node[]{$\bb{0}$};
\path (1.5,7.5) node[]{$\bb{0}^{\flat}$};
\path (0.5,7.5) node[]{$\ww{0}^{\flat}$};

%\path (3.5,3.5) node[]{$\bb{0}$};
%\path (3.5,5.5) node[]{$\bb{0}$};
%\path (3.5,7.5) node[]{$\bb{0}$};
\path (3.5,9.5) node[]{$\bb{0}^{\flat}$};
\path (2.5,9.5) node[]{$\ww{0}^{\flat}$};

%\path (5.5,1.5) node[]{$\bb{0}$};
%\path (5.5,3.5) node[]{$\bb{0}$};
%\path (5.5,5.5) node[]{$\bb{0}$};
\path (5.4,7.6) node[]{$\bf \uR$};
\path (6.5,7.5) node[]{$\bf \vl$};
%\path (5.5,9.5) node[]{$\bb{0}$};
\path (5.5,11.5) node[]{$\bb{0}^{\flat}$};
\path (4.5,11.5) node[]{$\ww{0}^{\flat}$};

\path (7.5,1.5) node[]{$\bb{0}^{\sharp}$};
%\path (7.5,3.5) node[]{$\bb{0}$};
%\path (7.5,5.5) node[]{$\bb{0}$};
%\path (7.5,7.5) node[]{$\bb{0}$};
%\path (7.5,9.5) node[]{$\bb{0}$};
\path (7.5,11.5) node[]{$\bb{0}^{+}$};
\path (6.5,11.5) node[]{$\ww{0}^{+}$};
\path (7.5,10.5) node[]{$\ww{1}^{+}$};

%\path (9.5,1.5) node[]{$\bb{0}$};
\path (9.5,3.5) node[]{$\bb{0}$};
\path (9.5,5.5) node[]{$\bb{0}$};
%\path (9.5,7.5) node[]{$\bb{0}$};
\path (9.5,9.5) node[]{$\bb{0}^{+}$};
\path (8.5,9.5) node[]{$\ww{0}^{+}$};
\path (9.5,8.5) node[]{$\ww{1}^{+}$};

\path (11.5,1.5) node[]{$\bb{0}^{\sharp}$};
\path (11.5,3.5) node[]{$\bb{0}$};
\path (11.5,5.5) node[]{$\bb{0}$};
%\path (11.5,7.5) node[]{$\bb{0}$};
\path (11.5,9.5) node[]{$\bb{0}^{\flat}$};
\path (10.5,9.5) node[]{$\ww{0}^{\flat}$};

%\path (13.5,1.5) node[]{$\bb{0}$};
\path (13.5,3.5) node[]{$\bb{0}$};
\path (13.5,5.5) node[]{$\bb{0}$};
%\path (13.5,7.5) node[]{$\bb{0}$};
%\path (13.5,9.5) node[]{$\bb{0}$};
\path (13.5,11.5) node[]{$\bb{0}^{\flat}$};
\path (12.5,11.5) node[]{$\ww{0}^{\flat}$};

\path (15.5,1.5) node[]{$\bb{0}^{\sharp}$};
%\path (15.5,3.5) node[]{$\bb{0}$};
%\path (15.5,5.5) node[]{$\bb{0}$};
%\path (15.5,7.5) node[]{$\bb{0}$};
%\path (15.5,9.5) node[]{$\bb{0}$};
%\path (15.5,11.5) node[]{$\bb{0}$};
\path (15.5,13.5) node[]{$\bb{0}^{\flat}$};
\path (14.5,13.5) node[]{$\ww{0}^{\flat}$};

\path (16.5,2.5) node[]{$\bb{1}^{\sharp}$};
\path (17.5,2.5) node[]{$\ww{1}^{\sharp}$};
%\path (16.5,4.5) node[]{$\bb{1}$};
%\path (16.5,6.5) node[]{$\bb{1}$};
%\path (16.5,8.5) node[]{$\bb{1}$};
\path (16.5,10.5) node[]{$\bb{1}^{\sharp}$};
\path (17.5,10.5) node[]{$\ww{1}^{\sharp}$};
\path (16.5,12.5) node[]{$\bb{1}$};

\path (17.5,3.5) node[]{$\bb{0}^{\sharp}$};
%\path (17.5,5.5) node[]{$\bb{0}$};
%\path (17.5,7.5) node[]{$\bb{0}$};
\path (17.5,9.5) node[]{$\bb{0}^{+}$};
\path (16.5,9.5) node[]{$\ww{0}^{+}$};
\path (17.5,8.5) node[]{$\ww{1}^{+}$};
\path (17.5,11.5) node[]{$\bb{0}^{\sharp}$};
\path (17.5,13.5) node[]{$\bb{0}^{+}$};
\path (16.5,13.5) node[]{$\ww{0}^{+}$};
\path (17.5,12.5) node[]{$\ww{1}^{+}$};

\path (18.5,4.5) node[]{$\bb{1}^{\sharp}$};
\path (19.5,4.5) node[]{$\ww{1}^{\sharp}$};
% \path (18.5,6.5) node[]{$\bb{1}$};

\path (19.5,5.5) node[]{$\bb{0}^{\sharp}$};
\path (19.5,7.5) node[]{$\bb{0}^{+}$};
\path (18.5,7.5) node[]{$\ww{0}^{+}$};
\path (19.5,6.5) node[]{$\ww{1}^{+}$};

%\path[draw, fill=red] (2,6) circle[radius=0.07cm];
%\path[draw, fill=red] (4,6) circle[radius=0.07cm];
\path[draw, fill=white] (6,6) circle[radius=0.07cm];
\path[draw, fill=white] (8,6) circle[radius=0.07cm];
\path[draw, fill=white] (10,6) circle[radius=0.07cm];
\path[draw, fill=white] (12,6) circle[radius=0.07cm];
\path[draw, fill=white] (14,6) circle[radius=0.07cm];
%\path[draw, fill=red] (16,6) circle[radius=0.07cm];
%\path[draw, fill=red] (18,6) circle[radius=0.07cm];

%\path[draw, fill=red] (4,4) circle[radius=0.07cm];
%\path[draw, fill=red] (6,4) circle[radius=0.07cm];
\path[draw, fill=white] (8,4) circle[radius=0.07cm];
\path[draw, fill=white] (10,4) circle[radius=0.07cm];
\path[draw, fill=white] (12,4) circle[radius=0.07cm];
\path[draw, fill=white] (14,4) circle[radius=0.07cm];
%\path[draw, fill=red] (16,4) circle[radius=0.07cm];

%\path[draw, fill=red] (6,2) circle[radius=0.07cm];
\path[draw, fill=white] (10,2) circle[radius=0.07cm];
%\path[draw, fill=red] (14,2) circle[radius=0.07cm];

%\path[draw, fill=white] (4,8) circle[radius=0.07cm];
\path[draw, fill=white] (6,8) circle[radius=0.07cm];
%\path[draw, fill=white] (8,8) circle[radius=0.07cm];
%\path[draw, fill=white] (12,8) circle[radius=0.07cm];
%\path[draw, fill=white] (14,8) circle[radius=0.07cm];
%\path[draw, fill=white] (16,8) circle[radius=0.07cm];

%\path[draw, fill=white] (6,10) circle[radius=0.07cm];
%\path[draw, fill=white] (14,10) circle[radius=0.07cm];

%\path[draw, fill=white] (16,12) circle[radius=0.07cm];

\draw[draw, line width=2pt] (0,4) -- (0,8) -- (2,8) -- (2,10) -- (4,10) -- (4,12) -- (8,12) -- (8,10) -- (10,10) --  (12,10) -- (12,12) -- (14,12) -- (14,14) -- (18,14) -- (18,8) -- (20,8) -- (20,4) -- (18,4) -- (18,2) -- (16,2) -- (16,0) -- (4,0)-- (4,2) -- (2,2) -- (2,4)
-- cycle;
\draw[draw, line width=2pt] (18,10)--(16,10);
\draw[draw, line width=2pt] (10,10)--(10,8);
\draw[draw, line width=2pt] (8,0)--(8,2);
\draw[draw, line width=2pt] (12,0)--(12,2);

%\path[draw, fill=white] (-0.1,-1.4)--(0.1,-1.4)--(0.1,-1.2)--(-0.1,-1.2)--cycle;
\path[draw, fill=white] (0,6) circle[radius=0.07cm];
\path[draw, fill=white] (0,8) circle[radius=0.07cm];
\path[draw, fill=white] (2,8) circle[radius=0.07cm];
\path[draw, fill=white] (2,10) circle[radius=0.07cm];
\path[draw, fill=white] (4,10) circle[radius=0.07cm];
\path[draw, fill=white] (4,12) circle[radius=0.07cm];
\path[draw, fill=white] (6,12) circle[radius=0.07cm];
\path[draw, fill=white] (8,12) circle[radius=0.07cm];
\path[draw, fill=white] (8,10) circle[radius=0.07cm];
\path[draw, fill=white] (10,8) circle[radius=0.07cm];
\path[draw, fill=white] (10,10) circle[radius=0.07cm];
\path[draw, fill=white] (12,10) circle[radius=0.07cm];
\path[draw, fill=white] (12,12) circle[radius=0.07cm];
\path[draw, fill=white] (14,12) circle[radius=0.07cm];
\path[draw, fill=white] (14,14) circle[radius=0.07cm];
\path[draw, fill=white] (16,14) circle[radius=0.07cm];
\path[draw, fill=white] (18,14) circle[radius=0.07cm];
\path[draw, fill=white] (18,12) circle[radius=0.07cm];
\path[draw, fill=white] (16,10) circle[radius=0.07cm];
\path[draw, fill=white] (18,10) circle[radius=0.07cm];
\path[draw, fill=white] (18,8) circle[radius=0.07cm];
\path[draw, fill=white] (20,8) circle[radius=0.07cm];
\path[draw, fill=white] (20,6) circle[radius=0.07cm];
\path[draw, fill=white] (20,4) circle[radius=0.07cm];
\path[draw, fill=white] (18,4) circle[radius=0.07cm];
\path[draw, fill=white] (18,2) circle[radius=0.07cm];
\path[draw, fill=white] (16,2) circle[radius=0.07cm];
\path[draw, fill=white] (16,0) circle[radius=0.07cm];
\path[draw, fill=white] (14,0) circle[radius=0.07cm];
\path[draw, fill=white] (12,2) circle[radius=0.07cm];
\path[draw, fill=white] (12,0) circle[radius=0.07cm];
\path[draw, fill=white] (10,0) circle[radius=0.07cm];
\path[draw, fill=white] (8,2) circle[radius=0.07cm];
\path[draw, fill=white] (8,0) circle[radius=0.07cm];
\path[draw, fill=white] (6,0) circle[radius=0.07cm];
\path[draw, fill=white] (4,0) circle[radius=0.07cm];
\path[draw, fill=white] (4,2) circle[radius=0.07cm];
\path[draw, fill=white] (2,2) circle[radius=0.07cm];
\path[draw, fill=white] (2,4) circle[radius=0.07cm];
\path[draw, fill=white] (0,4) circle[radius=0.07cm];

%\path[draw, fill=black] (12,4) circle[radius=0.07cm];
\path[draw, fill=black] (13,5) circle[radius=0.07cm];
\path[draw, fill=black] (13,3) circle[radius=0.07cm];

%%%%%%%%%%%%
\path[draw, fill=black] (9,3) circle[radius=0.06cm];
\path[draw, fill=black] (11,3) circle[radius=0.06cm];
\path[draw, fill=black] (13,3) circle[radius=0.06cm];
\path[draw, fill=black] (9,5) circle[radius=0.06cm];
\path[draw, fill=black] (11,5) circle[radius=0.06cm];
\path[draw, fill=black] (13,5) circle[radius=0.06cm];

\path[draw, fill=white] (9.9,2.9)--(10.1,2.9)--(10.1,3.1)--(9.9,3.1)--cycle;
\path[draw, fill=white] (11.9,2.9)--(12.1,2.9)--(12.1,3.1)--(11.9,3.1)--cycle;
\path[draw, fill=white] (8.9,3.9)--(9.1,3.9)--(9.1,4.1)--(8.9,4.1)--cycle;
\path[draw, fill=white] (10.9,3.9)--(11.1,3.9)--(11.1,4.1)--(10.9,4.1)--cycle;
\path[draw, fill=white] (12.9,3.9)--(13.1,3.9)--(13.1,4.1)--(12.9,4.1)--cycle;

%\path[draw, fill=red] (0.05,0.9)--(0.25,0.9)--(0.25,1.1)--(0.05,1.1)--cycle;
\path[draw, fill=white] (9.9,4.9)--(10.1,4.9)--(10.1,5.1)--(9.9,5.1)--cycle;
\path[draw, fill=white] (11.9,4.9)--(12.1,4.9)--(12.1,5.1)--(11.9,5.1)--cycle;

\path[draw, fill=white] (8.9,5.9)--(9.1,5.9)--(9.1,6.1)--(8.9,6.1)--cycle;
\path[draw, fill=white] (10.9,5.9)--(11.1,5.9)--(11.1,6.1)--(10.9,6.1)--cycle;

\path (10.5,3.5) node[]{$\ww{0}$};
\path (10.5,5.5) node[]{$\ww{0}$};
\path (9.5,6.5) node[]{$\ww{1}$};
\path (9.5,4.5) node[]{$\ww{1}$};
\path (12.5,5.5) node[]{$\ww{0}$};
\path (12.5,3.5) node[]{$\ww{0}$};
\path (11.5,4.5) node[]{$\ww{1}$};

\path[draw, fill=white] (5.9,6.9)--(6.1,6.9)--(6.1,7.1)--(5.9,7.1)--cycle;
\path[draw, fill=black] (7,7) circle[radius=0.06cm];
\path[draw, fill=black] (5,7) circle[radius=0.06cm];
\path (6,7) node[anchor=north, font=\tiny]{$\bf z$};

\path (6,8) node[anchor=north, font=\tiny]{$ \bf z_1$};
\path (7,7) node[anchor=south, font=\tiny]{$ \bf z_2$};
\path (6,6) node[anchor=south, font=\tiny]{$ \bf z_3$};
\path (5,7) node[anchor=north, font=\tiny]{$ \bf z_4$};

\end{tikzpicture}
\caption{
A hedgehog domain $\om^\delta$ with the boundary (thick line) and portions of the sets ${\color{gray}\blacklozenge}^\delta$ and $\lozenge^\delta$. A portion of faces of $\partial_{\operatorname{out}}\om^\delta$. The vertex set $\mathcal{V}^\delta = \mathcal{V}^\delta_\circ \sqcup \mathcal{V}^\delta_\bullet \sqcup \mathcal{V}^\delta_\diamond$  of $\om^\delta$ consists of white vertices ($\mathcal{V}^\delta_\circ$), black vertices ($\mathcal{V}^\delta_\bullet$) and ``diamond'' ones ($\mathcal{V}^\delta_\diamond$). 
The boundary vertex set ${\partial}\mathcal{V}^\delta_\circ$ (white vertices on the thick line) and a portion of the set $\tilde{\partial}\mathcal{V}^\delta_\bullet$.
S-holomorphic functions are defined on $\mathcal{V}^\delta_\diamond$, and their projections on $\protect\ww{}^\delta\sqcup\protect\bb{}^\delta$. 
Given a vertex $z\in \mathcal{V}^\delta_\diamond$, we denote the neighbouring squares by $\protect\uI\in\protect\bb{1}^\delta, \protect\vlbr\in\protect\ww{1}^\delta, \protect\uR\in\protect\bb{0}^\delta$ and $\protect\vl\in\protect\ww{0}^\delta$.  The set $\diom^\delta$ is composed of subsets 
$\partial^{+}_{\mathrm {int}}\om^\delta$, 
$\partial^{-}_{\mathrm {int}}\om^\delta$, 
$\partial^{\sharp}_{\mathrm {int}}\om^\delta$ and
$\partial^{\flat}_{\mathrm {int}}\om^\delta$. 
}\label{ed}
\label{s-h}
\label{rotation}
\end{center}
\end{figure}

Let $\F^\delta~\colon\clb^\delta~\to~\mathbb{C}$ be a function.  
The following definition is similar to~\cite{C+S,KLD,Mer01}.
Let us define discrete operators 
$\partial^\delta\colon \mathbb{C}^{{\bb{}}^\delta} \to\mathbb{C}^{\ww{}^\delta}$ and 
$\bar{\partial}^\delta\colon  \mathbb{C}^{{\bb{}}^\delta} \to\mathbb{C}^{\ww{}^\delta}$ by the formulas: 
\[
[\partial^\delta \F^\delta](v)=\frac12\left(\frac{\F^\delta(v+\delta\lambda)-
 \F^\delta(v-\delta\lambda)}{2\delta\lambda}+\frac{\F^\delta(v+\delta\bar{\lambda})-
 \F^\delta(v-\delta\bar{\lambda})}{2\delta\bar{\lambda}}\right), \]

\[ [\bar{\partial}^\delta \F^\delta](v)=\frac12\left(\frac{\F^\delta(v+\delta\lambda)-
 \F^\delta(v-\delta\lambda)}{2\delta\bar{\lambda}}+\frac{\F^\delta(v+\delta\bar{\lambda})-
 \F^\delta(v-\delta\bar{\lambda})}{2\delta\lambda}\right),
\]
  where $v\in\ww{}^\delta$. Note that, If $[\bar{\partial}^\delta \F^\delta](v)=0$, then the two terms involved in the definition of $[\partial \F^\delta]$ are equal to each other.

%We can similarly define these operators for a function $\G^\delta\colon\clw^\delta\to\mathbb{C}$.

\begin{definition}
{\rm A function $\F^\delta\colon\clb^\delta\to\mathbb{C}$ is called} discrete holomorphic {\rm in $\om^\delta$ if 
$[\bar{\partial}^\delta\F^\delta](v) = 0$ for all $v\in\ww{}^\delta$. Also, we always assume that $\F^\delta$ is real on~$\clbb{0}^\delta$ and purely imaginary on~$\clbb{1}^\delta$. }

%A function $\G^\delta\colon\clw^\delta\to\mathbb{C}$ is called discrete holomorphic in $\om^\delta$ if $[\bar{\partial}^\delta\G^\delta](u) = 0$ for all $u\in\bb{}^\delta$.Also, we always assume that $\G^\delta$ is belong to~$\lambda\mathbb{R}$ (resp., $\bar{\lambda}\mathbb{R}$) on $\clww{0}^\delta$ (resp., on $\clww{1}^\delta$).
\end{definition}

%\begin{remark}
%If a function $\F^\delta\colon\clb^\delta\to\mathbb{C}$ is discrete holomorphic in~$\om^\delta$ then \mbox{$[i\cdot\partial^\delta \F^\delta]\colon\ww{}^\delta\to\mathbb{C}$} is a discrete holomorphic function in~$\om^\delta\smallsetminus\diom^\delta$. 
%Similarly, if $\G^\delta\colon\clw^\delta\to\mathbb{C}$ is discrete holomorphic in~$\om^\delta$ then $\partial^\delta \G^\delta\colon\bb{}^\delta\to\mathbb{C}$ is discrete holomorphic in~$\om^\delta\smallsetminus\diom^\delta$.
%\end{remark}
 
 Define the discrete Laplacian of $\F^\delta$ by
$$\Delta^\delta\F^\delta(u)=\frac {\F^\delta(u+2\delta\lambda)+\F^\delta(u+2\delta\bar{\lambda})+\F^\delta(u-2\delta\lambda)+\F^\delta(u-2\delta\bar{\lambda})-4\F^\delta(u)}{4\delta^2},$$ where $u\in\bb{}^\delta$. Note that $\Delta^\delta \F^\delta(u)=4[\partial^\delta\bar{\partial}^\delta\F^\delta](u)=4[\bar{\partial}^\delta\partial^\delta\F^\delta](u),$ 
the factorisation was noted in~\cite{C+S, KLD, Mer01}.

A function  $\F^\delta\colon\clb^\delta\to\mathbb{C}$ %(resp., $\G^\delta\colon\clw^\delta\to\mathbb{C}$)
 is called {\it discrete harmonic} in $\om^\delta$ if it satisfies $\Delta^\delta \F^\delta(u)=0$ for all $u\in\bb{}^\delta$. %(resp., $\Delta^\delta \G^\delta(v)=0$ for all $v\in\ww{}^\delta$).

It is easy to see that discrete harmonic functions satisfy the maximum principle:
$$\max_{u\in\om^\delta}\F^\delta(u)=\max_{u\in\dom^\delta}\F^\delta(u).$$

The notion of s-holomorphicity, which is a version of discrete holomorphicity, was introduced in~\cite{Stas, Stas07}.%, see also~\cite{CS} for more ditails.

\begin{definition}\label{proj_s_hol}
{\rm A function $\F_{\operatorname{s-hol}}^\delta\colon \mathcal{V}_\diamond^\delta\to\mathbb{C}$ is called} s-holomorphic {\rm on $\mathcal{V}_\diamond^\delta$ if for each pair of vertices $z_1, z_2\in\mathcal{V}_\diamond^\delta$ of the same square $a\in\om^\delta$
\[\mathrm{Proj}_{\tau(a)} [F_{\operatorname{s-hol}}^\delta(z_1)] = \mathrm{Proj}_{\tau(a)} [F_{\operatorname{s-hol}}^\delta(z_2)],\]
where ${\mathrm{Proj}}_{\tau(a)}[z]=\tau(a)\cdot\re\left[z\cdot\overline{\tau(a)}\right]$ and $\tau(a)$ is $1$, $i$, $\lambda$ or $\bar{\lambda}$ if the square $a$ is a square of type $\bb{0}^\delta$, $\bb{1}^\delta$, $\ww{0}^\delta$ or $\ww{1}^\delta$ correspondingly.}
\end{definition}

\begin{remark}
An s-holomorphic function $\F_{\operatorname{s-hol}}^\delta\colon \mathcal{V}_\diamond^\delta\to\mathbb{C}$ can be extended 
to the faces of $\clom^\delta.$
For any $a\in\clom^\delta$
\[
\F_{\operatorname{s-hol}}^\delta(a):= {\mathrm{Proj}}_{\tau(a)}[F_{\operatorname{s-hol}}^\delta(z)],
\]
where $z\in\mathcal{V}_\diamond^\delta$ is a vertex of the square $a$.
\end{remark}

From now onwards, we will think that s-holomorphic functions are defined on the set $\mathcal{V}_\diamond^\delta\sqcup\clom^\delta$ rather than on the set $\mathcal{V}_\diamond^\delta$ only.

\begin{remark}\label{bij:hol_s-hol}
There is a bijection between s-holomorphic functions on $\mathcal{V}_\diamond^\delta\sqcup\clom^\delta$ and holomorphic functions on~$\bar{\bb{}}^\delta$: 

1. Let a function $F^\delta_{\operatorname{s-hol}}$ be s-holomorphic. It is easy to check that $\mathrm{Proj}_{\tau(\cdot)} [F_{\operatorname{s-hol}}^\delta]|_{\bar{\bb{}}^\delta}$ is holomorphic; 
%\begin{definition}

2. Let $\F^\delta\colon \bar{\bb{}}^\delta\to\mathbb{C}$ be a discrete holomorphic function. Let $F^\delta_{\operatorname{s-hol}}$ be a function defined as follows:
\[
\begin{cases} 
\begin{array}{llll}
F^\delta_{\operatorname{s-hol}}(u)=\F^\delta(u)  \quad & \operatorname{if}\, u\in \bb{}^\delta; 
& F^\delta_{\operatorname{s-hol}}(\vl)=\frac{\lambda}{\sqrt{2}}\cdot(\F^\delta(\uR)-i\F^\delta(\uI)) \quad & \operatorname{if}\, \vl\in\ww{0}^\delta; \\
F^\delta_{\operatorname{s-hol}}(z)=\F^\delta(\uR)+\F^\delta(\uI)  \quad & \operatorname{if}\, z\in \mathcal{V}_\diamond^\delta; 
&  F^\delta_{\operatorname{s-hol}}(\vlbr)=\frac{\bar\lambda}{\sqrt{2}}\cdot(\F^\delta(\uR)+i\F^\delta(\uI)) \quad & \operatorname{if}\, \vlbr\in\ww{1}^\delta,\\
\end{array}
\end{cases}
\] 
\old{where $z\in \mathcal{V}_\diamond^\delta$ and $\uI, \vlbr, \uR, \vl$ are adjacent to the vertex $z$ squares of types $\bb{1}^\delta$, $\ww{1}^\delta$, $\bb{0}^\delta$ and $\ww{0}^\delta$ correspondingly (see Fig.~\ref{s-h}).}
where $z\in \mathcal{V}_\diamond^\delta$ and $\uI, \vlbr, \uR, \vl$ are squares of types $\bb{1}^\delta$, $\ww{1}^\delta$, $\bb{0}^\delta$ and $\ww{0}^\delta$ adjacent to the vertex $z$ (see Fig.~\ref{s-h}). Discrete holomorphicity of $\F^\delta$ guarantees that the function $F^\delta_{\operatorname{s-hol}}$ is well defined. 
Note that 
\[
\frac{\lambda}{\sqrt{2}}\cdot(\F^\delta(\uR)-i\F^\delta(\uI))={\mathrm{Proj}}_{\lambda}[\F^\delta(\uR)+\F^\delta(\uI)],
\]
\[
\frac{\bar\lambda}{\sqrt{2}}\cdot(\F^\delta(\uR)+i\F^\delta(\uI))={\mathrm{Proj}}_{\bar\lambda}[\F^\delta(\uR)+\F^\delta(\uI)].
\]
%\end{definition}
Therefore the function $\F_{\operatorname{s-hol}}^\delta$ is s-holomorphic on $\mathcal{V}_\diamond^\delta\sqcup\clom^\delta$.
\end{remark}

\section{The boundary value problem for s-holomorphic functions}\label{4}
In this section we introduce {\it hedgehog domains}. We study the specific discrete boundary value problem of Riemann type on hedgehog domains.

\subsection{Hedgehog domains}\label{section even}
To define a hedgehog domain let us define a {\it dashed square lattice} (see Fig.~\ref{ed}), a lattice where each square has side $2\sqrt{2}\delta$ and centered at a lattice point of 
\[
\left\{\left(\frac{4\delta n+1}{\sqrt{2}},\frac{4\delta m+1}{\sqrt{2}}\right)\, | \,n, m \in \mathbb{Z}\right\}.
\]

\begin{definition}
{\rm A discrete simply connected domain $\om^\delta$ is called a}  hedgehog domain {\rm if it is composed of a finite number of squares $2\delta\times2\delta$ with vertices in $\mathcal{V}^\delta_\circ$ and each such square has either zero or two consecutive edges on the boundary of $\om^\delta$.
Equivalently, one can take an arbitrary simply-connected union of $2\sqrt{2}\delta\times2\sqrt{2}\delta$ squares of the dashed square lattice and add $2\delta\times 2\delta$ right triangles to each of its boundary edges, see Fig. 3.}
%Consider any simply connected domain $\Omega_{\operatorname{dashed}}$ on the dashed square lattice. To obtain a {\it hedgehog domain} on the initial square lattice $\mathbb{C}^\delta$ one should add an isosceles rectangular triangle with hypotenuse $\sqrt{2}\delta$ adjacent to each boundary edge of $\Omega_{\operatorname{dashed}}$, see Fig.~\ref{ed}.
\end{definition}

\begin{remark}
Note that the boundary of a hedgehog domain need not be a simple curve.
\end{remark}

%\begin{remark} One can define a hedgehog domain without using the notion of the dashed lattice: a discrete simply connected domain $\om^\delta$ is called a  hedgehog domain if it is composed of a finite number of squares $2\delta\times2\delta$ with vertices in $\mathcal{V}_\circ$ and each such square has either zero or two consecutive edges on the boundary of $\om^\delta$.
%\end{remark}

Let us divide the set  $\diom^\delta$ of a hedgehog domain into four sets 
$\partial^{+}_{\mathrm {int}}\om^\delta$, 
$\partial^{-}_{\mathrm {int}}\om^\delta$, 
$\partial^{\sharp}_{\mathrm {int}}\om^\delta$ and
$\partial^{\flat}_{\mathrm {int}}\om^\delta$: note that each $\delta\times\delta$ square $a\in\diom^\delta$ belongs to exactly one square $2\delta\times2\delta$ with vertices in $\mathcal{V}_\circ$ touching the boundary, if north-east and south-east (resp., N-W and S-W; N-W and N-E; S-W and S-E) sides of this $2\delta\times2\delta$ square belong to the boundary of $\om^\delta$, then $a\in\partial^{+}_{\mathrm {int}}\om^\delta$ (resp., $a\in\partial^{-}_{\mathrm {int}}\om^\delta$; 
$a\in\partial^{\sharp}_{\mathrm {int}}\om^\delta$;
$a\in\partial^{\flat}_{\mathrm {int}}\om^\delta$). In the same way as above we define the sets  
$\partial^{+}_{\mathrm {int}}\bb{0,1}^\delta$,
$\partial^{+}_{\mathrm {int}}\ww{0,1}^\delta$,
$\partial^{-}_{\mathrm {int}}\bb{0,1}^\delta$,
$\partial^{-}_{\mathrm {int}}\ww{0,1}^\delta$,
$\partial^{\flat}_{\mathrm {int}}\bb{0,1}^\delta$,
$\partial^{\flat}_{\mathrm {int}}\ww{0,1}^\delta$,
$\partial^{\sharp}_{\mathrm {int}}\bb{0,1}^\delta$ and
$\partial^{\sharp}_{\mathrm {int}}\ww{0,1}^\delta$,
see Fig.~\ref{ed}. 
Note that the sets $\partial\mathcal{V}_\bullet^\delta$,  
$\tilde\partial\mathcal{V}_\circ^\delta$,  
$\partial^{+}_{\mathrm {int}}\bb{1}^\delta$, 
$\partial^{-}_{\mathrm {int}}\bb{0}^\delta$, 
$\partial^{\sharp}_{\mathrm {int}}\ww{0}^\delta$ and
$\partial^{\flat}_{\mathrm {int}}\ww{1}^\delta$ of hedgehog domain $\om^\delta$ are empty.

\subsection{Riemann boundary value problem for s-holomorphic functions}
Fermionic observables in the Ising model are s-holomorphic functions with Riemann-type boundary conditions~\cite{CHI, CI, CS, HS, Stas, Stas07}. In Section~\ref{3} we show that the dimer coupling function on hedgehog domains can be considered as an s-holomorphic function satisfying Riemann boundary conditions.
\begin{definition}\label{RH}
Let $\om^\delta$ be a hedgehog domain. Let $\www\in\mathrm{Int}\ww{0}^\delta$.
We say that $F^\delta_{\operatorname{s-hol}}$ solves the discrete Riemann boundary value problem 
$\bf{\operatorname{\bf RBVP}(\om^\delta,\www)}$ 
if 

$
\begin{cases}
%\item[$\rhd$] 
F^\delta_{\operatorname{s-hol}} \text{ is  s-holomorphic  on }
(\clom^\delta\smallsetminus\{\www\})\sqcup\mathcal{V}_{\diamond}^\delta;\\
%\item[$\rhd$] 
[ \bar{\partial}F^\delta_{\operatorname{s-hol}}](\www)=\frac{\lambda}{4\delta^2};\\
%\item[$\rhd$] 
\im[F^\delta_{\operatorname{s-hol}}(z)\cdot{\sqrt{(n(z))}}]=0, \quad z\in\partial\mathcal{V}_{\diamond}^\delta.
\end{cases}$
\end{definition}

\subsection{The primitive of the square of the s-holomorphic function}\label{h-h=ff}
 This definition was introduced by Smirnov in~\cite{Stas}. 
 The primitive of the square of the s-holomorphic function is a crucial tool for the analysis of fermionic observables in the Ising model via boundary value problems for s-holomorphic functions~\cite{Stas, Stas07, CS}.
 In the present paper we use this approach to study the scaling limit of the dimer coupling function. 
 
 %Also, there is a straightforward generalisation of this construction on isoradial graphs, see~\cite{CS}.
% In this section we will define the discrete primitive of the product of two discrete holomorphic functions. This definition is close to the definition of the discrete primitive of the square of the s-holomorphic function~\cite{CS, Stas}. Also, there is a straightforward generalisation of this construction on isoradial graphs, see Appendix~\ref{B}. 

\begin{definition}\label{defH}
Let a function $\F_{\operatorname{s-hol}}^\delta$ be s-holomorphic.
Let us define a function $\HH^\delta\colon \mathcal{V}_{\bullet}^\delta\sqcup\mathcal{V}_{\circ}^\delta 
\to \mathbb{R}$ by the equality
\begin{equation} \label{def H}
H^\delta_\bullet(z_2)-H^\delta_\circ(z_1)=(F^\delta_{\operatorname{s-hol}}(a))^2\cdot(z_2 - z_1),
\end{equation}
where $H^\delta_\bullet$ (resp., $H^\delta_\circ$) is a restriction of the function $\HH^\delta$ to the set $\mathcal{V}_{\bullet}^\delta$ (resp., $\mathcal{V}_{\circ}^\delta$), and $z_1\in\mathcal{V}_{\circ}^\delta$, $z_2\in\mathcal{V}_{\bullet}^\delta$ are two vertices of the same square $a$. Note that the function $\HH^\delta$ can be viewed as the primitive of the real part of the square of the s-holomorphic function $\F_{\operatorname{s-hol}}^\delta$, see~\cite[Proposition 3.6]{CS}.
\end{definition}

\begin{remark}\label{int} 1. The primitive ${\HH}^\delta$ is defined up to an additive constant.

2. S-holomorphicity of the function $F_{\operatorname{s-hol}}^\delta$ guarantees that 
\[
(\HH^\delta(z_1) - \HH^\delta(z_2)) + (\HH^\delta(z_4) - \HH^\delta(z_1)) + (\HH^\delta(z_3) - \HH^\delta(z_4)) + (\HH^\delta(z_2) - \HH^\delta(z_3)) = 0,
\]
 since $|F_{\operatorname{s-hol}}^\delta(z)|^2=(F_{\operatorname{s-hol}}^\delta(\protect\uR))^2-(F_{\operatorname{s-hol}}^\delta(\protect\uI))^2=i((F_{\operatorname{s-hol}}^\delta(\protect\vlbr))^2-(F_{\operatorname{s-hol}}^\delta(\protect\vl))^2),$ where vertices $z, z_1, z_2, z_3, z_4$ and squares $\uI, \vlbr, \uR, \vl$ are as shown on~Fig.~\ref{rotation}. Therefore, if $\om^\delta$ is simply connected, then $\HH^\delta$ is well defined. %
\end{remark}

Recall that $\tilde{\partial}\mathcal{V}_\bullet^\delta$ of hedgehog domain $\om^\delta$ is the set of black vertices of squares of the set $\dom^\delta$, see Fig.~\ref{Fbdr}.
Let us define the discrete {\it leap-frog Laplacian} of $\HH_\circ^\delta$ by 
\begin{equation} \label{lfH}
[\Delta^\delta_\circ\HH^\delta_\circ](z)=\frac{1}{4\delta^2}\sum_{w\sim z}(\HH^\delta_\circ(w)-\HH^\delta_\circ(z)), \quad z\in\mathrm {Int}\mathcal{V}_{\circ}^\delta,
\end{equation} 
where the sum is over the four neighbours $w\in\mathcal{V}_{\circ}^\delta$ of $z$. Similarly, one can define the slightly modified discrete leap-frog Laplacian of $\HH_\bullet^\delta$ by
\begin{equation} \label{lfH}
[\Delta^\delta_\bullet\HH^\delta_\bullet](z)=\frac{1}{c_z\delta^2}\sum_{w\sim z}c_{zw}(\HH^\delta_\bullet(w)-\HH^\delta_\bullet(z)), \quad z\in\mathrm {Int}\mathcal{V}_{\bullet}^\delta,
\end{equation} 
where $c_z=\sum_{w\sim z}c_{zw}$, and $c_{zw}$ equals $1$ for inner edges and $2(\sqrt2 -1)$ for the boundary edges, see Fig.~\ref{Fbdr}. For the reason of this ``boundary modification'' of $\Delta^\delta_\bullet$, see ~\cite[Section 3.6]{CS}.

\begin{figure}
\begin{center}

\begin{tikzpicture}[x={(0.25cm,0.25cm)}, y={(0.25cm,-0.25cm)}]

\draw  [draw, fill=gray!20](0,6) --(0,7)-- (1,7)--(1,6)--cycle;
\draw  [draw, fill=gray!20](0,4)--(0,5)-- (1,5)--(1,4)--cycle;

\draw  [draw, fill=gray!20](2,2)  --(2,3)-- (3,3)--(3,2)--cycle;
\draw  [draw, fill=gray!20](2,4)  --(2,5)-- (3,5)--(3,4)--cycle;
\draw  [draw, fill=gray!20](2,6)  --(2,7)-- (3,7)--(3,6)--cycle;
\draw  [draw, fill=gray!20](2,8)  --(2,9)-- (3,9)--(3,8)--cycle;

\draw  [draw, fill=gray!20](4,0) --(4,1)-- (5,1)--(5,0)--cycle;
\draw  [draw, fill=gray!20](4,2) --(4,3)-- (5,3)--(5,2)--cycle;
\draw  [draw, fill=gray!20](4,4) --(4,5)-- (5,5)--(5,4)--cycle;
\draw  [draw, fill=gray!20](4,6) --(4,7)-- (5,7)--(5,6)--cycle;
\draw  [draw, fill=gray!20](4,8) --(4,9)-- (5,9)--(5,8)--cycle;
\draw  [draw, fill=gray!20](4,10) --(4,11)-- (5,11)--(5,10)--cycle;

%\draw  [draw, fill=gray!20](6,-2) rectangle (7,-1);
\draw  [draw, fill=gray!20](6,0) --(6,1)-- (7,1)--(7,0)--cycle;
\draw  [draw, fill=gray!20](6,2) --(6,3)-- (7,3)--(7,2)--cycle;
\draw  [draw, fill=gray!20](6,4) --(6,5)-- (7,5)--(7,4)--cycle;
\draw  [draw, fill=gray!20](6,6) --(6,7)-- (7,7)--(7,6)--cycle;
\draw  [draw, fill=gray!20](6,8) --(6,9)-- (7,9)--(7,8)--cycle;
\draw  [draw, fill=gray!20](6,10) --(6,11)-- (7,11)--(7,10)--cycle;
%\draw  [draw, fill=gray!20](6,12) rectangle (7,13);

%\draw  [draw, fill=gray!20](8,-2) rectangle (9,-1);
\draw  [draw, fill=gray!20](8,0) --(8,1)--  (9,1)--(9,0)--cycle;
\draw  [draw, fill=gray!20](8,2) --(8,3)-- (9,3)--(9,2)--cycle;
\draw  [draw, fill=gray!20](8,4) --(8,5)-- (9,5)--(9,4)--cycle;
\draw  [draw, fill=gray!20](8,6) --(8,7)-- (9,7)--(9,6)--cycle;
\draw  [draw, fill=gray!20](8,8) --(8,9)-- (9,9)--(9,8)--cycle;
%\draw  [draw, fill=gray!20](8,10) rectangle (9,11);
%\draw  [draw, fill=gray!20](8,12) rectangle (9,13);

\draw  [draw, fill=gray!20](10,8) --(10,9)-- (11,9)--(11,8)--cycle;
%\draw  [draw, fill=gray!20](10,10) rectangle (11,11);
\draw  [draw, fill=gray!20](10,6) --(10,7)-- (11,7)--(11,6)--cycle;
\draw  [draw, fill=gray!20](10,4) --(10,5)-- (11,5)--(11,4)--cycle;
\draw  [draw, fill=gray!20](10,2) --(10,3)-- (11,3)--(11,2)--cycle;
\draw  [draw, fill=gray!20](10,0) --(10,1)-- (11,1)--(11,0)--cycle;

\draw  [draw, fill=gray!20](12,10) --(12,11)-- (13,11)--(13,10)--cycle;
\draw  [draw, fill=gray!20](12,8) --(12,9)-- (13,9)--(13,8)--cycle;
\draw  [draw, fill=gray!20](12,6) --(12,7)-- (13,7)--(13,6)--cycle;
\draw  [draw, fill=gray!20](12,4) --(12,5)-- (13,5)--(13,4)--cycle;
\draw  [draw, fill=gray!20](12,2) --(12,3)-- (13,3)--(13,2)--cycle;
\draw  [draw, fill=gray!20](12,0) --(12,1)-- (13,1)--(13,0)--cycle;

\draw  [draw, fill=gray!20](14,12) --(14,13)-- (15,13)--(15,12)--cycle;
\draw  [draw, fill=gray!20](14,10) --(14,11)-- (15,11)--(15,10)--cycle;
\draw  [draw, fill=gray!20](14,8) --(14,9)-- (15,9)--(15,8)--cycle;
\draw  [draw, fill=gray!20](14,6) --(14,7)-- (15,7)--(15,6)--cycle;
\draw  [draw, fill=gray!20](14,4) --(14,5)-- (15,5)--(15,4)--cycle;
\draw  [draw, fill=gray!20](14,2) --(14,3)-- (15,3)--(15,2)--cycle;
\draw  [draw, fill=gray!20](14,0) --(14,1)-- (15,1)--(15,0)--cycle;

%\foo{16}{12}
%\foo{16}{0}
%\foo{16}{8}
%\foo{16}{6}
%\foo{16}{4}
%\foo{16}{2}

\draw  [draw, fill=gray!20](16,12) --(16,13)-- (17,13)--(17,12)--cycle;
\draw  [draw, fill=gray!20](16,10) --(16,11)-- (17,11)--(17,10)--cycle;
\draw  [draw, fill=gray!20](16,8) --(16,9)-- (17,9)--(17,8)--cycle;
\draw  [draw, fill=gray!20](16,6) --(16,7)-- (17,7)--(17,6)--cycle;
\draw  [draw, fill=gray!20](16,4) --(16,5)-- (17,5)--(17,4)--cycle;
\draw  [draw, fill=gray!20](16,2) --(16,3)-- (17,3)--(17,2)--cycle;

\draw  [draw, fill=gray!20](18,6) --(18,7)-- (19,7)--(19,6)--cycle;
\draw  [draw, fill=gray!20](18,4) --(18,5)-- (19,5)--(19,4)--cycle;

\draw  [draw, fill=gray!20](1,7) --(1,8)-- (2,8)--(2,7)--cycle;
\draw  [draw, fill=gray!20](1,5) --(1,6)-- (2,6)--(2,5)--cycle;

\draw  [draw, fill=gray!20](3,3) --(3,4)-- (4,4)--(4,3)--cycle;
\draw  [draw, fill=gray!20](3,5) --(3,6)-- (4,6)--(4,5)--cycle;
\draw  [draw, fill=gray!20](3,7) --(3,8)-- (4,8)--(4,7)--cycle;
\draw  [draw, fill=gray!20](3,9) --(3,10)-- (4,10)--(4,9)--cycle;

\draw  [draw, fill=gray!20](5,1) --(5,2)-- (6,2)--(6,1)--cycle;
\draw  [draw, fill=gray!20](5,3) --(5,4)-- (6,4)--(6,3)--cycle;
\draw  [draw, fill=gray!20](5,5) --(5,6)-- (6,6)--(6,5)--cycle;
\draw  [draw, fill=gray!20](5,7) --(5,8)-- (6,8)--(6,7)--cycle;
\draw  [draw, fill=gray!20](5,9) --(5,10)-- (6,10)--(6,9)--cycle;
\draw  [draw, fill=gray!20](5,11) --(5,12)-- (6,12)--(6,11)--cycle;

%\draw  [draw, fill=gray!90](7,-1) rectangle (8,0);
\draw  [draw, fill=gray!20](7,1) --(7,2)-- (8,2)--(8,1)--cycle;
\draw  [draw, fill=gray!20](7,3) --(7,4)-- (8,4)--(8,3)--cycle;
\draw  [draw, fill=gray!20](7,5) --(7,6)-- (8,6)--(8,5)--cycle;
\draw  [draw, fill=gray!20](7,7) --(7,8)-- (8,8)--(8,7)--cycle;
\draw  [draw, fill=gray!20](7,9) --(7,10)-- (8,10)--(8,9)--cycle;
\draw  [draw, fill=gray!20](7,11) --(7,12)-- (8,12)--(8,11)--cycle;
%\draw  [draw, fill=gray!90](7,13) rectangle (8,14);

%\draw  [draw, fill=gray!90](9,-1) rectangle (10,0);
\draw  [draw, fill=gray!20](9,1) --(9,2)-- (10,2)--(10,1)--cycle;
\draw  [draw, fill=gray!20](9,3) --(9,4)-- (10,4)--(10,3)--cycle;
\draw  [draw, fill=gray!20](9,5) --(9,6)-- (10,6)--(10,5)--cycle;
\draw  [draw, fill=gray!20](9,7) --(9,8)-- (10,8)--(10,7)--cycle;
\draw  [draw, fill=gray!20](9,9) --(9,10)-- (10,10)--(10,9)--cycle;
%\draw  [draw, fill=gray!20](9,11) --(9,12)-- (10,12)--(10,11)--cycle;
%\draw  [draw, fill=gray!90](9,11) rectangle (10,12);
%\draw  [draw, fill=gray!90](9,13) rectangle (10,14);

\draw  [draw, fill=gray!20](11,1) --(11,2)-- (12,2)--(12,1)--cycle;
\draw  [draw, fill=gray!20](11,3) --(11,4)-- (12,4)--(12,3)--cycle;
\draw  [draw, fill=gray!20](11,5) --(11,6)-- (12,6)--(12,5)--cycle;
\draw  [draw, fill=gray!20](11,7) --(11,8)-- (12,8)--(12,7)--cycle;
\draw  [draw, fill=gray!20](11,9) --(11,10)-- (12,10)--(12,9)--cycle;

\draw  [draw, fill=gray!20](13,1) --(13,2)-- (14,2)--(14,1)--cycle;
\draw  [draw, fill=gray!20](13,3) --(13,4)-- (14,4)--(14,3)--cycle;
\draw  [draw, fill=gray!20](13,5) --(13,6)-- (14,6)--(14,5)--cycle;
\draw  [draw, fill=gray!20](13,7) --(13,8)-- (14,8)--(14,7)--cycle;
\draw  [draw, fill=gray!20](13,9) --(13,10)-- (14,10)--(14,9)--cycle;
\draw  [draw, fill=gray!20](13,11) --(13,12)-- (14,12)--(14,11)--cycle;

\draw  [draw, fill=gray!20](15,1) --(15,2)-- (16,2)--(16,1)--cycle;
\draw  [draw, fill=gray!20](15,3) --(15,4)-- (16,4)--(16,3)--cycle;
\draw  [draw, fill=gray!20](15,5) --(15,6)-- (16,6)--(16,5)--cycle;
\draw  [draw, fill=gray!20](15,7) --(15,8)-- (16,8)--(16,7)--cycle;
\draw  [draw, fill=gray!20](15,9) --(15,10)-- (16,10)--(16,9)--cycle;
\draw  [draw, fill=gray!20](15,11) --(15,12)-- (16,12)--(16,11)--cycle;
\draw  [draw, fill=gray!20](15,13) --(15,14)-- (16,14)--(16,13)--cycle;

\draw  [draw, fill=gray!20](17,3) --(17,4)-- (18,4)--(18,3)--cycle;
\draw  [draw, fill=gray!20](17,5) --(17,6)-- (18,6)--(18,5)--cycle;
\draw  [draw, fill=gray!20](17,7) --(17,8)-- (18,8)--(18,7)--cycle;
\draw  [draw, fill=gray!20](17,9) --(17,10)-- (18,10)--(18,9)--cycle;
\draw  [draw, fill=gray!20](17,11) --(17,12)-- (18,12)--(18,11)--cycle;
\draw  [draw, fill=gray!20](17,13) --(17,14)-- (18,14)--(18,13)--cycle;

\draw  [draw, fill=gray!20](19,5) --(19,6)-- (20,6)--(20,5)--cycle;
\draw  [draw, fill=gray!20](19,7) --(19,8)-- (20,8)--(20,7)--cycle;

\draw[draw,dashed, line width=0.7pt] (0,6)--(6,0)--(8,2)--(10,0)--(12,2)--(14,0)--(20,6)--(16,10)--(18,12)--(16,14)--(10,8)--(6,12)--cycle;

\draw[draw, line width=2pt] (0,4) -- (0,8) -- (2,8) -- (2,10) -- (4,10) -- (4,12) -- (8,12) -- (8,10) -- (10,10) --  (12,10) -- (12,12) -- (14,12) -- (14,14) -- (18,14) -- (18,8) -- (20,8) -- (20,4) -- (18,4) -- (18,2) -- (16,2) -- (16,0) -- (4,0)-- (4,2) -- (2,2) -- (2,4)
-- cycle;
\draw[draw, line width=2pt] (18,10)--(16,10);
\draw[draw, line width=2pt] (10,10)--(10,8);
\draw[draw, line width=2pt] (8,0)--(8,2);
\draw[draw, line width=2pt] (12,0)--(12,2);

%\path[draw, fill=white] (-0.1,-1.4)--(0.1,-1.4)--(0.1,-1.2)--(-0.1,-1.2)--cycle;
\path[draw, fill=black] (1,7) circle[radius=0.05cm];
\path[draw, fill=black] (3,9) circle[radius=0.05cm];
\path[draw, fill=black] (5,11) circle[radius=0.05cm];
\path[draw, fill=black] (7,11) circle[radius=0.05cm];
\path[draw, fill=black] (9,9) circle[radius=0.05cm];
\path[draw, fill=black] (11,9) circle[radius=0.05cm];
\path[draw, fill=black] (13,11) circle[radius=0.05cm];
\path[draw, fill=black] (15,13) circle[radius=0.05cm];
\path[draw, fill=black] (17,13) circle[radius=0.05cm];
\path[draw, fill=black] (17,11) circle[radius=0.05cm];
\path[draw, fill=black] (17,9) circle[radius=0.05cm];
\path[draw, fill=black] (19,7) circle[radius=0.05cm];
\path[draw, fill=black] (19,5) circle[radius=0.05cm];
\path[draw, fill=black] (17,3) circle[radius=0.05cm];
\path[draw, fill=black] (15,1) circle[radius=0.05cm];
\path[draw, fill=black] (13,1) circle[radius=0.05cm];
\path[draw, fill=black] (11,1) circle[radius=0.05cm];
\path[draw, fill=black] (9,1) circle[radius=0.05cm];
\path[draw, fill=black] (7,1) circle[radius=0.05cm];
\path[draw, fill=black] (5,1) circle[radius=0.05cm];
\path[draw, fill=black] (3,3) circle[radius=0.05cm];
\path[draw, fill=black] (1,5) circle[radius=0.05cm];

\path[draw, fill=white] (0,6) circle[radius=0.06cm];
\path[draw, fill=white] (0,8) circle[radius=0.06cm];
\path[draw, fill=white] (2,8) circle[radius=0.06cm];
\path[draw, fill=white] (2,10) circle[radius=0.06cm];
\path[draw, fill=white] (4,10) circle[radius=0.06cm];
\path[draw, fill=white] (4,12) circle[radius=0.06cm];
\path[draw, fill=white] (6,12) circle[radius=0.06cm];
\path[draw, fill=white] (8,12) circle[radius=0.06cm];
\path[draw, fill=white] (8,10) circle[radius=0.06cm];
\path[draw, fill=white] (10,8) circle[radius=0.06cm];
\path[draw, fill=white] (10,10) circle[radius=0.06cm];
\path[draw, fill=white] (12,10) circle[radius=0.06cm];
\path[draw, fill=white] (12,12) circle[radius=0.06cm];
\path[draw, fill=white] (14,12) circle[radius=0.06cm];
\path[draw, fill=white] (14,14) circle[radius=0.06cm];
\path[draw, fill=white] (16,14) circle[radius=0.06cm];
\path[draw, fill=white] (18,14) circle[radius=0.06cm];
\path[draw, fill=white] (18,12) circle[radius=0.06cm];
\path[draw, fill=white] (16,10) circle[radius=0.06cm];
\path[draw, fill=white] (18,10) circle[radius=0.06cm];
\path[draw, fill=white] (18,8) circle[radius=0.06cm];
\path[draw, fill=white] (20,8) circle[radius=0.06cm];
\path[draw, fill=white] (20,6) circle[radius=0.06cm];
\path[draw, fill=white] (20,4) circle[radius=0.06cm];
\path[draw, fill=white] (18,4) circle[radius=0.06cm];
\path[draw, fill=white] (18,2) circle[radius=0.06cm];
\path[draw, fill=white] (16,2) circle[radius=0.06cm];
\path[draw, fill=white] (16,0) circle[radius=0.06cm];
\path[draw, fill=white] (14,0) circle[radius=0.06cm];
\path[draw, fill=white] (12,2) circle[radius=0.06cm];
\path[draw, fill=white] (12,0) circle[radius=0.06cm];
\path[draw, fill=white] (10,0) circle[radius=0.06cm];
\path[draw, fill=white] (8,2) circle[radius=0.06cm];
\path[draw, fill=white] (8,0) circle[radius=0.06cm];
\path[draw, fill=white] (6,0) circle[radius=0.06cm];
\path[draw, fill=white] (4,0) circle[radius=0.06cm];
\path[draw, fill=white] (4,2) circle[radius=0.06cm];
\path[draw, fill=white] (2,2) circle[radius=0.06cm];
\path[draw, fill=white] (2,4) circle[radius=0.06cm];
\path[draw, fill=white] (0,4) circle[radius=0.06cm];

\draw[draw, line width=0.7pt] (6,8)--(7,7)--(6,6)--(5,7)--(6,8);
\path[draw, fill=white] (5.9,6.9)--(6.1,6.9)--(6.1,7.1)--(5.9,7.1)--cycle;
\path[draw, fill=black] (7,7) circle[radius=0.05cm];
\path[draw, fill=black] (5,7) circle[radius=0.05cm];
\path[draw, fill=white] (6,6) circle[radius=0.06cm];
\path[draw, fill=white] (6,8) circle[radius=0.06cm];

\path (6.1,6.9) node[anchor=north, font=\tiny]{$\bf z$};

\path (12.5,5.5) node[font=\tiny]{$\www$};
\draw[draw,dotted, line width=0.9pt] (11,5)--(12,4)--(14,6)--(13,7)--(11,5);

\path[draw, fill=black] (13,5) circle[radius=0.05cm];
\path[draw, fill=black] (11,5) circle[radius=0.05cm];
\path[draw, fill=black] (13,7) circle[radius=0.05cm];
\path[draw, fill=white] (12,6) circle[radius=0.06cm];
\path[draw, fill=white] (14,6) circle[radius=0.06cm];
\path[draw, fill=white] (12,4) circle[radius=0.06cm];

\path (12,4) node[font=\tiny, anchor=south]{$z'_\circ$};
\path (14,6) node[font=\tiny, anchor=south]{$z''_\circ$};
\path (12,6) node[font=\tiny, anchor=north]{$z_\circ$};

\path (13,5) node[font=\tiny, anchor=south]{$z_\bullet$};
\path  (13,7) node[font=\tiny, anchor=north]{$z''_\bullet$};
\path (11,5) node[font=\tiny, anchor=north]{$z'_\bullet$};

\end{tikzpicture}

\end{center}
\caption{Boundary contour around $\mathrm{Int}\mathcal{V}_\diamond^\delta$ (dashed): s-holomorphicity together with Riemann boundary conditions imply that at two consecutive white vertices on the boundary contour the values of $\HH^\delta$ coincide; contour around $z_{\diamond} \in\mathrm{Int}\mathcal{V}_\diamond^\delta$ (full); contour around $\www$ (dotted).
} \label{H_monodr}
\end{figure}
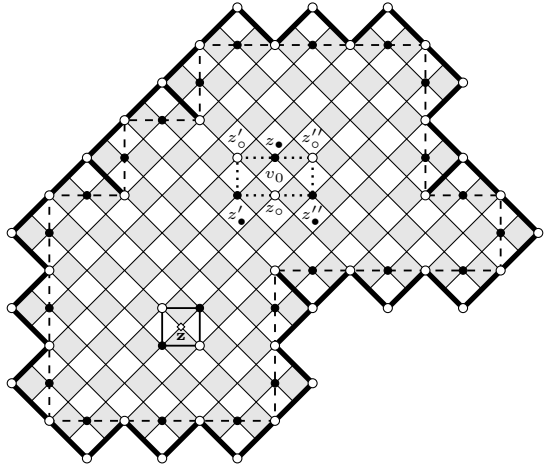

Let  an s-holomorphic function $\F_{\operatorname{s-hol}}^\delta$ solve a discrete boundary value problem ${\operatorname{RBVP}(\om^\delta,\www)}$. 
Let $H^\delta$ be the primitive of the square of $\F_{\operatorname{s-hol}}^\delta$ defined by~(\ref{def H}). 
 Note that $\F_{\operatorname{s-hol}}^\delta$ is not defined at $\www$. Hence, we need to check that the definition of $\HH^\delta$ is consistent around $\www$.
Given a discrete path $\gamma=(z_1,...,z_n)$ with vertices in $\mathcal{V}_{\bullet}^\delta\cup\mathcal{V}_{\circ}^\delta$ and a function $F^\delta_{\operatorname{s-hol}}$ on faces, we define  
\[\int_\gamma (F^\delta_{\operatorname{s-hol}})^2\,dz=\sum_{i=1}^{n-1} (F^\delta_{\operatorname{s-hol}}(a_{i,i+1}))^2(z_{i+1}-z_{i}),\] 
where $z_i, z_{i+1} \in \mathcal{V}_{\bullet}^\delta\cup\mathcal{V}_{\circ}^\delta$ are two vertices of the same square $a_{i,i+1}.$
We similarly define contour integrals around sets of vertices of type $\mathcal{V}_{\diamond}^\delta$. Remark~\ref{int} implies that for each $z_{\diamond} \in\mathrm{Int}\mathcal{V}_\diamond^\delta$ non-adjacent to the face $\www$  one has
$\oint_{z_\diamond} (F^\delta_{\operatorname{s-hol}})^2\,dz=0$.  Moreover, since $F^\delta_{\operatorname{s-hol}}$ satisfies Riemann boundary conditions, the contour integral along the boundary contour is zero, see Fig.~\ref{H_monodr}. Note that
\[\oint_{\www} (F^\delta_{\operatorname{s-hol}})^2\,dz=\oint_{\mathrm{Int}\mathcal{V}_\diamond^\delta} (F^\delta_{\operatorname{s-hol}})^2\,dz,\] where the contour around $\www$ is as shown on Fig.~\ref{H_monodr}. 
Therefore, $\HH^\delta$ is well defined. 
Then due to~\cite{CS} we have the following proposition.

\begin{prop}\label{proportiesH}  Let $z_\circ\in\mathcal{V}_{\circ}^\delta$ and $z_\bullet\in\mathcal{V}_{\bullet}^\delta$ be vertices of the square $\www$. The function $H^\delta$ satisfies the following properties:

\begin{enumerate}
\item[$\rhd$] if $z$ and $z'$ are two vertices of the same square, then $H^\delta_\circ(z)\geq H^\delta_\bullet(z')$;
\item[$\rhd$] $\HH^\delta$ satisfies Dirichlet boundary conditions: $\HH^\delta_\circ(z)=0$ for any $z\in\partial\mathcal{V}_\circ^\delta$, and $\HH^\delta_\bullet(z')=0$ for any $z'\in\tilde{\partial}\mathcal{V}_\bullet^\delta$;
\item[$\rhd$] $\HH^\delta_\bullet$ has a "nonpositive inner normal derivative", i.e. $\HH^\delta_\bullet(w)\leq0$ for any vertex $w\in\mathcal{V}_\bullet^\delta$ adjacent to a boundary vertex;
\item[$\rhd$] $\HH^\delta_\circ$ is leap-frog subharmonic on $\mathcal{V}_{\circ}^\delta\smallsetminus\{z_\circ\}$, while $\HH^\delta_\bullet$ is leap-frog superharmonic on $\mathcal{V}_{\bullet}^\delta\smallsetminus\{z_\bullet\}$.
\end{enumerate}
\end{prop}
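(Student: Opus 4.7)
The plan is to verify the four bullets in the order $(1)\to(2)\to(3)\to(4)$, using three ingredients already established above: the projection identity $F^\delta_{\operatorname{s-hol}}(a)=\tau(a)\cdot r_a$ with $r_a\in\mathbb{R}$ (the defining property of s-holomorphicity); the Riemann boundary condition $\im[F^\delta_{\operatorname{s-hol}}(z)\sqrt{n(z)}]=0$ for $z\in\partial\mathcal{V}_\diamond^\delta$; and the well-definedness of $\HH^\delta$, justified in the paragraph preceding the statement.

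For~(1), substituting $F^\delta_{\operatorname{s-hol}}(a)=\tau(a)r_a$ into~(\ref{def H}) gives
\[
\HH^\delta_\circ(z_1)-\HH^\delta_\bullet(z_2)=\tau(a)^2\,r_a^{\,2}\,(z_1-z_2).
\]
The s-holomorphic colour assignment $\tau\colon\{\bb{0},\bb{1},\ww{0},\ww{1}\}\to\{1,i,\lambda,\bar\lambda\}$ is matched to the bicolouring of $\mathcal{V}_\circ^\delta\cup\mathcal{V}_\bullet^\delta$ so that, for $z_1\in\mathcal{V}_\circ^\delta$ and $z_2\in\mathcal{V}_\bullet^\delta$ belonging to the same square $a$, the product $\tau(a)^2(z_1-z_2)$ is a positive real number; this is a direct case-by-case verification using Fig.~\ref{rotation}, and it yields the inequality. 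For~(2), the Riemann condition is equivalent to $F^\delta_{\operatorname{s-hol}}(z)^2\cdot n(z)\in\mathbb{R}_{\ge 0}$ for $z\in\partial\mathcal{V}_\diamond^\delta$. Because the oriented tangent to $\partial\om^\delta$ at a boundary vertex is $-i\,n(z)$, the increments $(F^\delta_{\operatorname{s-hol}})^2(z_{i+1}-z_i)$ along the boundary contour are purely imaginary, so the real-valued $\HH^\delta$ is constant along this closed contour; I normalise the additive constant so that this common value equals~$0$, obtaining $\HH^\delta_\circ\equiv 0$ on $\partial\mathcal{V}_\circ^\delta$ and $\HH^\delta_\bullet\equiv 0$ on $\tilde\partial\mathcal{V}_\bullet^\delta$ simultaneously. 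Property~(3) then follows at once: if $w\in\mathcal{V}_\bullet^\delta$ shares a square with $z\in\partial\mathcal{V}_\circ^\delta$, bullets~(1) and~(2) together give $\HH^\delta_\bullet(w)\le\HH^\delta_\circ(z)=0$.

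Property~(4) is the main obstacle. For $z\in\mathrm{Int}\mathcal{V}_\circ^\delta\setminus\{z_\circ\}$, I would expand each neighbour increment $\HH^\delta_\circ(w)-\HH^\delta_\circ(z)$ as a sum of two $\HH$-increments of~(\ref{def H}) through a common black vertex; substituting the projection formula and regrouping by $\tau$-type collapses the four contributions, via the identity from Remark~\ref{int}, into a non-negative multiple of $|F^\delta_{\operatorname{s-hol}}(z)|^2$. The analogous computation yields $[\Delta^\delta_\bullet\HH^\delta_\bullet](z)\le 0$ for $z\in\mathrm{Int}\mathcal{V}_\bullet^\delta\setminus\{z_\bullet\}$ not adjacent to the boundary. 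The delicate case is $z\in\mathrm{Int}\mathcal{V}_\bullet^\delta$ adjacent to $\partial\om^\delta$: one or two of the four surrounding squares then lie in $\partial\om^\delta$, and the missing values of $\HH^\delta_\bullet$ at neighbours outside $\om^\delta$ are replaced by the Dirichlet value~$0$ from~(2). The modified weights $c_{zw}=2(\sqrt 2-1)$ on boundary edges are designed precisely so that, after this substitution, the leap-frog Laplacian inequality retains the correct sign, following~\cite[\S3.6]{CS}. The only step genuinely specific to hedgehog domains is to carry out this verification for each of the four corner patterns $\partial^{\pm}_{\mathrm{int}}\om^\delta$, $\partial^{\sharp}_{\mathrm{int}}\om^\delta$ and $\partial^{\flat}_{\mathrm{int}}\om^\delta$; these checks are mechanical but pattern-dependent, and they constitute the main technical obstacle.
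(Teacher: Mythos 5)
Your bullets (1) and (3), and the interior computation in (4), are in line with the standard arguments (for the record, the paper itself disposes of the whole proposition by citing \cite[Section 3.3]{CS}, with the boundary weights $2(\sqrt2-1)$ explained by \cite[Section 3.6]{CS}). The genuine problem is your argument for (2). An increment of $\HH^\delta$ along the boundary contour of Fig.~\ref{H_monodr} is \emph{not} of the form $F^\delta_{\operatorname{s-hol}}(z)^2(z_{i+1}-z_i)$ with a tangential step: by~(\ref{def H}) it equals $(F^\delta_{\operatorname{s-hol}}(a_{i,i+1}))^2(z_{i+1}-z_i)$, where $F^\delta_{\operatorname{s-hol}}(a)=\mathrm{Proj}_{\tau(a)}[F^\delta_{\operatorname{s-hol}}(z)]$ is only a projection of the boundary value and $z_{i+1}-z_i$ is a \emph{diagonal} of the square $a$, at $45^\circ$ to $\partial\om^\delta$. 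Such increments are automatically real (that is exactly your bullet (1)) and are generically nonzero: on the upper boundary of Fig.~\ref{Fbdr} the step from $\hat z\in\partial\mathcal{V}^\delta_\circ$ to the inner vertex $z\in\mathcal{V}^\delta_\bullet$ equals a nonpositive multiple of $(\check\F^\delta_{\operatorname{s-hol}}(\uR))^2$, vanishing only when $\check\F^\delta_{\operatorname{s-hol}}(\uR)$ does. Hence $\HH^\delta$ is not constant along the contour; if it were, the near-boundary values of $\HH^\delta_\bullet$ would all vanish, contradicting the (generically strict) inequality in your own bullet (3). What is true, and what the proposition needs, is that the \emph{two-step} circ-to-circ sums vanish: recombining the two projections around a vertex $z\in\partial\mathcal{V}^\delta_\diamond$ gives a quantity proportional to $\im[F^\delta_{\operatorname{s-hol}}(z)^2\,n(z)]$, which is killed by the Riemann condition -- this is the computation of Proposition~\ref{half_dirichlet} (e.g.\ $(\F^\delta(\uI))^2=-(\F^\delta(\uR))^2$ on the upper boundary), not a tangency argument applied to each increment separately.

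The second half of (2) cannot be reached by your route at all. If one extends $\HH^\delta_\bullet$ to $\tilde{\partial}\mathcal{V}^\delta_\bullet$ through the outer squares via~(\ref{def H}), the Riemann condition forces $F^\delta_{\operatorname{s-hol}}(z)\parallel\overline{\sqrt{n(z)}}$ with $n(z)\in\{\pm\lambda,\pm\bar\lambda\}$, so the projections onto the outer squares (directions $1$ and $i$) do \emph{not} vanish; with the explicit boundary values of Section~\ref{coupl} one finds, at the apex of Fig.~\ref{Fbdr}, $\HH^\delta_\bullet(w')=-\sqrt2\,\delta\,(\sqrt2-1)^2(\check\F^\delta_{\operatorname{s-hol}}(\uR))^2\le 0$, strictly negative in general, while $\HH^\delta_\circ(\hat z)=0$. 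The statement must therefore be read as in \cite[Section 3.6]{CS}: $\HH^\delta_\bullet$ is \emph{set} equal to zero on $\tilde{\partial}\mathcal{V}^\delta_\bullet$, and the boundary conductances $2(\sqrt2-1)$ in $\Delta^\delta_\bullet$ are introduced precisely so that this extension stays superharmonic, using $\HH_\bullet\le\HH_\circ=0$ along the boundary. Consequently your boundary case of (4), which ``replaces the missing values by the Dirichlet value $0$ from (2)'', invokes (2) where it is not available: that replacement \emph{is} the boundary-modification trick, and it is the one step that genuinely has to be imported from (or redone as in) \cite{CS} -- which is exactly how the paper proves the proposition.
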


\begin{proof}
All the statements follow directly from~\cite[Section 3.3]{CS}.
\end{proof}

\begin{remark}
The function $\HH^\delta$ satisfies the maximum principle: if $\widetilde{\mathcal{V}}^\delta\subset\mathcal{V}^\delta$ does not contain $z_\circ$ (respectively  $z_\bullet$), then
\[
\max_{z\in\widetilde{\V{}}^\delta}\HH^\delta(z)=\max_{z\in\partial\widetilde{\V{}}^\delta}\HH_\circ^\delta(z) \quad (respectively \,\min_{z\in\widetilde{\V{}}^\delta}\HH^\delta(z)=\min_{z\in\partial\widetilde{\V{}}^\delta}\HH_\bullet^\delta(z) ).
\]
\end{remark}

\begin{prop}
A discrete Riemann boundary value problem ${\operatorname{RBVP}(\om^\delta,\www)}$ has a unique solution.
\end{prop}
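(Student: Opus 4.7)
The plan is to prove uniqueness first and then deduce existence from a dimension count in the finite-dimensional space of s-holomorphic functions.

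For uniqueness, suppose $F_1^\delta, F_2^\delta$ are two solutions of $\operatorname{RBVP}(\om^\delta,\www)$ and set $F^\delta:=F_1^\delta-F_2^\delta$. Since the inhomogeneous condition $[\bar{\partial}F^\delta_{\operatorname{s-hol}}](\www)=\tfrac{\lambda}{4\delta^2}$ is the same for both, $F^\delta$ is s-holomorphic on the \emph{whole} set $(\clom^\delta)\sqcup\mathcal{V}_\diamond^\delta$, including at $\www$, and still satisfies the Riemann-type boundary condition $\im[F^\delta(z)\sqrt{n(z)}]=0$ on $\partial\mathcal{V}_\diamond^\delta$. Since $F^\delta$ has no pole, the monodromy issue at $\www$ discussed before Proposition~\ref{proportiesH} disappears, and the primitive $H^\delta$ of $(F^\delta)^2$ defined by~\eqref{def H} is globally well defined on $\mathcal{V}_\bullet^\delta\sqcup\mathcal{V}_\circ^\delta$ (up to an additive constant).

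Next I would apply Proposition~\ref{proportiesH} (whose first three items do not rely on the presence of a singular face once $F^\delta$ is s-holomorphic everywhere) to this $H^\delta$. The Riemann-type boundary conditions give $H^\delta_\circ\equiv 0$ on $\partial\mathcal{V}_\circ^\delta$ and $H^\delta_\bullet\equiv 0$ on $\tilde{\partial}\mathcal{V}_\bullet^\delta$. Combined with the sub/superharmonicity statements of Proposition~\ref{proportiesH} (now valid on \emph{all} of $\mathcal{V}_\circ^\delta$, resp.\ $\mathcal{V}_\bullet^\delta$, since there is no distinguished vertex $z_\circ$ or $z_\bullet$), the maximum principle for the leap-frog Laplacians forces $H^\delta_\circ\le 0$ everywhere and $H^\delta_\bullet\ge 0$ everywhere. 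The inequality $H^\delta_\circ(z)\ge H^\delta_\bullet(z')$ for two vertices of the same square then pinches $H^\delta_\circ=H^\delta_\bullet=0$ identically.

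From $H^\delta\equiv 0$ and the defining relation $H^\delta_\bullet(z_2)-H^\delta_\circ(z_1)=(F^\delta(a))^2(z_2-z_1)$, taking for each face $a$ two adjacent vertices $z_1\in\mathcal{V}_\circ^\delta$, $z_2\in\mathcal{V}_\bullet^\delta$ with $z_2-z_1\ne 0$, we get $(F^\delta(a))^2=0$ for every face $a\in\clom^\delta$, hence $F^\delta\equiv 0$ on faces. Since a vertex value of an s-holomorphic function is uniquely determined by its projections onto the four directions $\{1,i,\lambda,\bar\lambda\}$ (the values on the four adjacent faces), this also gives $F^\delta\equiv 0$ on $\mathcal{V}_\diamond^\delta$. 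This proves $F_1^\delta=F_2^\delta$.

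Finally, for existence I would invoke linear algebra: the space of s-holomorphic functions on $(\clom^\delta\setminus\{\www\})\sqcup\mathcal{V}_\diamond^\delta$ subject to the homogeneous Riemann boundary conditions is finite-dimensional, and the map sending such an $F^\delta$ to the pair $\bigl(\,[\bar{\partial}F^\delta](\www),\; \im[F^\delta|_{\partial\mathcal{V}_\diamond^\delta}\sqrt{n}]\,\bigr)$ is linear. The uniqueness statement above is precisely the triviality of its kernel, and a direct count of degrees of freedom versus constraints shows this map is a bijection onto its target; thus the inhomogeneous problem with right-hand side $\lambda/(4\delta^2)$ at $\www$ and zero imaginary part on the boundary has a (unique) solution. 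The main obstacle is the sign bookkeeping in step two: one has to check carefully that the statements of Proposition~\ref{proportiesH}, originally formulated on $\mathcal{V}_\circ^\delta\setminus\{z_\circ\}$ and $\mathcal{V}_\bullet^\delta\setminus\{z_\bullet\}$, extend to the full vertex set when the s-holomorphic function has no pole, so that the maximum principle really yields $H^\delta\equiv 0$.
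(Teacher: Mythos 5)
Your uniqueness argument is essentially the paper's: the difference of two solutions is s-holomorphic everywhere (no pole at $\www$), its primitive $H^\delta$ is well defined, the Riemann-type boundary conditions give Dirichlet data for $H^\delta$, and the maximum principle for the leap-frog sub/superharmonic parts pinches $H^\delta\equiv 0$, hence the difference vanishes. The details you supply (the inequality $H^\delta_\circ\ge H^\delta_\bullet$ on each square squeezing both to zero, and recovering $F^\delta\equiv 0$ on faces and then on $\mathcal{V}_\diamond^\delta$) are a correct expansion of the paper's terse "maximum principle" sentence.

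The existence part, however, has a genuine gap. You reduce existence to the assertion that "a direct count of degrees of freedom versus constraints shows this map is a bijection onto its target," but you never perform that count, and it is not a routine step: the unknowns are the complex values on $\mathcal{V}_\diamond^\delta$, the constraints are one real s-holomorphicity relation per pair of diamond vertices sharing a square, one real Riemann condition per vertex of $\partial\mathcal{V}_\diamond^\delta$, plus the normalisation at $\www$, and whether these numbers match is exactly the kind of Euler-characteristic bookkeeping that can fail or require care near slits and corners of a hedgehog domain. Injectivity of a linear map only yields surjectivity after one has verified the system is square, so without the count your argument does not produce a solution. The paper avoids this entirely: existence is proved constructively in Section~\ref{coupl}, by taking the dimer coupling function $\frac1\delta C_{\om^\delta}(\cdot,\www)$ (which exists because the Kasteleyn matrix is invertible), passing to its s-holomorphic version, and modifying it near the boundary so that it satisfies the Riemann-type conditions everywhere on $\partial\mathcal{V}_\diamond^\delta$ (Proposition~\ref{second_half_dirichlet}). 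So either carry out the dimension count honestly or, as the paper does, exhibit a solution explicitly.
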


\begin{proof}
The existence of such a function will be shown in Section~\ref{coupl}. Let us prove that the solution is unique. Let $F_1^\delta$ and $F_2^\delta$ be two different solutions. Then $F_1^\delta-F_2^\delta$ is s-holomorphic on $\clom^\delta\sqcup\mathcal{V}_{\diamond}^\delta$ and $\im[(F_1^\delta-F_2^\delta)(z)\cdot\sqrt{(n(z))}]=0, \quad z\in\partial\mathcal{V}_{\diamond}^\delta$. Hence one can define the primitive of the square of the difference. The maximum principle for the primitive tells us that such a primitive is identically zero. Therefore $F_1^\delta-F_2^\delta$ is identically zero.
\end{proof}

\subsection{The continuous analogue of the functions $\F_{\operatorname{s-hol}}^\delta$ and $\HH^\delta$}\label{cont}
In this section we describe the continuous analogue of the functions $\F_{\operatorname{s-hol}}^\delta$~and~$\HH^\delta$.
Also, we give a characterisation of the holomorphic solution of the Riemann-type boundary value problem in terms of the primitive of its square.

\begin{prop}\label{f} Let $\om$ be a bounded simply connected domain with smooth boundary, and %$v\in\om$.
 $v$ be a point in the interior of $\om$. Then for any $\lambda\in\mathbb{C}$ %complex sign $\lambda$ (i.e. $\lambda\in\mathbb{C}$ and $|\lambda|=1$) 
there exists a unique holomorphic function $f_{{\om}}^v$ such that:
\begin{enumerate}
\item[(f1)] $f_\om^v(z)=\frac{1}{2\pi}\cdot\frac{\lambda}{z-v}+O(1)$ in a vicinity of the point $v$;
\item[(f2)] $\im[f_\om^v(z){\sqrt{(n(z))}}]=0, \quad z\in\dom$.
\end{enumerate}
\end{prop}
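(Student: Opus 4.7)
The plan is to reduce the problem to the unit disk via the Riemann mapping theorem and to solve the disk case explicitly by Schwarz reflection on a double cover. \textbf{Step 1 (conformal reduction).} Let $\phi\colon\om\to\mathbb{D}$ be the conformal map with $\phi(v)=0$ and $\phi'(v)>0$. Smoothness of $\partial\om$ together with Kellogg's theorem ensures that $\phi$ extends to a $C^1$-diffeomorphism of the closures, so in particular $\phi'$ is continuous and nonvanishing on $\overline{\om}$. Write $f(z)=g(\phi(z))\sqrt{\phi'(z)}$, choosing a single-valued branch of $\sqrt{\phi'}$ on the simply connected domain $\om$. The boundary identity $n(w)=n(z)\,\phi'(z)/|\phi'(z)|$ on $\partial\om$ gives $\phi'(z)\,n(z)=|\phi'(z)|\,n(w)$, whence
\[
f(z)\sqrt{n(z)}=g(w)\sqrt{|\phi'(z)|}\,\sqrt{n(w)},\qquad w=\phi(z)\in\partial\mathbb{D}.
\]
Since $\sqrt{|\phi'(z)|}$ is real and positive, condition (f2) on $\om$ is equivalent to the analogous condition for $g$ on $\mathbb{D}$, and (f1) translates into $g(w)=a/w+O(1)$ near $0$ with $a:=\lambda\sqrt{\phi'(v)}/(2\pi)$.

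\textbf{Step 2 (double cover and Schwarz reflection).} On $\partial\mathbb{D}$ the outer normal is $n(w)=w$, so the boundary condition reads $\im[g(w)\sqrt w]=0$. To unfold the square root, substitute $w=\zeta^2$ and set $G(\zeta):=g(\zeta^2)\,\zeta$. Then $G$ is meromorphic on $\mathbb{D}$ (in $\zeta$), odd (because $g$ is single-valued in $w=\zeta^2$), real on $|\zeta|=1$, and its only singularity is a simple pole at $\zeta=0$ with residue $a$. Real boundary values extend $G$ meromorphically to $\mathbb{C}\cup\{\infty\}$ via $G(\zeta)=\overline{G(1/\bar\zeta)}$, and the reflected pole at $\infty$ contributes a leading term $\bar a\,\zeta$. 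Hence $G(\zeta)-a/\zeta-\bar a\zeta$ is an entire bounded function, so it is constant; reality on $|\zeta|=1$ forces the constant to be real, and the oddness of $G$ forces it to vanish. Thus $G(\zeta)=a/\zeta+\bar a\,\zeta$ is uniquely determined.

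\textbf{Step 3 (conclusion).} Undoing the substitutions, $g(w)=a/w+\bar a$ is the unique solution on the disk, and
\[
f_{\om}^v(z)=\Bigl(\frac{a}{\phi(z)}+\bar a\Bigr)\sqrt{\phi'(z)}
\]
is the unique holomorphic function on $\om$ satisfying (f1) and (f2), proving Proposition~\ref{f}.

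The main obstacle is the careful bookkeeping of the square-root branches in Step 1: one must choose compatible branches of $\sqrt{\phi'}$ on $\om$ (using simple connectivity) and of $\sqrt{n}$ along $\partial\om$, and verify that the covariance identity holds with the same branches uniformly along the boundary. It is exactly here that the smoothness of $\partial\om$ is used, via Kellogg's theorem, so that $n(z)$ and $\phi'(z)$ depend continuously on $z\in\overline{\om}$ and the consistent choice of branches is actually available.
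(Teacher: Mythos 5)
Your proof is correct, and it reaches the same explicit formula $f_\om^v(z)=\frac{1}{2\pi}\bigl(\frac{\lambda}{\phi(z)}+\bar\lambda\bigr)(\phi'(v))^{1/2}(\phi'(z))^{1/2}$ as the paper, but by a genuinely different route in two respects. For existence the paper performs the same conformal transport with covariance $(\tfrac12,\tfrac12)$, but simply states that $f_{\mathbb{D}}^0(w)=\frac{1}{2\pi}(\frac{\lambda}{w}+\bar\lambda)$ is ``easy to check''; you instead \emph{derive} the disk kernel by unfolding the square root via $w=\zeta^2$, reflecting the odd function $G(\zeta)=g(\zeta^2)\zeta$ across $|\zeta|=1$, and applying Liouville, which is a nice self-contained way to see where $a/w+\bar a$ comes from. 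For uniqueness the approaches really diverge: the paper argues directly on $\om$ via the contour identity $0=\int_{\dom}(f_1-f_2)^2\,dz=i\int(f_1-f_2)^2(z(t))\,n(z(t))\,dt$ together with the pointwise positivity $(f_1-f_2)^2 n\ge 0$ forced by (f2), so it needs no boundary regularity of the Riemann map and is the template reused later (Lemma 4.6 and the discrete maximum-principle argument for $H^\delta$); you instead transfer uniqueness to the disk, which requires the Kellogg/Warschawski boundary regularity of $\phi$ and $\phi'\neq 0$ on $\overline{\om}$ to carry condition (f2) over (regularity the paper also implicitly uses for existence, though not for uniqueness). One small simplification available to you: since $\im[f\sqrt{n}]=0$ is insensitive to the sign of the square root, the pointwise branch ambiguity of $\sqrt{n}$ never matters, so the ``careful bookkeeping of branches along $\dom$'' you flag as the main obstacle reduces to choosing a single-valued branch of $\sqrt{\phi'}$ on the simply connected $\om$, which you already do; both proofs share the same implicit assumption that solutions have continuous boundary values so that (f2), Cauchy's theorem, respectively Schwarz reflection, can be applied.
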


\begin{proof} Let $\phi$ be a conformal mapping of the domain $\om$ onto the unit disk $\mathbb{D}$ such that $v$ is mapped onto $0$ and $\phi'(v)>0$. Note that if $f_{\mathbb{D}}^0$ is a solution in the unit disk with singularity at zero, then 
\begin{equation} \label{notSmooth}
f_\om^v(z):= f_{\mathbb{D}}^0(\phi(z))\cdot(\phi'(z))^\frac12\cdot(\phi'(v))^\frac12
\end{equation}
solves our boundary value problem. It is easy to check that $f_{\mathbb{D}}^0(z)=\frac{1}{2\pi}(\frac{\lambda}{z}+\bar{\lambda}).$

Let $f_1$ and $f_2$ be two different solutions. 
Let $z(t)$ be the natural parametrization of $\dom$. Note that the difference $f_1-f_2$ is holomorphic on $\om$, then
\[
0=\int_{\dom} (f_1-f_2)^2(z)\,dz=\int_{\dom} (f_1-f_2)^2(z(t))\cdot in(z(t))\,dt.
\] The boundary conditions imply that $(f_1-f_2)^2(z(t))\cdot n(z(t))\geq 0$. So, $f_1-f_2=0$ on the boundary of $\om$.
%where the second inequality follows from the boundary conditions.
Thus, $f_1= f_2$ in $\om$. 
\end{proof}

The previous proposition also holds if $v$ is a boundary point of $\om$.

\begin{lemma}\label{fboundary} Let $\om$ be a bounded simply connected domain with smooth boundary, and $v\in\dom$. %(straight horizontal part of the boundary). 
Then there exists a unique holomorphic function $f_{{\om}}^v$ such that:
\begin{enumerate}
\item[(F1)] $f_\om^v(z)=\frac{1}{2\pi}\cdot\frac{i\sqrt{n(v)}}{z-v}+O(1)$ in a vicinity of the point $v$;
\item[(F2)] $\im[f_\om^v(z){\sqrt{(n(z))}}]=0, \quad z\in\dom\smallsetminus\{v\}$.
\end{enumerate}
%where $\lambda=\lambda(v)$ is a complex sign uniquely determined by the boundary point $v$.
\end{lemma}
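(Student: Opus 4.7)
The statement is the boundary-point analogue of Proposition~\ref{f}, so I would follow the same two-step strategy: construct an explicit solution in a half-plane model and transplant it, then argue uniqueness via a contour-integral identity coming from the primitive of $f^2$.

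For existence, let $\phi\colon\om\to\mathbb{H}$ be a conformal map sending $v$ to $0$; by smoothness of $\dom$ this map extends smoothly to a neighbourhood of $v$ on the boundary. In the upper half-plane the outer normal is $n\equiv -i$, so $\sqrt{n}=\bar\lambda$, and the condition $\im[\bar\lambda\cdot f]=0$ on $\mathbb{R}\setminus\{0\}$ together with a pole at $0$ is solved by the explicit function
\[
f_{\mathbb{H}}^{0}(w)\;=\;\frac{\lambda}{2\pi\,w},
\]
which satisfies $\bar\lambda f_{\mathbb{H}}^{0}(x)=\tfrac{1}{2\pi x}\in\mathbb{R}$ for $x\in\mathbb{R}\setminus\{0\}$. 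I would then set
\[
f_{\om}^{v}(z)\;:=\;c\cdot f_{\mathbb{H}}^{0}(\phi(z))\cdot\sqrt{\phi'(z)}
\]
with $c$ chosen so that the singular part of $f_\om^v$ matches~(F1). Using $n(z)=\overline{\phi'(z)}\,|\phi'(z)|^{-1}\,n_{\mathbb{H}}(\phi(z))$ one checks
$\sqrt{n(z)}\sqrt{\phi'(z)}=\sqrt{|\phi'(z)|}\sqrt{n_{\mathbb{H}}(\phi(z))}$ on the boundary, so (F2) transforms correctly; expanding $\phi(z)=\phi'(v)(z-v)+O((z-v)^2)$ and matching the coefficient $\tfrac{1}{2\pi}i\sqrt{n(v)}$ pins down $c=|\phi'(v)|^{1/2}$. (The construction is independent of the choice of branches of the square roots provided they are made consistently, in the same spirit as~(\ref{notSmooth}).)

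For uniqueness, let $f_1,f_2$ be two solutions and set $g:=f_1-f_2$. By (F1) the singular parts at $v$ cancel, so $g$ is bounded on $\om$ near $v$; since $g\sqrt{n(\cdot)}$ is real on $\dom\setminus\{v\}$, the Schwarz reflection principle (applied after straightening $\dom$ locally at $v$ via $\phi$ and using the real-valued nature of $\bar\lambda g(\phi^{-1}(\cdot))$ on $\mathbb{R}\setminus\{0\}$) extends $g$ holomorphically across $\dom$ on both sides of $v$, so in particular $g$ is continuous on $\overline\om$. Then by Cauchy's theorem $\oint_{\dom}g(z)^{2}\,dz=0$; parametrising the boundary and writing $dz=i\,n(z)\,|dz|$ yields
\[
0\;=\;\oint_{\dom}g(z)^{2}\,dz\;=\;i\int_{\dom}\bigl(g(z)\sqrt{n(z)}\bigr)^{2}\,|dz|.
\]
Since $g\sqrt{n}$ is real on $\dom$, the integrand is non-negative, so $g\sqrt{n}\equiv 0$ on $\dom$; hence $g\equiv 0$ on $\dom$, and by the identity principle $g\equiv 0$ on $\om$.

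The main technical point I expect to require care is the regularity of $g$ at $v$: the hypothesis (F1) gives only an $O(1)$ remainder, so one has to argue that the two $O(1)$'s are in fact continuous at $v$ in order to apply Cauchy's theorem on $\overline\om$. This is precisely where smoothness of $\dom$ enters, letting one invoke Schwarz reflection through $\dom\setminus\{v\}$ to conclude that the bounded holomorphic function $g$ extends across $v$. Everything else is a direct adaptation of the proof of Proposition~\ref{f}.
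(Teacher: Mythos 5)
Your proposal is correct and follows essentially the same route as the paper: an explicit solution in a model domain transplanted by the conformal covariance factor $(\phi'(z))^{1/2}(\phi'(v))^{1/2}$ for existence, and uniqueness via the nonnegativity of $\oint_{\dom}(f_1-f_2)^2\,dz$ exactly as in Proposition~\ref{f}. The only differences are cosmetic (you use the upper half-plane with $\phi(v)=0$ where the paper uses the disk with $\phi(v)=n(v)$ and $f_{\mathbb{D}}^w(z)=\frac{i\sqrt{w}}{2\pi(z-w)}$), and you usefully spell out the regularity/reflection step at $v$ that the paper leaves implicit.
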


\begin{proof}
The uniqueness of the solution can be proven using the same arguments as in Proposition~\ref{f}. To construct $f_\om^v$, consider a holomorphic map $\phi$ of $\om$ onto $\mathbb{D}$ such that $\phi(v)=n(v)$ and hence $\phi'(v)>0$. 
As in Proposition~\ref{f}, we can define 
\begin{equation} \label{notSmooth_b}
f_\om^v(z):= f_{\mathbb{D}}^{\phi(v)}(\phi(z))\cdot(\phi'(z))^\frac12\cdot(\phi'(v))^\frac12,
\end{equation}
where the function 
$f_{\mathbb{D}}^w(z)=\frac{i\sqrt{w}}{2\pi\cdot(z-w)}$ solves a similar boundary value problem in $\mathbb{D}$.
%Let $\phi$ be a conformal mapping of the domain $\om$ onto the unit disk $\mathbb{D}$ such that $\phi'(v)>0$. %Note that if $f_{\mathbb{D}}^{w}$ is a solution in the unit disk with singularity at $w\in \partial\mathbb{D}$, then 
%As in Proposition~\ref{f} we can write
%\begin{equation} \label{notSmooth_b}
%f_\om^v(z):= f_{\mathbb{D}}^{\phi(v)}(\phi(z))\cdot(\phi'(z))^\frac12\cdot(\phi'(v))^\frac12.
%\end{equation}
%solves boundary value problem (F1)--(F2), where
%Set $w=\phi(v)$ and fix  $\mu\in\mathbb{C}$ such that $\mu-w\bar\mu\neq0.$ Note that for any $z\in\partial\mathbb{D}$ we have $n(z)=z$ and $(\mu-z\bar\mu) \ \| \ i\sqrt{z}$. It is easy to check that 
%\[
%f_{\mathbb{D}}^w(z)=\frac{1}{2\pi\cdot|\mu-w\bar\mu|}\cdot\frac{\mu-w\bar\mu}{z-w}.%\frac{1}{2\pi\cdot|\mu-w\bar\mu|}\cdot\left(\frac{\mu}{z-w}+\frac{\bar\mu}{1-\bar{w}z}\right)
%\quad \text{and} \quad \lambda=\frac{\mu-w\bar\mu}{|\mu-w\bar\mu|},
%\]
%note that one can choose $\mu$ such that $\mu-w\bar\mu\neq0.$
%The uniqueness of the solution can be proven using the same arguments as in Proposition~\ref{f}.
\end{proof}

Due to~\cite[Section 3.3.2]{CHI} one can give a characterisation of the holomorphic solution of the boundary value problem (f1)--(f2) in terms of the primitive of its square. This characterisation will be used 
%to identify the limit 
in the proof of Theorem~\ref{convF} dedicated to one of the main convergence results.

\begin{prop}
Let $\om$ be a simply connected domain, and $v$ be a point in the interior of $\om$. %$v\in\om$. 
Let a holomorphic function $f$ solve the boundary value problem described in Proposition~\ref{f} (or $f$ is defined by~(\ref{notSmooth}), if $\om$ is not smooth). Define two harmonic functions
\[
h:= \int Re[f^2(z)\,dz] \quad and \quad h_\star:= \int Re\left[ \left( f(z)-\frac{1}{2\pi}\cdot\frac{\lambda}{z-v} \right) ^2\,dz\right].
\]
Then the following holds:
\begin{enumerate}
\item[(h1)] $h$ satisfies Dirichlet boundary conditions, since $h$ is defined up to an additive constant, we can assume that $h\equiv 0$ on $\dom$;
%\item[$\rhd$] 
\item[(h2)] $\partial_n h\geq 0$ (outer normal derivative is nonnegative);
\item[(h3)] $h_\star$ is bounded in a vicinity of $v$.
\end{enumerate}
Moreover, if $h$ and $h_\star$ satisfy all these conditions, then $f$ coincides with the solution $f_\om^v$ defined in Proposition~\ref{f}.
\end{prop}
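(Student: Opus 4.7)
The plan is to establish both the forward implication and the uniqueness characterization by relating the first derivatives of $h$ to boundary values of $f^2 n$. Writing $H$ for the holomorphic function with $\re H = h$ (so that $H' = f^2$) and using the Cauchy--Riemann equations, one has $\partial_x h = \re(f^2)$ and $\partial_y h = -\im(f^2)$. Setting $n = n_1 + in_2$ for the outer unit normal and $\tau = in$ for the counter-clockwise tangent, a short computation gives
\[
\partial_n h = \re(f^2 n),\qquad \partial_\tau h = -\im(f^2 n).
\]
These identities form the bridge between the analytic data on $f$ and the boundary behaviour of $h$.

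For the forward direction I would argue as follows. Condition (f2) says $f(z)\sqrt{n(z)}\in\mathbb{R}$ on $\dom$, equivalently $f^2 n \in \mathbb{R}_{\geq 0}$ on $\dom$ (it is the square of a real quantity). The tangential identity then gives $\partial_\tau h\equiv 0$ on $\dom$, so $h$ is constant along $\dom$ and may be normalized to $0$, establishing (h1). The normal identity yields $\partial_n h = f^2 n \ge 0$, which is (h2). Condition (h3) follows from (f1): the function $g(z):=f(z)-\tfrac{1}{2\pi}\cdot\tfrac{\lambda}{z-v}$ is holomorphic at $v$, hence so is $g^2$, so its primitive is holomorphic, in particular bounded, near $v$.

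For the converse, suppose a holomorphic $f$ on $\om\setminus\{v\}$ has primitives $h$, $h_\star$ satisfying (h1)--(h3). From (h1) one reads $\partial_\tau h\equiv 0$ on $\dom$, hence $\im(f^2 n)=0$; combined with (h2), which gives $\re(f^2 n)\ge 0$, we conclude $f^2 n\ge 0$ on $\dom$, i.e.\ (f2). For (f1), observe that in order for $h_\star=\re\int g^2\,dz$ to be a single-valued and bounded function near $v$, the primitive of $g^2$ itself must extend holomorphically through $v$: a non-vanishing residue of $g^2$ would produce a $\log|z-v|$ contribution in $h_\star$, while any higher-order pole of $g^2$ would contribute an unbounded algebraic term. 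Thus $g^2$, and therefore $g$, is holomorphic at $v$, which is (f1). Uniqueness in Proposition~\ref{f} then identifies $f = f_\om^v$.

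The main subtle point is the converse analysis near $v$: one must rule out every possible singular contribution of $g$ to the Laurent expansion of $g^2$, including the possibility of a residue whose primitive is multi-valued or a higher order pole whose real part of the primitive is unbounded. This is handled by inspecting the Laurent expansion of $g$ at $v$ term by term; once $g$ is forced to be holomorphic there, both (f1) and (f2) are in hand and Proposition~\ref{f} closes the argument.
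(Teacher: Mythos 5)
Your proposal is correct and follows essentially the same route as the paper, whose proof simply asserts that (f2) is equivalent to (h1)--(h2) and (f1) to (h3); you supply the details of exactly these equivalences via $\partial_n h=\re(f^2n)$, $\partial_\tau h=-\im(f^2n)$ and the local analysis at $v$. The only point worth tightening is the converse near $v$: rather than a term-by-term inspection of the Laurent series (which is slightly delicate if one allows an essential singularity), it is cleaner to invoke removability of a bounded harmonic function at an isolated singularity to conclude that the primitive of $\left(f(z)-\frac{1}{2\pi}\cdot\frac{\lambda}{z-v}\right)^2$ extends holomorphically.
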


\begin{proof}
The property (f2) is equivalent to (h1) and (h2). Property (f1) is equivalent to (h3).
\end{proof}

\section{Coupling function on hedgehog domain}\label{3}
In this section we show that a slightly modified s-holomorphic version of the coupling function satisfies Riemann-type boundary conditions on hedgehog domains. We then prove the convergence of the coupling function.

\subsection{Coupling function as s-holomorphic function}\label{coupl}
We can think of the inverse Kasteleyn matrix $\frac{1}{\delta}C_{\om^\delta}(u,v)$ as a function of two variables $u \in \bb{}^\delta$ and $v \in \ww{}^\delta$. If $v\in\ww{0}^\delta$, then $\frac{1}{\delta}C_{\om^\delta}(u,v)$ is a discrete holomorphic function of $u$, with a simple pole at $v$: 
\[4\delta\bar{\lambda}\bar{\partial}[C_{\om^\delta}(\,\cdot\,,v)](v) =C_{\om^\delta} (v+\delta\lambda,v)-C_{\om^\delta} (v-\delta\lambda,v)+
 iC_{\om^\delta} (v-\delta\bar{\lambda},v)-
 iC_{\om^\delta} (v+\delta\bar{\lambda},v)=1,\]
since the product of the Kasteleyn matrix and the inverse Kasteleyn matrix is equal to the identity matrix. For more details see~\cite{Kdom, rus}. Note that the coupling function as a function of $u$ can be extended to be zero on $\db^\delta$, so that the above equation makes sense.

Let $\om^\delta$ be a hedgehog domain. Fix a white square $\www\in\mathrm{Int}\ww{0}^\delta$. 
%Let $z_\circ\in\V{}_{\circ}^\delta$ and $z_\bullet\in\V{}_{\bullet}^\delta$ be vertices of the square $\www$. 
Let us define a function $\check{\F}^\delta\colon\clb{}^\delta~\to~\mathbb{C}$ by $\check{\F}^\delta(u):=\frac{1}{\delta}C_{\om^\delta}(u,\www)$. %on $\bb{}^\delta$ and set $\check{\F}^\delta$ to be zero on $\dbb{}^\delta$. %\dintb{0,1}^\delta
Note that $\check{\F}^\delta$ is a discrete holomorphic everywhere in~$\ww{}^\delta$ except at the face $\www$ where one has $[\bar{\partial}^\delta\check{\F}^\delta](\www) = \frac{\lambda}{\delta^2}.$ Therefore one can define an s-holomorphic function $\check{\F}^\delta_{\operatorname{s-hol}}$ on the set
$%\widetilde{\om}^\delta:=
(\clom^\delta\smallsetminus\{\www\})\sqcup\mathcal{V}_{\diamond}^\delta$ as described in Remark~\ref{bij:hol_s-hol}. 

Let us divide the set $\partial\mathcal{V}^\delta_\circ$ into two sets $\partial\check{\mathcal{V}}^\delta_\circ$ and $\partial\hat{\mathcal{V}}^\delta_\circ$, where $\partial\check{\mathcal{V}}^\delta_\circ$ are vertices of the dashed lattice and $\partial\mathcal{V}^\delta_\circ=\partial\check{\mathcal{V}}^\delta_\circ\sqcup\partial\hat{\mathcal{V}}^\delta_\circ.$

\begin{prop}\label{half_dirichlet} Let $\check{H}^\delta$ be the primitive of the square of $\check{\F}_{\operatorname{s-hol}}^\delta$ defined by~(\ref{def H}).
Then $\check{\HH}^\delta$ satisfies Dirichlet boundary conditions on the set $\partial\check{\mathcal{V}}_\circ^\delta$: $\check{\HH}^\delta(z)=0$ for any $z\in\partial\check{\mathcal{V}}_\circ^\delta$.
\end{prop}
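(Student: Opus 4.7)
The strategy is to show that $\check{H}^\delta$ is constant on $\partial\check{\mathcal{V}}_\circ^\delta$ and then fix the additive freedom of Definition~\ref{defH} so that this common value equals $0$. By the monodromy argument of Remark~\ref{int}, s-holomorphicity of $\check{F}^\delta_{\operatorname{s-hol}}$ on $(\clom^\delta\setminus\{\www\})\sqcup\mathcal{V}_\diamond^\delta$ makes the increment of $\check{H}^\delta$ around every $\mathcal{V}_\diamond^\delta$-vertex not adjacent to $\www$ vanish, so $\check{H}^\delta$ is well-defined on $\mathcal{V}_\bullet^\delta \cup \mathcal{V}_\circ^\delta$ along any path that does not encircle $\www$. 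It therefore suffices to prove that $\check{H}^\delta(z'') - \check{H}^\delta(z') = 0$ for any two neighbouring vertices $z', z'' \in \partial\check{\mathcal{V}}_\circ^\delta$.

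The local picture is rigidly constrained by the definition of hedgehog domains (Section~\ref{section even}): between two consecutive vertices of $\partial\check{\mathcal{V}}_\circ^\delta$, the boundary of $\om^\delta$ is either a straight segment of length $2\delta$ or an L-shaped zigzag of two perpendicular length-$2\delta$ segments corresponding to the legs of a spike. I would evaluate the increment by summing the formula of Definition~\ref{defH} along a diagonal $\mathcal{V}_\circ \to \mathcal{V}_\bullet$ path that hugs this portion of the boundary, so that each term $(\check{F}^\delta_{\operatorname{s-hol}}(a))^2\cdot(z_2-z_1)$ corresponds to a single boundary face $a$. On a white boundary face, the formulas of Remark~\ref{bij:hol_s-hol} express $\check{F}^\delta_{\operatorname{s-hol}}(a)$ as a linear combination of $\check{F}^\delta$ at two adjacent black squares; because exactly one of these black squares lies in $\db^\delta$ and $\check{F}^\delta=0$ there, the value collapses to $\pm\tfrac{\lambda}{\sqrt{2}}\check{F}^\delta(u)$ or $\pm\tfrac{i\bar\lambda}{\sqrt{2}}\check{F}^\delta(u)$ for a single interior black neighbour $u$. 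On a black boundary face, $\check{F}^\delta_{\operatorname{s-hol}}(a)=\check{F}^\delta(a)$ is available directly.

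Substituting these expressions, squaring, and summing the two or four contributions between $z'$ and $z''$, I expect the interior factors $(\check{F}^\delta(u))^2$ to pair with the direction vectors $(z_2-z_1)$ in a telescoping way. The two legs of a spike carry increments $(z_2-z_1)$ differing by a factor $\pm i$, and this $\pm i$ combines with the $\pm i$ coming from the Remark~\ref{bij:hol_s-hol} formulas in just the way needed to force the total increment to vanish.

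The main obstacle is the case-by-case bookkeeping: one must handle each of the local configurations $\partial^{+}_{\mathrm{int}}, \partial^{-}_{\mathrm{int}}, \partial^{\sharp}_{\mathrm{int}}, \partial^{\flat}_{\mathrm{int}}$ of Section~\ref{section even} separately, tracking prefactors $\lambda, \bar\lambda, \pm i$ and the signs of $(z_2-z_1)$ carefully. Here the emptiness of the four sets $\partial^{+}_{\mathrm{int}}\bb{1}^\delta$, $\partial^{-}_{\mathrm{int}}\bb{0}^\delta$, $\partial^{\sharp}_{\mathrm{int}}\ww{0}^\delta$, $\partial^{\flat}_{\mathrm{int}}\ww{1}^\delta$ noted at the end of Section~\ref{section even} is essential -- it precisely rules out the local boundary-face configurations in which the required telescoping identity would fail.
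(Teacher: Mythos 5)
Your overall strategy coincides with the paper's: show that the increment of $\check{\HH}^\delta$ between two consecutive vertices of $\partial\check{\mathcal{V}}_\circ^\delta$ vanishes, conclude that $\check{\HH}^\delta$ is constant there, and absorb the constant using the additive freedom. For the spikes pointing left/right (the paper's ``left/right boundary'') your mechanism is also the paper's: the two boundary vertices of type $\mathcal{V}_\diamond$ each see a single interior black square (the other black neighbour lies in $\db^\delta$, where $\check{\F}^\delta=0$), so the two crossed white faces carry $\mathrm{Proj}_{\lambda}$ and $\mathrm{Proj}_{\bar\lambda}$ of the same real value, whose squares are $\pm\frac{i}{2}x^2$ and cancel against the crossing directions, which differ by a sign.

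There is, however, a genuine gap for the spikes pointing up/down (the paper's ``upper/lower boundary''). There the two faces crossed between consecutive vertices of $\partial\check{\mathcal{V}}_\circ^\delta$ are both \emph{black}, namely $\uI$ and $\uR$: no white boundary face is crossed, no prefactor from Remark~\ref{bij:hol_s-hol} enters, and the two crossing direction vectors differ by a factor $-1$, not $\pm i$ (they both point along the dashed edge). The cancellation therefore requires the identity $(\check{\F}^\delta_{\operatorname{s-hol}}(\uI))^2+(\check{\F}^\delta_{\operatorname{s-hol}}(\uR))^2=0$, i.e.\ $i\check{\F}^\delta_{\operatorname{s-hol}}(\uI)=\mp\check{\F}^\delta_{\operatorname{s-hol}}(\uR)$, and this does not follow from the ingredients you list. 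It is exactly the two-term Kasteleyn relation $-C_{\om^\delta}(\uI,\www)+iC_{\om^\delta}(\uR,\www)=0$ at the spike-tip white square, whose other two black neighbours lie in $\db^\delta$ (equivalently: consistency of the s-holomorphic value at that tip face computed from its two boundary corners, each seeing one interior and one exterior black square). This relation is the heart of the paper's proof for these boundary portions and is absent from your sketch. Two smaller inaccuracies: the ``straight segment of length $2\delta$'' case cannot occur between two vertices of $\partial\check{\mathcal{V}}_\circ^\delta$ (consecutive dashed vertices are $2\sqrt2\delta$ apart and every dashed boundary edge carries a spike); and it is just as well that it cannot, since for such a configuration the increment generally does \emph{not} vanish --- compare the computation of $\check{\HH}^\delta(\hat z)-\check{\HH}^\delta(\check z')$ given right after the proposition, which is exactly why the statement is restricted to $\partial\check{\mathcal{V}}_\circ^\delta$ and to hedgehog domains.
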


\begin{proof}
Let $\uI, \uR, \vl, \vlbr, z, z_1, z_2, \check{z}, \check{z}'$ be as shown on Fig.~\ref{Fbdr} and $\vl\neq\www$.
Note that on the upper boundary 
\[-C_{\om^\delta}(\uI,\www)+iC_{\om^\delta}(\uR,\www)=0,\]  since the product of the Kasteleyn matrix and the inverse Kasteleyn matrix is equal to the identity matrix. Therefore $i\check{\F}_{\operatorname{s-hol}}^\delta(\uI)=-\check{\F}_{\operatorname{s-hol}}^\delta(\uR)$, then by~(\ref{def H}) we obtain that 
\[\check{\HH}^\delta(\check{z})-\check{\HH}^\delta(\check{z}')=
(\check{\HH}^\delta(\check{z})-\check{\HH}^\delta({z}))+(\check{\HH}^\delta({z})-\check{\HH}^\delta(\check{z}'))=-(\check{\F}_{\operatorname{s-hol}}^\delta(\uI))^2-(\check{\F}_{\operatorname{s-hol}}^\delta(\uR))^2=0,
\]  on the upper boundary.  
Similarly, one can check that $i\check{\F}_{\operatorname{s-hol}}^\delta(\uI)=\check{\F}_{\operatorname{s-hol}}^\delta(\uR)$ and $\check{\HH}^\delta(\check{z})-\check{\HH}^\delta(\check{z}')=0$ on the lower boundary. 

Recall that $\check{\HH}^\delta$ is defined up to an additive constant, which can be chosen so that $\check{\HH}^\delta(z)=0$ for any $z\in\partial\check{\mathcal{V}}_\circ^\delta$, therefore to finish the proof it is enough to check that $\check{\HH}^\delta(\check{z})=\check{\HH}^\delta(\check{z}')$ on the right and left boundaries.

On the right boundary $\check{\F}^\delta_{\operatorname{s-hol}}(z_1)=\check{\F}_{\operatorname{s-hol}}^\delta(z_2)=\check{\F}_{\operatorname{s-hol}}^\delta(\uR)$. Therefore, on the right boundary,
\[
i\check{\F}_{\operatorname{s-hol}}^\delta(\vlbr)=  i{\mathrm{Proj}}_{\bar\lambda}[F_{\operatorname{s-hol}}^\delta(z_1)]= 
{\mathrm{Proj}}_{\lambda}[F_{\operatorname{s-hol}}^\delta(z_2)]=\check{\F}_{\operatorname{s-hol}}^\delta(\vl).
\] 
Analogously, one can check that $i\check{\F}_{\operatorname{s-hol}}^\delta(\vlbr)=-\check{\F}_{\operatorname{s-hol}}^\delta(\vl)$ on the left boundary. Hence $\check{\HH}^\delta(\,\cdot\,)$ is a constant on $\partial\check{\mathcal{V}}_\circ^\delta$.
\old{Note that $-C_{\om^\delta}(\uI,\www)+iC_{\om^\delta}(\uR,\www)=0$ on the upper boundary, and $\check{\F}^\delta(z_1)=\check{\F}^\delta(z_2)=\check{\F}^\delta(\uI)$ on the left boundary, where $\uI, \uR, z_1, z_2$ are as shown on Fig.~\ref{Fbdr}. Therefore $i\check{\F}_{\operatorname{s-hol}}^\delta(\uI)=\pm\check{\F}_{\operatorname{s-hol}}^\delta(\uR)$ on the upper or lower boundary, and $i\check{\F}_{\operatorname{s-hol}}^\delta(\vlbr)=\pm\check{\F}_{\operatorname{s-hol}}^\delta(\vl)$ on the left or right boundary, see Fig.~\ref{Fbdr}. Therefore $\check{\HH}^\delta(\check{z})=\check{\HH}^\delta(\check{z}')$, so $\check{\HH}^\delta(z)$ is constant on $\partial\check{\mathcal{V}}_\circ^\delta$.
Recall that $\check{\HH}^\delta$ is defined up to an additive constant, which can be chosen so that $\check{\HH}^\delta(z)=0$ for any $z\in\partial\check{\mathcal{V}}_\circ^\delta$.}
\end{proof}

The above proposition does not hold for the whole $\partial\mathcal{V}^\delta_\circ$. Let $ \uR, \vlbr, z, \hat{z}, \check{z}'$ be as shown on Fig.~\ref{Fbdr}. Then on the upper boundary one has
\[
\check{\HH}^\delta(\hat{z})-\check{\HH}^\delta(\check{z}')=
(\check{\HH}^\delta(\hat{z})-\check{\HH}^\delta({z}))+(\check{\HH}^\delta({z})-\check{\HH}^\delta(\check{z}'))=i(\check{\F}_{\operatorname{s-hol}}^\delta(\vlbr))^2-(\check{\F}_{\operatorname{s-hol}}^\delta(\uR))^2.
\] 
Note that in this case $\check{\F}_{\operatorname{s-hol}}^\delta(\vlbr)=\frac{\lambda}{\sqrt{2}}\check{\F}_{\operatorname{s-hol}}^\delta(\uR)$. Therefore, $\check{\HH}^\delta(\hat{z})-\check{\HH}^\delta(\check{z}')\neq 0.$

One can modify the s-holomorphic version $\check{\F}_{\operatorname{s-hol}}^\delta$ of the normalised coupling function on hedgehog domain, in such a way that the primitive of its square vanishes \emph{everywhere} on $\partial{\mathcal{V}}^\delta_\circ$. In other words, one can define an s-holomorphic function $F^\delta_{\operatorname{s-hol}}(\cdot)$, which satisfies Riemann-type boundary conditions everywhere on $\partial\mathcal{V}_{\diamond}^\delta$ and coincides with  $\frac{1}{\delta}\Cmd(\cdot,\www)$ on the set $\intbb{}^\delta$. Such a modification is possible since the values $\check{\F}_{\operatorname{s-hol}}^\delta(z_1), \check{\F}_{\operatorname{s-hol}}^\delta(z_2)$ near a given vertex $z_\circ\in\partial\hat{\mathcal{V}}^\delta_\circ$ are subject to only three real equations from Definition~\ref{proj_s_hol}, which leave one degree of freedom to adjust the value $\HH^\delta(z_\circ).$

Let $F^\delta_{\operatorname{s-hol}}$ be an s-holomorphic function defined on $(\clom^\delta\smallsetminus\{\www\})\sqcup\mathcal{V}^\delta$ coinciding with $\check{\F}^\delta_{\operatorname{s-hol}}$ on the set  
$(\om^\delta\smallsetminus(\{\www\}\cup\partial^{+}_{\mathrm {int}}\bb{0}^\delta\cup\partial^{-}_{\mathrm {int}}\bb{1}^\delta\cup\partial^{\flat}_{\mathrm {int}}\ww{0}^\delta\cup
\partial^{\sharp}_{\mathrm {int}}\ww{1}^\delta
))\sqcup(\mathcal{V}_{\diamond}^\delta\smallsetminus\partial\mathcal{V}_\diamond^\delta)$, i.e.:
\[
\begin{cases} 
\begin{array}{llll}
F^\delta_{\operatorname{s-hol}}(u)=\check{\F}^\delta_{\operatorname{s-hol}}(u)  \quad & \operatorname{if}\, u\in \bb{}^\delta\smallsetminus
(\partial^{+}_{\mathrm {int}}\bb{0}^\delta\cup\partial^{-}_{\mathrm {int}}\bb{1}^\delta); \\
F^\delta_{\operatorname{s-hol}}(v)=\check{\F}^\delta_{\operatorname{s-hol}}(v)  \quad & \operatorname{if}\, v\in \ww{}^\delta\smallsetminus(\{\www\}\cup\partial^{\flat}_{\mathrm {int}}\ww{0}^\delta\cup
\partial^{\sharp}_{\mathrm {int}}\ww{1}^\delta); \\
F^\delta_{\operatorname{s-hol}}(z)=\check{\F}^\delta_{\operatorname{s-hol}}(z)  \quad & \operatorname{if}\, z\in \mathcal{V}_\diamond^\delta\smallsetminus\partial\mathcal{V}_\diamond^\delta. \\
\end{array}
\end{cases}
\]

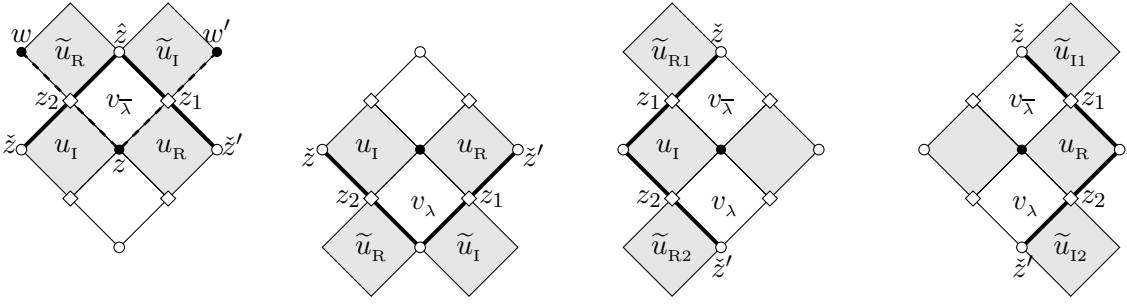
\begin{figure}
\begin{center}

\begin{tikzpicture}[x={(0.5cm,0.5cm)}, y={(-0.5cm,0.5cm)}]
\begin{scope}

\path (-1.3,0) node[name=l1, shape=coordinate]{};
\path (0,0) node[name=l2, shape=coordinate]{};
\path (0,1.3) node[name=l3, shape=coordinate]{};
\path (-1.3,1.3) node[name=l4, shape=coordinate]{};
\path[draw, fill=gray!20] (l1)--(l2)--(l3)--(l4)--cycle;
\path (-0.7,0.7) node[]{$\uI$};

\path[draw, fill=gray!20] (1.3,0)--(2.6,0)--(2.6,1.3)--(1.3,1.3)--cycle;
\path (1.95,0.65) node[]{$\widetilde{u}_{\raisebox{-1pt}{\tiny I}}$};

\path (0,0) node[name=u1, shape=coordinate]{};
\path (1.3,0) node[name=u2, shape=coordinate]{};
\path (1.3,1.3) node[name=u3, shape=coordinate]{};
\path (0,1.3) node[name=u4, shape=coordinate]{};
\path[draw] (u1)--(u2)--(u3)--(u4)--cycle;
\path (0.6,0.6) node[]{$\vlbr$};

\path (0,-1.3) node[name=r1, shape=coordinate]{};
\path (1.3,-1.3) node[name=r2, shape=coordinate]{};
\path (1.3,0) node[name=r3, shape=coordinate]{};
\path (0,0) node[name=r4, shape=coordinate]{};
\path[draw, fill=gray!20] (r1)--(r2)--(r3)--(r4)--cycle;
\path (0.7,-0.7) node[]{$\uR$};

\path[draw, fill=gray!20] (0,1.3)--(1.3,1.3)--(1.3,2.6)--(0,2.6)--cycle;
\path (0.65,1.95) node[]{$\widetilde{u}_{\raisebox{-1pt}{\tiny R}}$};

\path (-1.3,-1.3) node[name=d1, shape=coordinate]{};
\path (0,-1.3) node[name=d2, shape=coordinate]{};
\path (0,0) node[name=d3, shape=coordinate]{};
\path (-1.3,0) node[name=d4, shape=coordinate]{};
\path[draw] (d1)--(d2)--(d3)--(d4)--cycle;
%\path (-0.7,-0.8) node[]{$\vl$};

\draw[draw, line width=1.5pt] (1.3,-1.3)--(1.3,1.3);
\draw[draw, line width=1.5pt] (1.3,1.3)--(-1.3,1.3);

\draw[draw, line width=1pt, dashed] (0,2.6)--(0,0);
\draw[draw, line width=1pt, dashed] (2.6,0)--(0,0);

\path[draw, fill=white] (1.2,-0.1)--(1.4,-0.1)--(1.4,0.1)--(1.2,0.1)--cycle;
\path[draw, fill=white] (-1.4,-0.1)--(-1.2,-0.1)--(-1.2,0.1)--(-1.4,0.1)--cycle;
\path[draw, fill=white] (-0.1,1.2)--(0.1,1.2)--(0.1,1.4)--(-0.1,1.4)--cycle;
\path[draw, fill=white] (-0.1,-1.4)--(0.1,-1.4)--(0.1,-1.2)--(-0.1,-1.2)--cycle;
\path[draw, fill=white] (1.3,1.3) circle[radius=0.07cm];
\path[draw, fill=white] (-1.3,-1.3) circle[radius=0.07cm];
\path[draw, fill=white] (1.3,-1.3) circle[radius=0.07cm];
\path[draw, fill=white] (-1.3,1.3) circle[radius=0.07cm];
\path[draw, fill=black] (0,0) circle[radius=0.06cm];
\path[draw, fill=black] (0,2.6) circle[radius=0.06cm];
\path[draw, fill=black] (2.6,0) circle[radius=0.06cm];
\path (0,2.6) node[anchor=south]{$w$};
\path (2.6,0) node[anchor=south]{$w'$};
\path (0,0) node[anchor=north]{$z$};

\path (1.3,0) node[anchor=west]{$z_1$};
\path (0,1.3) node[anchor=east]{$z_2$};
\path (1.8,1.3) node[anchor=east]{$\hat z$};

\path (1.3,-1.15) node[anchor=west]{$\check{z}'$};
\path (-1.15,1.3) node[anchor=east]{$\check{z}$};

\end{scope}
%%%%%%%%%%%%%%%%%%%%%%%%%%%%%%%%%%%%%%%%%%
\begin{scope}[xshift=4cm]

\path (-1.3,0) node[name=l1, shape=coordinate]{};
\path (0,0) node[name=l2, shape=coordinate]{};
\path (0,1.3) node[name=l3, shape=coordinate]{};
\path (-1.3,1.3) node[name=l4, shape=coordinate]{};
\path[draw, fill=gray!20] (l1)--(l2)--(l3)--(l4)--cycle;
\path (-0.7,0.7) node[]{$\uI$};

\path[draw, fill=gray!20] (-1.3,-2.6)--(0,-2.6)--(0,-1.3)--(-1.3,-1.3)--cycle;
\path (-0.65,-1.95) node[]{$\widetilde{u}_{\raisebox{-1pt}{\tiny I}}$};

\path (0,0) node[name=u1, shape=coordinate]{};
\path (1.3,0) node[name=u2, shape=coordinate]{};
\path (1.3,1.3) node[name=u3, shape=coordinate]{};
\path (0,1.3) node[name=u4, shape=coordinate]{};
\path[draw] (u1)--(u2)--(u3)--(u4)--cycle;
%\path (0.6,0.6) node[]{$\vlbr$};

\path (0,-1.3) node[name=r1, shape=coordinate]{};
\path (1.3,-1.3) node[name=r2, shape=coordinate]{};
\path (1.3,0) node[name=r3, shape=coordinate]{};
\path (0,0) node[name=r4, shape=coordinate]{};
\path[draw, fill=gray!20] (r1)--(r2)--(r3)--(r4)--cycle;
\path (0.7,-0.7) node[]{$\uR$};

\path (-1.3,-1.3) node[name=d1, shape=coordinate]{};
\path (0,-1.3) node[name=d2, shape=coordinate]{};
\path (0,0) node[name=d3, shape=coordinate]{};
\path (-1.3,0) node[name=d4, shape=coordinate]{};
\path[draw] (d1)--(d2)--(d3)--(d4)--cycle;
\path (-0.7,-0.8) node[]{$\vl$};

\path[draw, fill=gray!20] (-2.6,-1.3)--(-1.3,-1.3)--(-1.3,0)--(-2.6,0)--cycle;
\path (-1.95,-0.65) node[]{$\widetilde{u}_{\raisebox{-1pt}{\tiny R}}$};

\draw[draw, line width=1.5pt] (-1.3,1.3)--(-1.3,-1.3);
\draw[draw, line width=1.5pt] (-1.3,-1.3)--(1.3,-1.3);

\path[draw, fill=white] (1.2,-0.1)--(1.4,-0.1)--(1.4,0.1)--(1.2,0.1)--cycle;
\path[draw, fill=white] (-1.4,-0.1)--(-1.2,-0.1)--(-1.2,0.1)--(-1.4,0.1)--cycle;
\path[draw, fill=white] (-0.1,1.2)--(0.1,1.2)--(0.1,1.4)--(-0.1,1.4)--cycle;
\path[draw, fill=white] (-0.1,-1.4)--(0.1,-1.4)--(0.1,-1.2)--(-0.1,-1.2)--cycle;
\path[draw, fill=white] (1.3,1.3) circle[radius=0.07cm];
\path[draw, fill=white] (-1.3,-1.3) circle[radius=0.07cm];
\path[draw, fill=white] (1.3,-1.3) circle[radius=0.07cm];
\path[draw, fill=white] (-1.3,1.3) circle[radius=0.07cm];
\path[draw, fill=black] (0,0) circle[radius=0.06cm];

\path (0,-1.3) node[anchor=west]{$z_1$};
\path (-1.3,0) node[anchor=east]{$z_2$};

\path (1.15,-1.3) node[anchor=west]{$\check{z}'$};
\path (-1.35,1.15) node[anchor=east]{$\check{z}$};

\end{scope}
%%%%%%%%%%%%%%%%%%%%%%%%%%%%%%%%%%%%%%%%%%

\begin{scope}[xshift=8cm]

\path (-1.3,0) node[name=l1, shape=coordinate]{};
\path (0,0) node[name=l2, shape=coordinate]{};
\path (0,1.3) node[name=l3, shape=coordinate]{};
\path (-1.3,1.3) node[name=l4, shape=coordinate]{};
\path[draw, fill=gray!20] (l1)--(l2)--(l3)--(l4)--cycle;
\path (-0.7,0.7) node[]{$\uI$};

\path (0,0) node[name=u1, shape=coordinate]{};
\path (1.3,0) node[name=u2, shape=coordinate]{};
\path (1.3,1.3) node[name=u3, shape=coordinate]{};
\path (0,1.3) node[name=u4, shape=coordinate]{};
\path[draw] (u1)--(u2)--(u3)--(u4)--cycle;
\path (0.6,0.6) node[]{$\vlbr$};

\path (0,-1.3) node[name=r1, shape=coordinate]{};
\path (1.3,-1.3) node[name=r2, shape=coordinate]{};
\path (1.3,0) node[name=r3, shape=coordinate]{};
\path (0,0) node[name=r4, shape=coordinate]{};
\path[draw, fill=gray!20] (r1)--(r2)--(r3)--(r4)--cycle;
%\path (0.7,-0.7) node[]{$\uR$};

\path (-1.3,-1.3) node[name=d1, shape=coordinate]{};
\path (0,-1.3) node[name=d2, shape=coordinate]{};
\path (0,0) node[name=d3, shape=coordinate]{};
\path (-1.3,0) node[name=d4, shape=coordinate]{};
\path[draw] (d1)--(d2)--(d3)--(d4)--cycle;
\path (-0.7,-0.8) node[]{$\vl$};

\path[draw, fill=gray!20] (0,1.3)--(1.3,1.3)--(1.3,2.6)--(0,2.6)--cycle;
\path (0.65,1.95) node[]{$\widetilde{u}_{\raisebox{-1pt}{\tiny R1}}$};

\path[draw, fill=gray!20] (-2.6,-1.3)--(-1.3,-1.3)--(-1.3,0)--(-2.6,0)--cycle;
\path (-1.95,-0.65) node[]{$\widetilde{u}_{\raisebox{-1pt}{\tiny R2}}$};

\draw[draw, line width=1.5pt] (1.3,1.3)--(-1.3,1.3);
\draw[draw, line width=1.5pt] (-1.3,1.3)--(-1.3,-1.3);

\path[draw, fill=white] (1.2,-0.1)--(1.4,-0.1)--(1.4,0.1)--(1.2,0.1)--cycle;
\path[draw, fill=white] (-1.4,-0.1)--(-1.2,-0.1)--(-1.2,0.1)--(-1.4,0.1)--cycle;
\path[draw, fill=white] (-0.1,1.2)--(0.1,1.2)--(0.1,1.4)--(-0.1,1.4)--cycle;
\path[draw, fill=white] (-0.1,-1.4)--(0.1,-1.4)--(0.1,-1.2)--(-0.1,-1.2)--cycle;
\path[draw, fill=white] (1.3,1.3) circle[radius=0.07cm];
\path[draw, fill=white] (-1.3,-1.3) circle[radius=0.07cm];
\path[draw, fill=white] (1.3,-1.3) circle[radius=0.07cm];
\path[draw, fill=white] (-1.3,1.3) circle[radius=0.07cm];
\path[draw, fill=black] (0,0) circle[radius=0.06cm];

\path (0,1.3) node[anchor=east]{$z_1$};
\path (-1.3,0) node[anchor=east]{$z_2$};

\path (1.73,1.37) node[anchor=east]{$\check{z}$};
\path (-1.25,-1.85) node[anchor=east]{$\check{z}'$};
\end{scope}
%%%%%%%%%%%%%%%%%%%%%%%%%%%%%%%%%%%%%%%%%%

\begin{scope}[xshift=12cm]

\path (-1.3,0) node[name=l1, shape=coordinate]{};
\path (0,0) node[name=l2, shape=coordinate]{};
\path (0,1.3) node[name=l3, shape=coordinate]{};
\path (-1.3,1.3) node[name=l4, shape=coordinate]{};
\path[draw, fill=gray!20] (l1)--(l2)--(l3)--(l4)--cycle;
%\path (-0.7,0.7) node[]{$\uI$};

\path (0,0) node[name=u1, shape=coordinate]{};
\path (1.3,0) node[name=u2, shape=coordinate]{};
\path (1.3,1.3) node[name=u3, shape=coordinate]{};
\path (0,1.3) node[name=u4, shape=coordinate]{};
\path[draw] (u1)--(u2)--(u3)--(u4)--cycle;
\path (0.6,0.6) node[]{$\vlbr$};

\path (0,-1.3) node[name=r1, shape=coordinate]{};
\path (1.3,-1.3) node[name=r2, shape=coordinate]{};
\path (1.3,0) node[name=r3, shape=coordinate]{};
\path (0,0) node[name=r4, shape=coordinate]{};
\path[draw, fill=gray!20] (r1)--(r2)--(r3)--(r4)--cycle;
\path (0.7,-0.7) node[]{$\uR$};

\path (-1.3,-1.3) node[name=d1, shape=coordinate]{};
\path (0,-1.3) node[name=d2, shape=coordinate]{};
\path (0,0) node[name=d3, shape=coordinate]{};
\path (-1.3,0) node[name=d4, shape=coordinate]{};
\path[draw] (d1)--(d2)--(d3)--(d4)--cycle;
\path (-0.7,-0.8) node[]{$\vl$};

\path[draw, fill=gray!20] (1.3,0)--(2.6,0)--(2.6,1.3)--(1.3,1.3)--cycle;
\path (1.95,0.65) node[]{$\widetilde{u}_{\raisebox{-1pt}{\tiny I1}}$};

\path[draw, fill=gray!20] (-1.3,-2.6)--(0,-2.6)--(0,-1.3)--(-1.3,-1.3)--cycle;
\path (-0.65,-1.95) node[]{$\widetilde{u}_{\raisebox{-1pt}{\tiny I2}}$};

\draw[draw, line width=1.5pt] (1.3,-1.3)--(1.3,1.3);
\draw[draw, line width=1.5pt] (-1.3,-1.3)--(1.3,-1.3);

\path[draw, fill=white] (1.2,-0.1)--(1.4,-0.1)--(1.4,0.1)--(1.2,0.1)--cycle;
\path[draw, fill=white] (-1.4,-0.1)--(-1.2,-0.1)--(-1.2,0.1)--(-1.4,0.1)--cycle;
\path[draw, fill=white] (-0.1,1.2)--(0.1,1.2)--(0.1,1.4)--(-0.1,1.4)--cycle;
\path[draw, fill=white] (-0.1,-1.4)--(0.1,-1.4)--(0.1,-1.2)--(-0.1,-1.2)--cycle;
\path[draw, fill=white] (1.3,1.3) circle[radius=0.07cm];
\path[draw, fill=white] (-1.3,-1.3) circle[radius=0.07cm];
\path[draw, fill=white] (1.3,-1.3) circle[radius=0.07cm];
\path[draw, fill=white] (-1.3,1.3) circle[radius=0.07cm];
\path[draw, fill=black] (0,0) circle[radius=0.06cm];

\path (0,-1.3) node[anchor=west]{$z_2$};
\path (1.3,0) node[anchor=west]{$z_1$};

\path (1.73,1.37) node[anchor=east]{$\check{z}$};
\path (-1.25,-1.85) node[anchor=east]{$\check{z}'$};

\end{scope}

\end{tikzpicture}\end{center}
\caption{
%$z_1, z_2 \in\mathcal{V}_\diamond^\delta$
{\bf First}: upper boundary; black vertices $w,w'\in\tilde{\partial}\mathcal{V}_\bullet^\delta$; boundary weights of the leap-frog Laplacian $\Delta^\delta_\bullet$: $c_{zw}=c_{zw'}=2(\sqrt2 -1)$; %white vertices $\check{z},\check{z}' \in\partial\check{\mathcal{V}}_\circ^\delta$, $\widetilde{u}_{\protect\raisebox{-1pt}{\tiny R}}\in\partial\protect\bb{0}^\delta$,
%$\widetilde{u}_{\protect\raisebox{-1pt}{\tiny I}}\in\partial\protect\bb{1}^\delta$,
%${u}_{\protect\raisebox{-1pt}{\tiny R}}\in\partial^{\sharp}_{\mathrm {int}}\protect\bb{0}^\delta$,
%${u}_{\protect\raisebox{-1pt}{\tiny I}}\in\partial^{\sharp}_{\mathrm {int}}\protect\bb{1}^\delta$,
%$\protect\vlbr\in\partial^{\sharp}_{\mathrm {int}}\protect\ww{1}^\delta$; 
%values of the function $\widetilde{F}^\delta_{\operatorname{s-hol}}$ on $\diom^\delta$: 
%upper boundary. 
%$\widetilde{F}^\delta_{\operatorname{s-hol}}(\protect\uR)=i{\widetilde{F}}^\delta_{\operatorname{s-hol}}(\protect\uI)$, since $\widetilde{F}^\delta_{\operatorname{s-hol}}|_{\partial\protect\bb{}}=0$. 
{\bf Second}: lower boundary. %$\widetilde{u}_{\protect\raisebox{-1pt}{\tiny R}}\in\partial\protect\bb{0}^\delta$,
%$\widetilde{u}_{\protect\raisebox{-1pt}{\tiny I}}\in\partial\protect\bb{1}^\delta$,
%${u}_{\protect\raisebox{-1pt}{\tiny R}}\in\partial^{\flat}_{\mathrm {int}}\protect\bb{0}^\delta$,
%${u}_{\protect\raisebox{-1pt}{\tiny I}}\in\partial^{\flat}_{\mathrm {int}}\protect\bb{1}^\delta$,
%$\protect\vl\in\partial^{\flat}_{\mathrm {int}}\protect\ww{0}^\delta$; 
%$\widetilde{F}^\delta_{\operatorname{s-hol}}(\protect\uR)=-i{\widetilde{F}}^\delta_{\operatorname{s-hol}}(\protect\uI)$, since $\widetilde{F}^\delta_{\operatorname{s-hol}}|_{\partial\protect\bb{}}=0$;
{\bf Third}: left boundary. 
%$\widetilde{u}_{\protect\raisebox{-1pt}{\tiny R1}}\in\partial\protect\bb{0}^\delta$,
%$\widetilde{u}_{\protect\raisebox{-1pt}{\tiny R2}}\in\partial\protect\bb{0}^\delta$,
%${u}_{\protect\raisebox{-1pt}{\tiny I}}\in\partial^{-}_{\mathrm {int}}\protect\bb{1}^\delta$,
%$\protect\vlbr\in\partial^{-}_{\mathrm {int}}\protect\ww{1}^\delta$,
%$\protect\vl\in\partial^{-}_{\mathrm {int}}\protect\ww{0}^\delta$; 
%$\widetilde{F}^\delta_{\operatorname{s-hol}}(\protect\vl)=-i{\widetilde{F}}^\delta_{\operatorname{s-hol}}(\protect\vlbr)$, since $\widetilde{F}^\delta_{\operatorname{s-hol}}|_{\partial\protect\bb{}}=0$;
{\bf Fourth}: right boundary.
%$\widetilde{u}_{\protect\raisebox{-1pt}{\tiny I1}}\in\partial\protect\bb{1}^\delta$,
%$\widetilde{u}_{\protect\raisebox{-1pt}{\tiny I2}}\in\partial\protect\bb{1}^\delta$,
%${u}_{\protect\raisebox{-1pt}{\tiny R}}\in\partial^{+}_{\mathrm {int}}\protect\bb{0}^\delta$,
%$\protect\vlbr\in\partial^{+}_{\mathrm {int}}\protect\ww{1}^\delta$,
%$\protect\vl\in\partial^{+}_{\mathrm {int}}\protect\ww{0}^\delta$; 
%$\widetilde{F}^\delta_{\operatorname{s-hol}}(\protect\vl)=i{\widetilde{F}}^\delta_{\operatorname{s-hol}}(\protect\vlbr)$, since $\widetilde{F}^\delta_{\operatorname{s-hol}}|_{\partial\protect\bb{}}=0$.
} \label{Fbdr}
\end{figure}

We define  the function $F^\delta_{\operatorname{s-hol}}$ on 
$(\dom^\delta\cup\partial^{+}_{\mathrm {int}}\bb{0}^\delta\cup\partial^{-}_{\mathrm {int}}\bb{1}^\delta\cup\partial^{\flat}_{\mathrm {int}}\ww{0}^\delta\cup
\partial^{\sharp}_{\mathrm {int}}\ww{1}^\delta)\sqcup\partial\mathcal{V}_\diamond^\delta$ as follows, see~Fig.~\ref{Fbdr} for the notation:
\[ \operatorname{Upper} \operatorname{boundary} 
\begin{cases} 
\begin{array}{llll}
F^\delta_{\operatorname{s-hol}}(\widetilde{u}_{\raisebox{-1pt}{\tiny I}})&:=i(1-\sqrt{2})\check{\F}^\delta_{\operatorname{s-hol}}(\uR); \\
F^\delta_{\operatorname{s-hol}}(\widetilde{u}_{\raisebox{-1pt}{\tiny R}})&:=-(1-\sqrt{2})\check{\F}^\delta_{\operatorname{s-hol}}(\uR); \\
F^\delta_{\operatorname{s-hol}}(\vlbr)&:=\bar{\lambda}\check{\F}^\delta_{\operatorname{s-hol}}(\uR); \\
F^\delta_{\operatorname{s-hol}}(z_1)&:=\check{\F}^\delta_{\operatorname{s-hol}}(\uR)+\check{\F}^\delta_{\operatorname{s-hol}}(\widetilde{u}_{\raisebox{-1pt}{\tiny I}});\\
F^\delta_{\operatorname{s-hol}}(z_2)&:=\check{\F}^\delta_{\operatorname{s-hol}}(\widetilde{u}_{\raisebox{-1pt}{\tiny R}})+\check{\F}^\delta_{\operatorname{s-hol}}({\uI});\\
\end{array}
\end{cases}
\]
\[ \operatorname{Lower} \operatorname{boundary} 
\begin{cases} 
\begin{array}{llll}
F^\delta_{\operatorname{s-hol}}(\widetilde{u}_{\raisebox{-1pt}{\tiny I}})&:=i(\sqrt{2}-1)\check{\F}^\delta_{\operatorname{s-hol}}(\uR); \\
F^\delta_{\operatorname{s-hol}}(\widetilde{u}_{\raisebox{-1pt}{\tiny R}})&:=(\sqrt{2}-1)\check{\F}^\delta_{\operatorname{s-hol}}(\uR); \\
F^\delta_{\operatorname{s-hol}}(\vl)&:={\lambda}\check{\F}^\delta_{\operatorname{s-hol}}(\uR); \\
F^\delta_{\operatorname{s-hol}}(z_1)&:=\check{\F}^\delta_{\operatorname{s-hol}}(\uR)+\check{\F}^\delta_{\operatorname{s-hol}}(\widetilde{u}_{\raisebox{-1pt}{\tiny I}});\\
F^\delta_{\operatorname{s-hol}}(z_2)&:=\check{\F}^\delta_{\operatorname{s-hol}}(\widetilde{u}_{\raisebox{-1pt}{\tiny R}})+\check{\F}^\delta_{\operatorname{s-hol}}({\uI});\\
\end{array}
\end{cases}
\]
\[ \operatorname{Left} \operatorname{boundary} 
\begin{cases} 
\begin{array}{llll}
F^\delta_{\operatorname{s-hol}}(\widetilde{u}_{\raisebox{-1pt}{\tiny R1}})&:=(\sqrt{2}-1){\lambda}\check{\F}^\delta_{\operatorname{s-hol}}(\vlbr); \\
F^\delta_{\operatorname{s-hol}}(\widetilde{u}_{\raisebox{-1pt}{\tiny R2}})&:=(1-\sqrt{2}){\lambda}\check{\F}^\delta_{\operatorname{s-hol}}(\vlbr); \\
F^\delta_{\operatorname{s-hol}}(\uI)&:=-i{\lambda}\check{\F}^\delta_{\operatorname{s-hol}}(\vlbr); \\
F^\delta_{\operatorname{s-hol}}(z_1)&:=\check{\F}^\delta_{\operatorname{s-hol}}(\widetilde{u}_{\raisebox{-1pt}{\tiny R1}})+\check{\F}^\delta_{\operatorname{s-hol}}({\uI});\\
F^\delta_{\operatorname{s-hol}}(z_2)&:=\check{\F}^\delta_{\operatorname{s-hol}}(\widetilde{u}_{\raisebox{-1pt}{\tiny R2}})+\check{\F}^\delta_{\operatorname{s-hol}}({\uI});\\
\end{array}
\end{cases}
\]
\[ \operatorname{Right} \operatorname{boundary} 
\begin{cases} 
\begin{array}{llll}
F^\delta_{\operatorname{s-hol}}(\widetilde{u}_{\raisebox{-1pt}{\tiny I1}})&:=i(1-\sqrt{2}){\lambda}\check{\F}^\delta_{\operatorname{s-hol}}(\vlbr); \\
F^\delta_{\operatorname{s-hol}}(\widetilde{u}_{\raisebox{-1pt}{\tiny I2}})&:=i(\sqrt{2}-1){\lambda}\check{\F}^\delta_{\operatorname{s-hol}}(\vlbr); \\
F^\delta_{\operatorname{s-hol}}(\uR)&:={\lambda}\check{\F}^\delta_{\operatorname{s-hol}}(\vlbr); \\
F^\delta_{\operatorname{s-hol}}(z_1)&:=\check{\F}^\delta_{\operatorname{s-hol}}(\widetilde{u}_{\raisebox{-1pt}{\tiny I1}})+\check{\F}^\delta_{\operatorname{s-hol}}({\uR});\\
F^\delta_{\operatorname{s-hol}}(z_2)&:=\check{\F}^\delta_{\operatorname{s-hol}}(\widetilde{u}_{\raisebox{-1pt}{\tiny I2}})+\check{\F}^\delta_{\operatorname{s-hol}}({\uR}).\\
\end{array}
\end{cases}
\]

\begin{prop}\label{second_half_dirichlet}
The function $F^\delta_{\operatorname{s-hol}}$ is s-holomorphic on $(\clom^\delta\smallsetminus\{\www\})\sqcup\mathcal{V}_{\diamond}^\delta$ and
$[\bar{\partial}F^\delta_{\operatorname{s-hol}}](\www)=\frac{\lambda}{4\delta^2}$.
Moreover, %$F^\delta_{\operatorname{s-hol}}$ 
it satisfies Riemann-type boundary conditions, i.e. for any $z\in\partial\mathcal{V}_{\diamond}^\delta$ one has
\[
\im[F^\delta_{\operatorname{s-hol}}(z)\cdot{\sqrt{(n(z))}}]=0.\]
\end{prop}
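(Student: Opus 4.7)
The plan is to verify the three required properties in turn, exploiting that $F^\delta_{\operatorname{s-hol}}$ coincides with the unmodified $\check F^\delta_{\operatorname{s-hol}}$ away from a thin strip around the boundary.

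First I would handle the $\bar\partial$ identity at $\www$. Since $\www\in\mathrm{Int}\ww{0}^\delta$ lies strictly inside $\om^\delta$, none of its four neighbouring black faces belongs to $\partial^{+}_{\mathrm{int}}\bb{0}^\delta\cup\partial^{-}_{\mathrm{int}}\bb{1}^\delta$, so $F^\delta_{\operatorname{s-hol}}=\check F^\delta_{\operatorname{s-hol}}$ at all four of them. Hence $[\bar\partial F^\delta_{\operatorname{s-hol}}](\www)=[\bar\partial\check F^\delta_{\operatorname{s-hol}}](\www)=\frac{\lambda}{4\delta^2}$, the second equality being the inverse Kasteleyn identity recalled at the start of Section~\ref{coupl} rescaled by $\tfrac{1}{\delta}$.

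For s-holomorphicity, the property is inherited from $\check F^\delta_{\operatorname{s-hol}}$ at every diamond vertex whose four adjacent faces are all unaltered by the modification. Only the boundary diamond vertices $z_1,z_2\in\partial\mathcal V_\diamond^\delta$ and the interior diamond vertex across the modified squares require a hand check. In each of the four mirror cases of Fig.~\ref{Fbdr} this reduces to the three projection identities of Definition~\ref{proj_s_hol}. My approach would be to observe that the formulas assigning $F^\delta_{\operatorname{s-hol}}(z_j)$ at the boundary diamond vertices are precisely the reconstruction rule from Remark~\ref{bij:hol_s-hol} applied to the new values of $F^\delta_{\operatorname{s-hol}}$ on the two adjacent black squares; the projections onto $\tau(\vl),\tau(\vlbr)$ and onto $\tau(\widetilde u)$ then follow by direct algebraic substitution of the explicit coefficients $\pm(1-\sqrt 2),\pm i(1-\sqrt 2),\lambda,\bar\lambda$.

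The Riemann-type boundary condition $\im[F^\delta_{\operatorname{s-hol}}(z)\sqrt{n(z)}]=0$ at each $z\in\partial\mathcal V_\diamond^\delta$ then reduces to substituting the prescribed value of $F^\delta_{\operatorname{s-hol}}(z)$. The key numerical input is the identity $1-\sqrt 2=-\tan(\pi/8)$. For instance, in the upper-boundary case one writes $F^\delta_{\operatorname{s-hol}}(z_1)$ as a real multiple of $1+i(1-\sqrt 2)$, whose argument is $-\pi/8\ (\mathrm{mod}\ \pi)$; this cancels with $\sqrt{n(z_1)}=e^{i\pi/8}$ (the outer normal here being $n(z_1)=\lambda$), producing a real product. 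The remaining three cases follow by mirror-symmetric calculations with signs and conjugations adjusted accordingly. The main obstacle I foresee is notational rather than conceptual: four mirror-symmetric configurations each demand careful bookkeeping of the types $\bb{0},\bb{1},\ww{0},\ww{1}$ and the associated multipliers $\tau\in\{1,i,\lambda,\bar\lambda\}$, but every individual identity is an elementary computation.
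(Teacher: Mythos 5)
Your proposal is correct and follows essentially the same route as the paper: direct substitution of the explicit boundary assignments to verify the three projection identities of s-holomorphicity at the boundary diamond vertices, the argument/normal bookkeeping for $\im[F^\delta_{\operatorname{s-hol}}(z)\sqrt{n(z)}]=0$ (your $1-\sqrt 2=-\tan(\pi/8)$ computation is the same phase calculation the paper performs by factoring out $1-\lambda$, resp.\ $\bar\lambda-1$), and the observation that the modification does not touch the faces neighbouring $\www$, so $[\bar\partial F^\delta_{\operatorname{s-hol}}](\www)=[\bar\partial\check F^\delta_{\operatorname{s-hol}}](\www)=\frac{\lambda}{4\delta^2}$. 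The only difference is presentational (you invoke the reconstruction rule of Remark~\ref{bij:hol_s-hol} to organize the projection checks), not mathematical.
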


\begin{proof}
Let us check that $F^\delta_{\operatorname{s-hol}}$ is s-holomorphic on the upper boundary.
Note that, for the above definition of the function $F^\delta_{\operatorname{s-hol}}$ the following holds:
\[{\mathrm{Proj}}_{\bar\lambda}[F_{\operatorname{s-hol}}^\delta(z_1)]=
{\mathrm{Proj}}_{\bar\lambda}[F_{\operatorname{s-hol}}^\delta(z_2)]=
{\F}_{\operatorname{s-hol}}^\delta(\vlbr),\]
\[{\mathrm{Proj}}_{i}[F_{\operatorname{s-hol}}^\delta(z_1)]=
{\F}_{\operatorname{s-hol}}^\delta(\uI),\]
\[{\mathrm{Proj}}_{1}[F_{\operatorname{s-hol}}^\delta(z_1)]=
{\F}_{\operatorname{s-hol}}^\delta(\widetilde{u}_{\raisebox{-1pt}{\tiny R}}),\]
\[{\mathrm{Proj}}_{i}[F_{\operatorname{s-hol}}^\delta(z_2)]=
{\F}_{\operatorname{s-hol}}^\delta(\widetilde{u}_{\raisebox{-1pt}{\tiny I}}),\]
\[{\mathrm{Proj}}_{1}[F_{\operatorname{s-hol}}^\delta(z_2)]=
{\F}_{\operatorname{s-hol}}^\delta(\uR),\]
where $z_1, z_2, \vlbr, \uR, \uI, \widetilde{u}_{\raisebox{-1pt}{\tiny R}}, \widetilde{u}_{\raisebox{-1pt}{\tiny R}}$ are on the upper boundary as shown on Fig.~\ref{Fbdr}. The cases of lower, right and left boundaries can be checked similarly. Therefore, $F^\delta_{\operatorname{s-hol}}$ is s-holomorphic on $(\clom^\delta\smallsetminus\{\www\})\sqcup\mathcal{V}_{\diamond}^\delta$.

We now check that $F^\delta_{\operatorname{s-hol}}$ satisfies Riemann-type boundary conditions on the left boundary. Note that by the above definition
\[ 
F^\delta_{\operatorname{s-hol}}(z_1)={\F}^\delta_{\operatorname{s-hol}}(\widetilde{u}_{\raisebox{-1pt}{\tiny R1}})+{\F}^\delta_{\operatorname{s-hol}}({\uI})=(\sqrt{2}-1){\lambda}{\F}^\delta_{\operatorname{s-hol}}(\vlbr)+{\F}^\delta_{\operatorname{s-hol}}({\uI}),
\]
\[ 
F^\delta_{\operatorname{s-hol}}(z_2)={\F}^\delta_{\operatorname{s-hol}}(\widetilde{u}_{\raisebox{-1pt}{\tiny R2}})+{\F}^\delta_{\operatorname{s-hol}}({\uI})=(1-\sqrt{2}){\lambda}{\F}^\delta_{\operatorname{s-hol}}(\vlbr)+{\F}^\delta_{\operatorname{s-hol}}({\uI}),
\]
where $z_1, z_2, \vlbr, \uI, \widetilde{u}_{\raisebox{-1pt}{\tiny R1}},  \widetilde{u}_{\raisebox{-1pt}{\tiny R2}}$ are on the left boundary as shown on Fig.~\ref{Fbdr}. 
Therefore, 
\[ 
F^\delta_{\operatorname{s-hol}}(z_1)=(\sqrt{2}-1){\lambda}{\F}^\delta_{\operatorname{s-hol}}(\vlbr)-i{\lambda}{\F}^\delta_{\operatorname{s-hol}}(\vlbr)=\sqrt{2}{\lambda}{\F}^\delta_{\operatorname{s-hol}}(\vlbr)(1-\lambda),
\]
\[ 
F^\delta_{\operatorname{s-hol}}(z_2)=(1-\sqrt{2}){\lambda}{\F}^\delta_{\operatorname{s-hol}}(\vlbr)-i{\lambda}{\F}^\delta_{\operatorname{s-hol}}(\vlbr)=\sqrt{2}{\lambda}{\F}^\delta_{\operatorname{s-hol}}(\vlbr)(\bar\lambda-1),
\]
since $F^\delta_{\operatorname{s-hol}}(\uI)=-i{\lambda}{\F}^\delta_{\operatorname{s-hol}}(\vlbr)$. 
Note that $(1-\lambda)\cdot{\sqrt{(n(z_1))}}\in\mathbb{R}$, $(\bar\lambda-1)\cdot{\sqrt{(n(z_2))}}\in\mathbb{R}$  on the left boundary and ${\lambda}{\F}^\delta_{\operatorname{s-hol}}(\vlbr)\in\mathbb{R}$. Therefore, for any $z\in\partial\mathcal{V}_{\diamond}^\delta$ on the left boundary we obtain $\im[F^\delta_{\operatorname{s-hol}}(z)\cdot{\sqrt{(n(z))}}]=0.$ The cases of upper, lower and right boundaries can be checked similarly.

To finish the proof note that $[\bar{\partial}F^\delta_{\operatorname{s-hol}}](\www)=[\bar{\partial}F^\delta](\www)=\frac{\lambda}{4\delta^2}.$
\end{proof}

Riemann boundary conditions of the coupling function imply the following local relations for the domino probabilities in hedgehog domains. These relations are not satisfied for general even domains. 

\begin{cor} Let $\mathbb{P}[u,v]$ be the probability that the domino  $[uv]$ is contained in a random domino tiling of~$\om^\delta$, where $u\in\bb{}$ and $v\in\ww{}$. Then for a hedgehog domain $\om^\delta$ the following holds\\
for $a\in\intw{}$
\[\mathbb{P}[a-\delta\lambda,a]+\mathbb{P}[a-\delta\bar\lambda,a]=\mathbb{P}[a+\delta\lambda,a]+\mathbb{P}[a+\delta\bar\lambda,a]=\frac{1}{2},\]
for $a\in\intb{}$
\[\mathbb{P}[a,a+\delta\lambda]+\mathbb{P}[a,a-\delta\bar\lambda]=\mathbb{P}[a,a-\delta\lambda]+\mathbb{P}[a,a+\delta\bar\lambda]=\frac{1}{2}.\] 
\end{cor}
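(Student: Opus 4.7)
The plan is to translate the probability statements into identities for the coupling function and then extract the finer splitting from the Riemann-type boundary conditions established in Section~\ref{3}. By Kenyon's local statistics formula one has $\mathbb P[u,v]=K_{\Omega^\delta}(u,v)\,C_{\Omega^\delta}(u,v)$, with Kasteleyn's sign convention ensuring that each such product is a non-negative real number. For $a\in\intw{}^\delta$, using the Kasteleyn weights read off from the discrete $\bar\partial$-identity at the beginning of Section~\ref{coupl} together with the dichotomy $C(\cdot,a)\in\mathbb R$ on $\bb{0}^\delta$ vs.\ $\in i\mathbb R$ on $\bb{1}^\delta$, one finds
\[
\mathbb P[a+\delta\lambda,a]=C(a+\delta\lambda,a),\quad \mathbb P[a-\delta\lambda,a]=-C(a-\delta\lambda,a),\quad \mathbb P[a\pm\delta\bar\lambda,a]=\mp i\,C(a\pm\delta\bar\lambda,a),
\]
so that the pole equation $[\bar\partial C(\cdot,a)](a)=\lambda/(4\delta^2)$ is exactly the elementary sum-to-one $\sum_{u\sim a}\mathbb P[u,a]=1$.

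The non-trivial content of the corollary is the equality of the left and right halves, i.e.\ one additional real identity among the four values of $C$ at the neighbours of $a$, namely
\[
C(a+\delta\lambda,a)+C(a-\delta\lambda,a)\;=\;i\bigl(C(a+\delta\bar\lambda,a)+C(a-\delta\bar\lambda,a)\bigr);
\]
adding and subtracting this with the Kasteleyn pole equation yields the claimed $\tfrac12+\tfrac12$ splitting. The plan is to derive this extra identity by specialising Proposition~\ref{second_half_dirichlet} at $\www=a$ and using uniqueness for $\operatorname{RBVP}(\Omega^\delta,a)$ (end of Section~\ref{h-h=ff}) together with the emptiness of the four boundary classes $\partial^{+}_{\mathrm{int}}\bb{1}^\delta$, $\partial^{-}_{\mathrm{int}}\bb{0}^\delta$, $\partial^{\sharp}_{\mathrm{int}}\ww{0}^\delta$, $\partial^{\flat}_{\mathrm{int}}\ww{1}^\delta$ on a hedgehog domain. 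Concretely, I would observe that the explicit boundary modification formulas of Section~\ref{coupl} are covariant under the scalar rotation $F\mapsto\lambda F$ (this is what the emptiness of the four ``conjugate'' corner classes buys), apply the covariance to the unique $F^\delta_{s\text{-hol}}$, and compare polar residues at $a$ to extract the identity above.

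For $a\in\intb{}^\delta$ the same strategy applies to the ``second-argument'' Kasteleyn pole equation $\sum_{w\sim a}K(a,w)\,C(a,w)=1$, which comes from the discrete holomorphicity of $C(a,\cdot)$ in its second variable. The same rotation argument on hedgehog geometry now interchanges the roles of the $\lambda$- and $\bar\lambda$-directions and produces the vertical splitting $\mathbb P_U=\mathbb P_D=\tfrac12$ instead of the horizontal one. (Alternatively, the black identity could be obtained by combining the white identity at a suitably chosen white neighbour of $a$ with the Kasteleyn pole equation at $a$.)

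The main obstacle is the covariance step for the $\lambda$-rotation: one has to verify that the four boundary modification formulas (upper, lower, left, right) of Section~\ref{coupl}, with their coefficients $\pm i(1-\sqrt 2),\ \pm(1-\sqrt 2),\ \pm\lambda,\ \pm\bar\lambda$, intertwine correctly with the rotation, and this is exactly where the four empty boundary classes are used. If that direct case analysis proves too delicate, a robust fallback is a discrete Green's-identity computation for the primitive $H^\delta$ of $(F^\delta_{s\text{-hol}})^2$, exploiting the Dirichlet conditions of Proposition~\ref{proportiesH} along a contour enclosing $a$; this should again produce the extra identity, with less reliance on the global rotational symmetry.
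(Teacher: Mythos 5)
Your reduction of the statement is exactly right: the pole equation $[\bar\partial^\delta C_{\om^\delta}(\cdot,a)](a)=\lambda/(4\delta^2)$ together with the sign identifications encodes $\sum_{u\sim a}\mathbb P[u,a]=1$, and the whole content is the single extra real identity
$C(a+\delta\lambda,a)+C(a-\delta\lambda,a)=i\bigl(C(a+\delta\bar\lambda,a)+C(a-\delta\bar\lambda,a)\bigr)$,
which is precisely what the paper proves. However, your primary mechanism for producing it --- ``covariance under the scalar rotation $F\mapsto\lambda F$'' of the boundary-modification formulas, plus a comparison of polar residues --- has a genuine gap. Multiplying a solution of $\operatorname{RBVP}(\om^\delta,\www)$ by $\lambda$ changes the boundary condition from $\im[F\sqrt{n(z)}]=0$ to $\im[\lambda F\sqrt{n(z)}]=0$, i.e.\ it solves the Riemann problem for a \emph{rotated} domain, not for $\om^\delta$; since no rotational symmetry of $\om^\delta$ is assumed, uniqueness of the RBVP relates nothing to nothing, and no identity for the fixed domain comes out. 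Nor can the emptiness of the four corner classes rescue a purely local/covariance manipulation: the paper stresses that the identity fails for general even domains, so any proof must inject global boundary information.

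That global input is exactly your fallback, and it is the paper's actual proof. One first shows (Section~\ref{h-h=ff}) that the primitive $\HH^\delta$ of $(F^\delta_{\operatorname{s-hol}})^2$ has no monodromy: contour integrals around interior vertices of $\mathcal V^\delta_\diamond$ vanish by s-holomorphicity, and the integral along the boundary contour vanishes because of the Riemann boundary conditions of Proposition~\ref{second_half_dirichlet}; hence $\oint_{\www}(F^\delta_{\operatorname{s-hol}})^2\,dz=0$. Expanding this small contour around $\www$ (Fig.~\ref{H_monodr}), using $F^\delta_{\operatorname{s-hol}}(\vl)=\tfrac{\lambda}{\sqrt2}\bigl(F^\delta_{\operatorname{s-hol}}(\uR)-iF^\delta_{\operatorname{s-hol}}(\uI)\bigr)$ and its $\bar\lambda$-analogue, gives exactly your extra identity, and adding/subtracting the pole equation yields the $\tfrac12+\tfrac12$ splitting. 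For the remaining cases the paper does not use a second-argument pole equation: it simply rotates the whole domain by $\pi$, $\tfrac\pi2$, $\tfrac{3\pi}2$ (the hedgehog class is rotation invariant and the sublattices $\ww{0},\ww{1},\bb{0},\bb{1}$ swap roles), which is cleaner than your proposed alternatives; in particular, combining white identities at a neighbour of a black face with the pole equation at that face does not close on its own, since each white identity drags in three other black faces. So: drop (or genuinely repair) the covariance step, keep the contour/Green's identity for $\HH^\delta$, and carry out that short computation --- then your argument coincides with the paper's.
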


\begin{proof} We check the statement for $a\in\intw{0}$. Let $v_0=a$. Recall that $|C_{\om^\delta}(u,v)|=\mathbb{P}[u,v]$ for adjacent $u\in\bb{}$ and $v\in\ww{}$, therefore, 
\[|C_{\om^\delta} (v+\delta\lambda,v)|+|C_{\om^\delta} (v-\delta\lambda,v)|+
 |C_{\om^\delta} (v-\delta\bar{\lambda},v)|+
 |C_{\om^\delta} (v+\delta\bar{\lambda},v)|=1.\]
On the other hand, \[C_{\om^\delta} (v+\delta\lambda,v)-C_{\om^\delta} (v-\delta\lambda,v)+
 iC_{\om^\delta} (v-\delta\bar{\lambda},v)-
 iC_{\om^\delta} (v+\delta\bar{\lambda},v)=1.\]
 Hence, 
 \begin{align*}
 &\mathbb{P}[\www+\delta\lambda,\www]=C_{\om^\delta}(\www+\delta\lambda,\www)=\delta F^\delta_{\operatorname{s-hol}}(\www+\delta\lambda),\\
 &\mathbb{P}[\www-\delta\lambda,\www]=-C_{\om^\delta}(\www-\delta\lambda,\www)=-\delta F^\delta_{\operatorname{s-hol}}(\www-\delta\lambda),\\
  &\mathbb{P}[\www+\delta\bar\lambda,\www]=-iC_{\om^\delta}(\www+\delta\bar\lambda,\www)=-i\delta F^\delta_{\operatorname{s-hol}}(\www+\delta\bar\lambda),\\
   &\mathbb{P}[\www-\delta\bar\lambda,\www]=iC_{\om^\delta}(\www-\delta\bar\lambda,\www)=i\delta F^\delta_{\operatorname{s-hol}}(\www-\delta\bar\lambda).
 \end{align*}
 
%Note that \[F^\delta_{\operatorname{s-hol}}(\www+\delta\lambda)-iF^\delta_{\operatorname{s-hol}}(\www+\delta\bar\lambda)-F^\delta_{\operatorname{s-hol}}(\www-\delta\lambda)+iF^\delta_{\operatorname{s-hol}}(\www-\delta\bar\lambda)=\frac{1}{\delta}.\] 
In Section~\ref{h-h=ff} we showed that the primitive $H^\delta$ of the square of the solution $\F_{\operatorname{s-hol}}^\delta$ of a discrete boundary value problem ${\operatorname{RBVP}(\om^\delta,\www)}$ defined by~(\ref{def H}) is well-defined, 
therefore, for $z_\bullet$, $z'_\bullet$, $z''_\bullet$, $z_\circ$, $z'_\circ$, $z''_\circ$ and $\www$ as shown on~Fig.~\ref{H_monodr}
\begin{align*}
0=&(\HH(z_\bullet)-\HH(z'_\circ)) + (\HH(z'_\circ)-\HH(z'_\bullet)) + (\HH(z'_\bullet)-\HH(z_\circ)) +\\
&(\HH(z_\circ)-\HH(z''_\bullet))+ (\HH(z''_\bullet)-\HH(z''_\circ)) + (\HH(z''_\circ)-\HH(z_\bullet))=\\
&i\delta \left( \frac{\lambda}{\sqrt2} \left( F^\delta_{\operatorname{s-hol}}(\www-\delta\lambda)-iF^\delta_{\operatorname{s-hol}}(\www-\delta\bar\lambda) \right)\right)^2
-i\delta\left(\frac{\lambda}{\sqrt2}\left(F^\delta_{\operatorname{s-hol}}(\www+\delta\lambda)-iF^\delta_{\operatorname{s-hol}}(\www+\delta\bar\lambda)\right)\right)^2=\\
&-\frac{1}{2}\left(
F^\delta_{\operatorname{s-hol}}(\www+\delta\lambda)-iF^\delta_{\operatorname{s-hol}}(\www+\delta\bar\lambda)
+F^\delta_{\operatorname{s-hol}}(\www-\delta\lambda)-iF^\delta_{\operatorname{s-hol}}(\www-\delta\bar\lambda)\right)=\\
&\frac{1}{2\delta}\left(\left(\mathbb{P}[\www-\delta\lambda,\www]+\mathbb{P}[\www-\delta\bar\lambda,\www]\right)-\left(\mathbb{P}[\www+\delta\lambda,\www]+\mathbb{P}[\www+\delta\bar\lambda,\www]\right)\right).
\end{align*}
To obtain the result for $a\in\intw{1}$ rotate $\om^\delta$ by $\pi$ and note that $\ww{0}$ and $\ww{1}$ change the roles. For $a\in\intb{1,2}$ rotate $\om^\delta$ by $\frac{\pi}{2}$ and $\frac{3\pi}{2}$.
\end{proof}

\subsection{Proof of the convergence}\label{poc} Let $F^\delta_{\operatorname{s-hol}}$ solve the discrete Riemann boundary value problem ${\operatorname{RBVP}(\om^\delta,\www)}$.
In this section we prove the convergence of $F^\delta_{\operatorname{s-hol}}$ to its continuous counterpart.

Let $\F^\delta_{\mathbb{C}, \www^\delta}$ be the unique discrete s-holomorphic function on the whole plane $\mathbb{C}^\delta \smallsetminus \{\www^\delta\}$ tending to zero at infinity and such that $[\bar{\partial}^\delta\F^\delta_{\mathbb{C}, \www^\delta}](\www^\delta) = \frac{\lambda}{4\delta^2}.$ 
%Note that $\re\F^\delta_{\mathbb{C}, \www^\delta}$ and $\im\F^\delta_{\mathbb{C}, \www^\delta}$ are discrete harmonic everywhere except two squares adjacent to $\www^\delta$. 
The function $\F^\delta_{\mathbb{C}(z), \www^\delta}$ is asymptotically equal to $\frac{1}{2\pi}\cdot\frac{\lambda}{z-\www}$ as~$\delta\downarrow 0$, see~\cite[Theorem 2.21]{C+S}. Let us define a discrete primitive $H_\star^\delta\colon \mathcal{V}_{\bullet}^\delta\sqcup\mathcal{V}_{\circ}^\delta 
\to \mathbb{R}$ of the difference $F^\delta_{\operatorname{s-hol}}-\F^\delta_{\mathbb{C}, \www^\delta}$ in the same way as above: 
\begin{equation} \label{H_star}
H_\star^\delta(z_2)-H_\star^\delta(z_1)=[F_{\operatorname{s-hol}}^\delta-\F^\delta_{\mathbb{C}, \www^\delta}]^2(a)\cdot(z_2 - z_1),
\end{equation}
where $z_1\in\mathcal{V}_{\circ}^\delta$, $z_2\in\mathcal{V}_{\bullet}^\delta$ are two vertices of the same square $a$.

\begin{remark}\label{harmH*}
The difference $F_{\operatorname{s-hol}}^\delta-\F^\delta_{\mathbb{C}, \www^\delta}$ is s-holomorphic everywhere on $\mathcal{V}_\diamond^\delta\sqcup\clom^\delta$, therefore the function $\HH^\delta_\star$ is leap-frog subharmonic on $\mathcal{V}_{\circ}^\delta$ and it is leap-frog superharmonic on $\mathcal{V}_{\bullet}^\delta$.
\end{remark}

The following convergence theorem for s-holomorphic functions is a straightforward analogue of~\cite[Theorem~$1.8$]{HS}. Alternatively, one can use ideas described in the proof of~\cite[Theorem~$5.5$]{CGS} or ideas from the proof of~\cite[Theorem $2.16$]{CHI}. %For sufficiently small $\delta$, we say that a hedgehog domain $\om^\delta$ approximates a simply connected domain $\om$ if  (in the sense of Carath\'eodory) the boundaries of the hedgehog domain are within $O(\delta)$ of the boundaries of $\om$.%, and if furthermore, all convex white corners  $\wvupugl{k}$ are within $O(\delta)$ of the set of marked points $\nwvupugl{k}$ and all concave white corners  $\wvpugl{j}$ are within $O(\delta)$ of the set of marked points~$\nwvpugl{j}$.

\begin{Th}\label{convF}
Let $\om^\delta$ be a sequence of discrete hedgehog domains of mesh size $\delta$ approximating a simply connected domain $\om$.  Let $\www^{\delta}$ approximate an inner point $v\in\om$. Then $F^\delta_{\operatorname{s-hol}}$ converges uniformly on compact subsets of  $\om\smallsetminus\{v\}$ to a continuous holomorphic function $f^v_\om$, where $f^v_\om$ is defined as in Proposition~$\ref{f}$ (or is defined by~(\ref{notSmooth}), if $\om$ is not smooth).
\end{Th}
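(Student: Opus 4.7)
The argument follows the standard strategy for convergence of s-holomorphic observables in the Ising context (cf.~\cite{HS, CS, CHI}), adapted to the Riemann-type boundary conditions established in Proposition~\ref{second_half_dirichlet}. First I would obtain uniform boundedness of $F^\delta_{\operatorname{s-hol}}$ on compact subsets of $\om\smallsetminus\{v\}$ by controlling its primitive $H^\delta$ from~(\ref{def H}) and the modified primitive $H^\delta_\star$ from~(\ref{H_star}). By Proposition~\ref{proportiesH}, $H^\delta$ satisfies Dirichlet conditions on $\partial\mathcal{V}^\delta_\circ\cup\tilde\partial\mathcal{V}^\delta_\bullet$, is leap-frog sub/super-harmonic away from the source at $\www^\delta$, and satisfies the squeezing $H^\delta_\circ\geq H^\delta_\bullet$ across each square. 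Near $v$, the function $H^\delta_\star$ is free of the pole (see Remark~\ref{harmH*}); combining the global sub/super-harmonicity of $H^\delta_\star$ with the comparison $F^\delta_{\operatorname{s-hol}}\approx\F^\delta_{\mathbb{C},\www^\delta}\sim\frac{1}{2\pi}\cdot\frac{\lambda}{z-v}$ from~\cite[Theorem 2.21]{C+S} and the maximum principle yields a uniform $L^\infty$ bound for $H^\delta$ on compacts of $\om\smallsetminus\{v\}$. Via~(\ref{def H}) this controls $|F^\delta_{\operatorname{s-hol}}|^2$.

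Next, equicontinuity of $\{F^\delta_{\operatorname{s-hol}}\}$ on compact subsets of $\om\smallsetminus\{v\}$ follows from standard estimates expressing the discrete derivatives of an s-holomorphic function through its primitive, see~\cite{C+S, CS}. Arzel\`a--Ascoli then produces subsequential limits $f$, which are holomorphic on $\om\smallsetminus\{v\}$ as limits of discretely $\bar\partial$-closed functions, and which inherit the simple pole $\tfrac{1}{2\pi}\cdot\tfrac{\lambda}{z-v}$ at $v$ from the asymptotics of $\F^\delta_{\mathbb{C},\www^\delta}$. To identify $f$ with $f^v_\om$ I would invoke the last proposition of Section~\ref{cont}: the two restrictions $H^\delta_\circ$ and $H^\delta_\bullet$ squeeze each other by Proposition~\ref{proportiesH}, so they share a common continuous limit $h$, harmonic on $\om\smallsetminus\{v\}$; the vanishing of $H^\delta$ on $\partial\mathcal{V}^\delta_\circ\cup\tilde\partial\mathcal{V}^\delta_\bullet$ passes to the limit and gives (h1); the nonpositive inner normal derivative of $H^\delta_\bullet$ translates into the sign condition (h2); and uniform boundedness of $H^\delta_\star$ near $v$ yields (h3). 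The characterization then forces $f=f^v_\om$, and uniqueness of the subsequential limit implies convergence of the full sequence.

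The main difficulty is the rigorous passage to the limit of $H^\delta$ all the way up to the boundary $\dom$, which is not assumed smooth. Direct Dirichlet behaviour of $H^\delta$ is established only on $\partial\check{\mathcal{V}}^\delta_\circ$ in Proposition~\ref{half_dirichlet}; its extension to the entire $\partial\mathcal{V}^\delta_\circ$ rests on the boundary modification of $F^\delta_{\operatorname{s-hol}}$ introduced right before Proposition~\ref{second_half_dirichlet}, and one must verify that this modification is compatible with the weighted leap-frog Laplacian~(\ref{lfH}) whose boundary coefficient $c_{zw}=2(\sqrt 2-1)$ is precisely tuned to the hedgehog geometry. Confirming that the one-sided boundary condition (h2) genuinely survives in the scaling limit at irregular boundary points, in the spirit of \cite[Theorem~1.8]{HS} and \cite[Theorem~2.16]{CHI}, is the most delicate step; it is exactly here that the hedgehog structure, with every boundary edge equipped with a diagonal spike, is indispensable for producing the uniform Riemann boundary conditions needed to close the argument.
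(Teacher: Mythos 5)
Your second half (Arzel\`a--Ascoli, holomorphy of subsequential limits, and identification with $f^v_\om$ through the characterisation (h1)--(h3) of the primitive of the square) coincides with part~1 of the paper's proof. The gap is in your first step: you assert that a uniform $L^\infty$ bound for $\HH^\delta$ on compact subsets of $\om\smallsetminus\{v\}$ follows from ``combining the global sub/super-harmonicity of $H^\delta_\star$ with the comparison $F^\delta_{\operatorname{s-hol}}\approx\F^\delta_{\mathbb{C},\www^\delta}$ and the maximum principle''. No such direct argument is available, because neither primitive has controlled boundary data on the whole boundary of the region where you would apply the maximum principle. The function $\HH^\delta$ does vanish on $\partial\mathcal{V}^\delta_\circ\cup\tilde\partial\mathcal{V}^\delta_\bullet$, but it is a priori uncontrolled on $\partial B^\delta_r(\www^\delta)$; to control it there you would already need to know that $F^\delta_{\operatorname{s-hol}}$ stays comparable to $\F^\delta_{\mathbb{C},\www^\delta}$ near $v$, which is part of what is being proved. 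Conversely, $H_\star^\delta$ is sub/super-harmonic across $\www^\delta$ (Remark~\ref{harmH*}), but it is the primitive of $(F^\delta_{\operatorname{s-hol}}-\F^\delta_{\mathbb{C},\www^\delta})^2$, not the difference of the two primitives, so its values on $\partial\om^\delta$ involve the unknown boundary values of $F^\delta_{\operatorname{s-hol}}$ and are not a priori bounded. The argument is therefore circular: the a priori bound (equivalently, boundedness of $\mdel(r)=\max_{\om^\delta_r}|\HH^\delta|$) is exactly the nontrivial content of the theorem, not a consequence of [C+S, Theorem~2.21] plus the maximum principle.

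The paper closes this gap with a dichotomy and renormalization argument that is absent from your proposal. If $\mdel(r)$ stays bounded, your outline (the paper's part~1) applies. If $\mdel(r)\to\infty$ along a subsequence, one studies $\Ftilda^\delta=F^\delta_{\operatorname{s-hol}}/\sqrt{\mdel(r)}$ and $\widetilde H^\delta=\HH^\delta/\mdel(r)$: any subsequential limit $\widetilde h$ of $\widetilde H^\delta$ is harmonic, vanishes on $\dom$, has nonnegative outer normal derivative, and is bounded near $v$ because $\F^\delta_{\mathbb{C},\www^\delta}/\sqrt{\mdel(r)}\to 0$ makes the renormalized $H_\star^\delta$ have the same limit; the maximum principle then forces $\widetilde h\equiv 0$. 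One must finally rule out the zero limit: normalization produces $z^\delta_{\max}$ with $|\widetilde H^\delta(z^\delta_{\max})|=1$, which by the discrete maximum principle lies on $\partial B^\delta_r(\www^\delta)$, while the renormalized $H_\star^\delta$ (hence also $\Ftilda^\delta$ and $\widetilde H^\delta$) tends to zero uniformly near $\partial B_r(v)$, a contradiction. Without this step, or an equivalent a priori estimate, your proof does not get off the ground; by contrast, the points you single out as most delicate (compatibility of the boundary modification with the weighted leap-frog Laplacian, survival of (h2) at rough boundary points) are dispatched in the paper by direct citation of~\cite{CS} and are not where the work lies.
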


The proof is done following the ideas described in ~\cite{CHI} %(proof of Theorem $2.16$) 
and using results described in~\cite{CS}.

\begin{proof}
Let $u^\delta$ be a square on the square lattice with mesh size $\delta$. We denote by $B_r^\delta(u^\delta)$ the set of squares and vertices on this lattice such that the distance from them to $u^\delta$ is less than or equal to~$r$. Let $\partial B_r^\delta(u^\delta)$ be the set of boundary squares and vertices of the set $B_r^\delta(u^\delta)$.

Let $\om_{r}^\delta=(\om^\delta\sqcup\mathcal{V}^\delta)\smallsetminus B_r^\delta(\www^\delta).$ 
Let $\mdel(r)=\max\limits_{z^\delta\in\om_{r}^\delta}|\HH^\delta(z^\delta)|.$

\bigskip
%\medskip

\noindent {\it 1. Assume that for each fixed positive $r$ the function $\mdel(r)$ is bounded, as $\delta\to~0.$}

\bigskip

Theorem 3.12 in~\cite{CS} implies that the functions $F^\delta_{\operatorname{s-hol}}$ are uniformly bounded and therefore equicontinuous on $\om_{r}^\delta$. Thus, due to the Arzelà–Ascoli theorem, the family $F^\delta_{\operatorname{s-hol}}$ is precompact and hence converges along a subsequence to some holomorphic function~$\widetilde{f}$ and $\HH^\delta$ converges to $\widetilde{h}:=\re\int\widetilde{f}^2$ uniformly on compact subsets of $\om\smallsetminus\{v\}$. Let us show that $\widetilde{f}=f^v_\om.$ It is enough to check that $\widetilde{f}$ satisfies properties (h1) -- (h3). Then the uniqueness of a solution of the boundary value problem (f1) -- (f2) implies that $\widetilde{f}$ coincides with the function $f^v_\om.$

Discrete Dirichlet boundary conditions together with the maximum principle for $\HH^\delta$ implies $\widetilde{h}\equiv 0$ on $\dom$, which gives us~(h1). It follows from \cite[Remark 6.3]{CS} that we also have~(h2). 
The fact that $\F^\delta_{\mathbb{C}(z), \www^\delta}$ is asymptotically equal to $\frac{1}{2\pi}\cdot\frac{\lambda}{z-\www}$ implies that $\HH_\star^\delta$ converges to a harmonic function
$\widetilde{h}_\star:= \re\int \left( \widetilde{f}(z)-\frac{1}{2\pi}\cdot\frac{\lambda}{z-v} \right) ^2\,dz$. Remark~\ref{harmH*} gives us that $\widetilde{h}_\star$ is bounded in a vicinity of $v$. Hence $\widetilde{f}$ satisfies properties (h1) -- (h3), so $\widetilde{f}=f^v_\om.$

\bigskip
%\medskip

\noindent {\it 2. Now, suppose that, for some $r>0$, $\mdel(r)$ tends to infinity along a subsequence as $\delta\to 0$.}

\bigskip
%\medskip

Let us show that this is impossible.
Consider renormalized functions $\Ftilda^\delta:=\frac{F_{\operatorname{s-hol}}^\delta}{\sqrt{\mdel(r)}}$ and $\widetilde{H}^\delta:=\frac{H^\delta}{\mdel(r)}.$ 
Using the same arguments as above, we can show that the family $\Ftilda^\delta$ converges to some holomorphic function $\widetilde{f}$ and $\widetilde{H}^\delta$ converges to the harmonic function $\widetilde{h}=\re\int\widetilde{f}^2$ on compact subsets of $\om\smallsetminus B_r(v)$. 

Suppose that $\widetilde{h}$ cannot be identically zero. Then for any $0 < r' < r$ there exists $C(r', r)$ independent of $\delta$, such that $M^\delta(r')\leq C(r', r)\cdot M^\delta(r).$ Therefore we may assume that  
$\widetilde{H}^\delta$ converges to $\widetilde{h}$ uniformly on each $\om_{r'}=\om\smallsetminus B_{r'}(v)$.
Arguing as above, we see that $\widetilde{h}$ is harmonic and satisfies properties (h1) -- (h2). Moreover, since 
$\frac{\F^\delta_{\mathbb{C}, \www^\delta}}{\sqrt{\mdel(r)}}$ tends to zero (as $\delta\to 0$), the limit of $\widetilde{H}_\star^\delta$ coincides with $\widetilde{h}$. Therefore $\widetilde{h}$ is bounded in a vicinity of $v$, satisfies Dirichlet boundary conditions $h\equiv 0$ on $\dom$ and has a nonnegative outer normal derivative. This contradicts  the maximum principle, if it is not identically zero.

{\it 3. To complete the proof it remains to show that none of the subsequential  limits of $\widetilde{\HH}^\delta$ is identically zero.}

Suppose that $\widetilde{H}^\delta$ converges to zero uniformly on compact subsets of $\om\smallsetminus B_r(v)$. 
Let $z_{\operatorname{max}}^\delta$ be chosen so that 
%\[
$ 1=\sup\limits_{z^\delta\in\om_{r}^\delta}|\widetilde{H}^\delta(z^\delta)|=|\widetilde{H}^\delta(z_{\operatorname{max}}^\delta)|.$
Since $\widetilde{H}^\delta$ vanishes on the boundary, the discrete maximum principle implies that  $z_{\operatorname{max}}\in \partial B^\delta_r(\www^\delta)$. 

Consider the function $\frac{H_\star^\delta}{\mdel(r)}$. Note that it tends to zero on compact subsets of $\om\smallsetminus B_r(v)$. Therefore the maximum principle together with Remark~\ref{harmH*} implies that $\frac{H_\star^\delta}{\mdel(r)}$ tends to zero in the neighbourhood of $\partial B_r(v)$. Hence, each of the functions 
$
\frac{F_{\operatorname{s-hol}}^\delta-\F^\delta_{\mathbb{C}, \www^\delta}}{\sqrt{\mdel(r)}}, \quad \frac{F_{\operatorname{s-hol}}^\delta}{\sqrt{\mdel(r)}} \quad \text{and} \quad
\frac{H^\delta}{{\mdel(r)}} 
$
tends to zero uniformly in the neighbourhood of $\partial B_r(v)$. In particular, we have $1=|\widetilde{H}^\delta(z_{\operatorname{max}}^\delta)|\to 0$, which is a contradiction.
\end{proof}

\old{
Now to complete the proof of Theorem~\ref{convF} it remains to prove the following:

\begin{lemma} In the notations of the proof above, none of the subsequential  limits of $\widetilde{\HH}^\delta$ is identically zero.
\end{lemma}

\begin{proof} Suppose that $\widetilde{H}^\delta$ converges to zero uniformly on compact subsets of $\om\smallsetminus B_r(v)$. 
Let $z_{\operatorname{max}}^\delta$ be chosen so that 
%\[
$ 1=\sup\limits_{z^\delta\in\om_{r}^\delta}|\widetilde{H}^\delta(z^\delta)|=|\widetilde{H}^\delta(z_{\operatorname{max}}^\delta)|.$
Since $\widetilde{H}^\delta$ vanishes on the boundary, the discrete maximum principle implies that  $z_{\operatorname{max}}\in \partial B^\delta_r(\www^\delta)$. 

Consider the function $\frac{H_\star^\delta}{\mdel(r)}$. Note that it tends to zero on compact subsets of $\om\smallsetminus B_r(v)$. Therefore the maximum principle together with Remark~\ref{harmH*} implies that $\frac{H_\star^\delta}{\mdel(r)}$ tends to zero in the neighbourhood of $\partial B_r(v)$. And hence, each of the functions 
$
\frac{F_{\operatorname{s-hol}}^\delta-\F^\delta_{\mathbb{C}, \www^\delta}}{\sqrt{\mdel(r)}}, \quad \frac{F_{\operatorname{s-hol}}^\delta}{\sqrt{\mdel(r)}} \quad \text{and} \quad
\frac{H^\delta}{{\mdel(r)}} 
$
tends to zero uniformly in the neighbourhood of $\partial B_r(v)$. In particular, we have $1=|\widetilde{H}^\delta(z_{\operatorname{max}}^\delta)|\to 0$, which is a contradiction.
\end{proof}}

\begin{figure}
\begin{center}

\begin{tikzpicture}[x={(0.5cm,0.5cm)}, y={(-0.5cm,0.5cm)}]

\begin{scope}

\path (-1,0) node[name=l1, shape=coordinate]{};
\path (0,0) node[name=l2, shape=coordinate]{};
\path (0,1) node[name=l3, shape=coordinate]{};
\path (-1,1) node[name=l4, shape=coordinate]{};
\path[draw, fill=gray!20] (l1)--(l2)--(l3)--(l4)--cycle;
\path (-0.5,0.5) node[]{$\uI$};

\path[draw, fill=gray!20] (-1,-2)--(0,-2)--(0,-1)--(-1,-1)--cycle;

\path (0,0) node[name=u1, shape=coordinate]{};
\path (1,0) node[name=u2, shape=coordinate]{};
\path (1,1) node[name=u3, shape=coordinate]{};
\path (0,1) node[name=u4, shape=coordinate]{};
\path[draw] (u1)--(u2)--(u3)--(u4)--cycle;

\path (0,-1) node[name=r1, shape=coordinate]{};
\path (1,-1) node[name=r2, shape=coordinate]{};
\path (1,0) node[name=r3, shape=coordinate]{};
\path (0,0) node[name=r4, shape=coordinate]{};
\path[draw, fill=gray!20] (r1)--(r2)--(r3)--(r4)--cycle;
\path (0.5,-0.5) node[]{$\uR$};

%\path (2,-1) node[name=r1, shape=coordinate]{};
%\path (3,-1) node[name=r2, shape=coordinate]{};
%\path (3,0) node[name=r3, shape=coordinate]{};
%\path (2,0) node[name=r4, shape=coordinate]{};
%\path[draw, fill=gray!20] (r1)--(r2)--(r3)--(r4)--cycle;
%\path (2.5,-0.5) node[]{$u$};

%\path (1,0) node[name=r1, shape=coordinate]{};
%\path (2,0) node[name=r2, shape=coordinate]{};
%\path (2,1) node[name=r3, shape=coordinate]{};
%\path (1,1) node[name=r4, shape=coordinate]{};
%\path[draw, fill=gray!20] (r1)--(r2)--(r3)--(r4)--cycle;

%\path (-1,-4) node[name=r1, shape=coordinate]{};
%\path (0,-4) node[name=r2, shape=coordinate]{};
%\path (0,-3) node[name=r3, shape=coordinate]{};
%\path (-1,-3) node[name=r4, shape=coordinate]{};
%\path[draw, fill=gray!20] (r1)--(r2)--(r3)--(r4)--cycle;

%\path (-2,-3) node[name=r1, shape=coordinate]{};
%\path (-1,-3) node[name=r2, shape=coordinate]{};
%\path (-1,-2) node[name=r3, shape=coordinate]{};
%\path (-2,-2) node[name=r4, shape=coordinate]{};
%\path[draw, fill=gray!20] (r1)--(r2)--(r3)--(r4)--cycle;

%\path (-2,-5) node[name=r1, shape=coordinate]{};
%\path (-1,-5) node[name=r2, shape=coordinate]{};
%\path (-1,-4) node[name=r3, shape=coordinate]{};
%\path (-2,-4) node[name=r4, shape=coordinate]{};
%\path[draw, fill=gray!20] (r1)--(r2)--(r3)--(r4)--cycle;
%\path (-1.5,-4.5) node[]{$\bar{u}$};

\path (-1,-1) node[name=d1, shape=coordinate]{};
\path (0,-1) node[name=d2, shape=coordinate]{};
\path (0,0) node[name=d3, shape=coordinate]{};
\path (-1,0) node[name=d4, shape=coordinate]{};
\path[draw] (d1)--(d2)--(d3)--(d4)--cycle;

\path[draw, fill=gray!20] (-2,-1)--(-1,-1)--(-1,0)--(-2,0)--cycle;

\draw[draw, line width=1.5pt] (-1,1)--(-1,-1);
\draw[draw, line width=1.5pt] (-1,-1)--(1,-1);

\path (1,-2) node[name=l1, shape=coordinate]{};
\path (2,-2) node[name=l2, shape=coordinate]{};
\path (2,-1) node[name=l3, shape=coordinate]{};
\path (1,-1) node[name=l4, shape=coordinate]{};
\path[draw, fill=gray!20] (l1)--(l2)--(l3)--(l4)--cycle;
\path (1.5,-1.5) node[]{$\uI$};

\path[draw, fill=gray!20] (1,-4)--(2,-4)--(2,-3)--(1,-3)--cycle;
\path (2,-2) node[name=u1, shape=coordinate]{};
\path (3,-2) node[name=u2, shape=coordinate]{};
\path (3,-1) node[name=u3, shape=coordinate]{};
\path (2,-1) node[name=u4, shape=coordinate]{};
\path[draw] (u1)--(u2)--(u3)--(u4)--cycle;

\path (2,-3) node[name=r1, shape=coordinate]{};
\path (3,-3) node[name=r2, shape=coordinate]{};
\path (3,-2) node[name=r3, shape=coordinate]{};
\path (2,-2) node[name=r4, shape=coordinate]{};
\path[draw, fill=gray!20] (r1)--(r2)--(r3)--(r4)--cycle;
\path (2.5,-2.5) node[]{$\uR$};

\path (1,-3) node[name=d1, shape=coordinate]{};
\path (2,-3) node[name=d2, shape=coordinate]{};
\path (2,-2) node[name=d3, shape=coordinate]{};
\path (1,-2) node[name=d4, shape=coordinate]{};
\path[draw] (d1)--(d2)--(d3)--(d4)--cycle;

\path[draw, fill=gray!20] (0,-3)--(1,-3)--(1,-2)--(0,-2)--cycle;

\draw[draw, line width=1.5pt] (1,-1)--(1,-3);
\draw[draw, line width=1.5pt] (1,-3)--(3,-3);

\draw[draw,gray!60, dashed, line width=1.5pt] (-2.5,0.5)--(2.5,-4.5);

\path (0.5,-2.5) node[]{$\bf \widetilde{u}_{\raisebox{-1pt}{\tiny R}}$};
\path (1.5,-3.5) node[]{$\bf \widetilde{u}_{\raisebox{-1pt}{\tiny I}}$};
\path (-0.5,-1.5) node[]{$\bf \widetilde{u}_{\raisebox{-1pt}{\tiny I}}$};
\path (-1.5,-0.5) node[]{$\bf \widetilde{u}_{\raisebox{-1pt}{\tiny R}}$};
\path (-1.5,-1.5) node[]{$\bf {\bar{v}}^\delta_{\raisebox{-1pt}{\tiny 0}}$};
\path (-0.5,-0.5) node[]{$\bf {v}^\delta_{\raisebox{-1pt}{\tiny 0}}$};

\end{scope}

\end{tikzpicture}\end{center}
\caption{The upper half plane $\mathbb{H}^\delta$ and its boundary (solid line). The real axis (dashed). On the real axis $\im\F^\delta_\mathbb{H}(u)=F^\delta_\mathbb{H}(\widetilde{u}_{\tiny I})=0$. The function $\F^\delta_{\mathbb{H},\www^\delta}$ equals zero on the boundary, and $[\bar{\partial}^\delta\F^\delta_{\mathbb{H}, \www^\delta}](\www^\delta) = \frac{\lambda}{\delta^2},$ $\bar{v}^\delta_{0}=\www^\delta-i\sqrt{2}\delta$.} \label{figSch}
\end{figure}
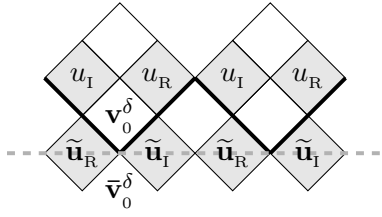

To show the convergence of the fluctuations of the hight function we need to show the convergence of the coupling function up to a straight horizontal boundary segment. To prove Theorem~\ref{double} we also need a version of Theorem~\ref{convF} for the boundary vertex $v$. In order to obtain these results we need to introduce discrete Schwarz reflection principle for hedgehog-type straight boundary. We also need to introduce an analog of the function 
$\F^\delta_{\mathbb{C}, \www^\delta}(u)$ in the upper half-plane.

\begin{lemma}[discrete Schwarz reflection principle]\label{Schwarz}
Let $\F^\delta_{\mathbb{H}}: \bb{} \to\mathbb{C}$ be a discrete holomorphic on the upper half plane $\mathbb{H}^\delta$, such that 
$\im\F^\delta_\mathbb{H}(u)=0$ on the real axis, see Fig.~\ref{figSch}. Then the function 
$\F^\delta_\mathbb{C}$ defined by 
\[
\begin{cases} 
\begin{array}{llll}
\F^\delta_\mathbb{C}(u):={\F^\delta_\mathbb{H}({u})}  \quad 
& \operatorname{if}\, u\in \mathbb{H}^\delta\\
\F^\delta_\mathbb{C}(u):=\overline{\F^\delta_\mathbb{H}(\bar{u})} \quad & \operatorname{if}\, u\in \mathbb{C}^\delta\smallsetminus\mathbb{H}^\delta \\
\end{array}
\end{cases}
\] 
%\[\F^\delta_\mathbb{C}(u):=\overline{\F^\delta_\mathbb{H}(\bar{u})}\]
is discrete holomorphic on the whole plane $\mathbb{C}^\delta$, where squares $u$ and $\bar{u}$ are symmetric with respect to the real axis.
\end{lemma}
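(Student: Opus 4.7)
The plan is to deduce discrete holomorphicity of $\F^\delta_\mathbb{C}$ on $\mathbb{C}^\delta$ from that of $\F^\delta_\mathbb{H}$ on $\mathbb{H}^\delta$ by exploiting a natural symmetry of the operator $\bar\partial^\delta$ under complex conjugation of the argument. The first thing I would check is well-definedness: the two clauses defining $\F^\delta_\mathbb{C}$ must agree at black squares $u$ whose centres lie on the real axis. For such $u$ one has $\bar u = u$, so the two formulas give $\F^\delta_\mathbb{H}(u)$ and $\overline{\F^\delta_\mathbb{H}(u)}$ respectively, and these coincide exactly when $\F^\delta_\mathbb{H}(u)\in\mathbb{R}$. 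Since real-axis squares of type $\bb{0}$ carry real values by convention and those of type $\bb{1}$ carry purely imaginary values, the hypothesis $\im\F^\delta_\mathbb{H}=0$ on the real axis is exactly what is needed (and in particular forces $\F^\delta_\mathbb{H}(\widetilde u_I)=0$, which is why both conditions appear in the statement). Consequently, the reflection formula $\F^\delta_\mathbb{C}(u)=\overline{\F^\delta_\mathbb{H}(\bar u)}$ is valid for every $u$ lying outside the open upper half plane, including the axis itself.

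The technical heart of the argument is the identity
\[
[\bar\partial^\delta G](v)\;=\;\overline{[\bar\partial^\delta F](\bar v)}\qquad\text{for every white square }v,
\]
valid for an arbitrary function $F:\bb{}\to\mathbb{C}$ and its reflection $G(u):=\overline{F(\bar u)}$. This is a one-line calculation using $\overline{\delta\lambda}=\delta\bar\lambda$, $\overline{v\pm\delta\lambda}=\bar v\pm\delta\bar\lambda$ and $\overline{v\pm\delta\bar\lambda}=\bar v\pm\delta\lambda$: the reflection simply swaps the two fractions in the definition of $\bar\partial^\delta$, while the outer complex conjugation undoes the $\lambda\leftrightarrow\bar\lambda$ swap in the denominators. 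I also note in passing that reflection $u\mapsto\bar u$ preserves the colour and the $\bb{0},\bb{1},\ww{0},\ww{1}$ sub-type of each square, so the formula is well posed type-by-type.

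To conclude, I would verify $[\bar\partial^\delta\F^\delta_\mathbb{C}](v)=0$ at every white square $v$. If $v$ and its four neighbours all lie in $\mathbb{H}^\delta$, this is immediate from the hypothesis. Otherwise the well-definedness step lets me use the reflection formula at all five squares involved, and the displayed identity yields $[\bar\partial^\delta\F^\delta_\mathbb{C}](v)=\overline{[\bar\partial^\delta\F^\delta_\mathbb{H}](\bar v)}=0$, since $\bar v$ and its four neighbours all lie in $\mathbb{H}^\delta$. The only step requiring genuine attention is the bookkeeping for white squares sitting on or immediately adjacent to the real axis, where one must check that the reflection formula applies consistently to all five squares regardless of which clause of the definition of $\F^\delta_\mathbb{C}$ they fall under; but this is precisely what the well-definedness argument was designed to guarantee, so I do not expect any substantive obstacle beyond that initial consistency check.
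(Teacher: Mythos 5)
Your argument is correct and essentially coincides with the paper's proof: the identity $[\bar{\partial}^\delta G](v)=\overline{[\bar{\partial}^\delta F](\bar v)}$ for $G(u)=\overline{F(\bar u)}$ is precisely the paper's displayed computation, and the reality of $\F^\delta_{\mathbb{H}}$ on the real axis is used for exactly the same purpose, namely the consistency $\overline{\F^\delta_{\mathbb{H}}(u)}=\F^\delta_{\mathbb{H}}(u)$ of the two defining clauses at on-axis squares. One immaterial detail: reflection preserves the black sub-types $\bb{0}$, $\bb{1}$ but swaps $\ww{0}$ and $\ww{1}$, which does not affect the argument since the function lives on black squares and $\bar{\partial}^\delta$ is only required to vanish at every white square regardless of its sub-type.
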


\begin{proof} Note that on the upper half plane $\bar{\partial}^\delta \F^\delta_\mathbb{C}=\bar{\partial}^\delta\F^\delta_\mathbb{H}=0$. 
Therefore we need to check that $ [\bar{\partial}^\delta \F^\delta_\mathbb{C}](\bar{v})=0$ for all $v\in\mathbb{H}^\delta$: 
\begin{align*} [\bar{\partial}^\delta \F^\delta_\mathbb{C}](\bar{v})&=\frac12\left(\frac{\F^\delta_\mathbb{C}(\bar{v}+\delta\lambda)-
 \F^\delta_\mathbb{C}(\bar{v}-\delta\lambda)}{2\delta\bar{\lambda}}+\frac{\F^\delta_\mathbb{C}(\bar{v}+\delta\bar{\lambda})-
\F^\delta_\mathbb{C}(\bar{v}-\delta\bar{\lambda})}{2\delta\lambda}\right)=\\
 &\frac12\left(\frac{\overline{\F^\delta_\mathbb{H}(v+\delta\bar{\lambda})}-
 \overline{\F^\delta_\mathbb{H}(v-\delta\bar{\lambda})}}{2\delta\bar{\lambda}}+\frac{\overline{\F^\delta_\mathbb{H}(v+\delta{\lambda})}-
 \overline{\F^\delta_\mathbb{H}(v-\delta{\lambda})}}{2\delta\lambda}\right)=\overline{[\bar{\partial}^\delta \F^\delta_\mathbb{H}](v)}=0.
\end{align*} 
To complete the proof, note that for $u$ on the real axis $\overline{\F^\delta_\mathbb{H}(u)}=\F^\delta_\mathbb{H}(u).$\end{proof}

Let us introduce a function $\F^\delta_{\mathbb{H}, \www^\delta}: \bb{} \to\mathbb{C}$ on the half-plane $\mathbb{H}^\delta.$ %, where $ \www^\delta\in\partial_{\mathrm {int}}\mathbb{H}^\delta$ is a white square.
The function $\F^\delta_{\mathbb{H},\www^\delta}$ equals zero on the boundary, and $[\bar{\partial}^\delta\F^\delta_{\mathbb{H}, \www^\delta}](\www^\delta) = \frac{\lambda}{\delta^2}.$ %Let us show, that there is a unique discrete holomorphic function with these two properties that tends to zero at infinity. 
Let $\F_1^\delta$ and $\F_2^\delta$ be two different discrete holomorphic functions that satisfy these two properties and tend to zero at infinity. Then the difference $\F_1^\delta-\F_2^\delta$ is discrete holomorphic everywhere in $\mathbb{H}^\delta$, vanishes on the boundary and tends to zero at infinity. Therefore $\F_1^\delta-\F_2^\delta\equiv0.$ Thus, there is a unique such discrete holomorphic function $\F^\delta_{\mathbb{H}, \www^\delta}$.

Let us consider the sum $\F^\delta_{\mathbb{C}, \www^\delta} + \F^\delta_{\mathbb{C}, \www^\delta-i\sqrt{2}\delta},$ where by $\www^\delta-i\sqrt{2}\delta$ we denote a white square symmetric to $\www^\delta$ with respect to the real axis that does not belong to $\mathbb{H}^\delta,$ see Fig.~\ref{figSch}. This sum tends to zero at infinity, since both $\F^\delta_{\mathbb{C}, \www^\delta}$ and $\F^\delta_{\mathbb{C}, \www^\delta-i\sqrt{2}\delta}$ tend to zero at the infinity. Note that $\F^\delta_{\mathbb{C}, \www^\delta-i\sqrt{2}\delta}$ is discrete holomorphic on $\mathbb{H}^\delta$, therefore $\F^\delta_{\mathbb{C}, \www^\delta} + \F^\delta_{\mathbb{C}, \www^\delta-i\sqrt{2}\delta}$ is holomorphic on $\mathbb{H}^\delta \smallsetminus \{\www^\delta\}$ and $[\bar{\partial}^\delta(\F^\delta_{\mathbb{C}, \www^\delta} +\F^\delta_{\mathbb{C}, \www^\delta-i\sqrt{2}\delta})](\www^\delta) = \frac{\lambda}{\delta^2}.$ 
Finally, note that 
\[
\im\F^\delta_{\mathbb{C}, \www^\delta}(u)=G^\delta(u,\www^\delta+\bar\lambda\delta) - G^\delta(u,\www^\delta-\bar\lambda\delta),
\] 
where 
$G^\delta(u,u')$ is the classical Green's function on $\mathbb{C}^\delta\cap\bb{1}^\delta$ satisfying $\Delta^\delta G^\delta(u,u')=\mathbb{1}_{u=u'}\cdot\frac{1}{2\delta^3}$. The Green's function is symmetric, therefore $\im[\F^\delta_{\mathbb{C}, \www^\delta} + \F^\delta_{\mathbb{C}, \www^\delta-i\sqrt{2}\delta}]$ vanishes on $\partial\mathbb{H}^\delta.$ Similarly the real part of $\F^\delta_{\mathbb{H},\www^\delta}$ vanishes on the boundary.
As a consequence we have
$\F^\delta_{\mathbb{H}, \www^\delta}(u) = \F^\delta_{\mathbb{C}, \www^\delta}(u) + \F^\delta_{\mathbb{C}, \www^\delta-i\sqrt{2}\delta}(u)$ for all $u\in \bb{}\cap\mathbb{H}^\delta.$

We will call a part of the boundary of hedgehog domain $\om^\delta$ {\it a right vertical straight part of the boundary} if all inner boundary squares along this part belong to the set $\partial^{+}_{\mathrm {int}}\om^\delta$. Similarly one can define {\it left vertical}, {\it upper horizontal} and {\it lower horizontal} straight parts of the boundary of hedgehog domain.

\begin{prop}\label{cnb}
1. Let $J$ be an open straight horizontal (or vertical) segment of the boundary of~$\om$. Then the uniform convergence in Theorem~\ref{convF} holds on compact subsets of $(\om\cup J)\smallsetminus\{v\}$.
%1. The convergence in Theorem~\ref{convF} holds up to a straight horizontal (or vertical) segment of the boundary of $\om$.

%2. Let in the setup of Theorem~\ref{convF} the square $\www^\delta$ on a horizontal part of the boundary of $\om^\delta$ approximate a boundary point $v$, which lies on a straight horizontal segment of the boundary of $\om$. Then $F^\delta_{\operatorname{s-hol}}$ converges uniformly on compact subsets of  $\om\smallsetminus\{v\}$ to a continuous holomorphic function $f^v_\om$, where $f^v_\om$ is defined as in Lemma~$\ref{fboundary}$ (or is defined by~(\ref{notSmooth_b}), if $\om$ is not smooth).

2. Let $\om^\delta$ be a sequence of discrete hedgehog domains of mesh size $\delta$ approximating a simply connected domain $\om$.  Let $\www^\delta$ on a horizontal part of the boundary of $\om^\delta$ approximate a boundary point $v$, which lies on a straight horizontal segment of the boundary of $\om$. Then $F^\delta_{\operatorname{s-hol}}$ converges uniformly on compact subsets of  $\om\smallsetminus\{v\}$ to a continuous holomorphic function $f^v_\om$, where $f^v_\om$ is defined as in Lemma~$\ref{fboundary}$ (or is defined by~(\ref{notSmooth_b}), if $\om$ is not smooth).

\end{prop}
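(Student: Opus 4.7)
The plan is to extend the framework of Theorem~\ref{convF} to accommodate a straight boundary segment, treating the two parts in parallel.

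For part~1, the key tool is the discrete Schwarz reflection principle (Lemma~\ref{Schwarz}) near the straight segment $J$. On a horizontal (or vertical) straight segment of a hedgehog domain, the outer normal $n(z)$ is a fixed constant $\pm i$ (respectively $\pm 1$), so the Riemann-type boundary condition $\im[F^\delta_{\operatorname{s-hol}}(z)\sqrt{n(z)}]=0$ reduces to the vanishing of the imaginary part of a fixed phase rotation of $F^\delta_{\operatorname{s-hol}}$. This is precisely what is needed to reflect $F^\delta_{\operatorname{s-hol}}$ across $J$ (after multiplying by the appropriate unimodular factor) to obtain a discrete s-holomorphic function in a two-sided discrete neighborhood of~$J$. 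Interior regularity estimates (the analog of \cite[Thm.~3.12]{CS}) applied to the reflected function then yield equicontinuity of the original $F^\delta_{\operatorname{s-hol}}$ up to $J$. Combined with the interior convergence from Theorem~\ref{convF}, Arzel\`a--Ascoli extracts a subsequential limit which extends continuously to $J$; by uniqueness in Proposition~\ref{f}, this limit must be $f^v_\om$.

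For part~2, the approach mirrors the proof of Theorem~\ref{convF}, but with the half-plane Green-type function $\F^\delta_{\mathbb{H}, \www^\delta}$ playing the role of $\F^\delta_{\mathbb{C}, \www^\delta}$. Since $\www^\delta$ lies on a horizontal boundary segment, the local discrete geometry near $\www^\delta$ is that of the upper half-plane, and the decomposition $\F^\delta_{\mathbb{H}, \www^\delta}(u) = \F^\delta_{\mathbb{C}, \www^\delta}(u) + \F^\delta_{\mathbb{C}, \www^\delta - i\sqrt{2}\delta}(u)$ established just before the statement shows that the continuum asymptotic of $\F^\delta_{\mathbb{H}, \www^\delta}$ matches the desired singular part $\frac{1}{2\pi}\cdot\frac{i\sqrt{n(v)}}{z-v}$ appearing in Lemma~\ref{fboundary}. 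One then defines the primitive
\[
H^\delta_\star(z_2) - H^\delta_\star(z_1) = [F^\delta_{\operatorname{s-hol}} - \F^\delta_{\mathbb{H}, \www^\delta}]^2(a)\cdot(z_2 - z_1),
\]
which captures the regular part of the observable near $v$, and runs the three-step dichotomy from the proof of Theorem~\ref{convF}: either $M^\delta(r) := \max_{\om_r^\delta}|H^\delta|$ stays bounded as $\delta\to 0$, yielding convergence of $F^\delta_{\operatorname{s-hol}}$ to a holomorphic function $\widetilde f$ satisfying (h1)--(h3) adapted to the boundary singularity (hence $\widetilde f = f^v_\om$); or $M^\delta(r)\to\infty$ along a subsequence, in which case rescaling $\widetilde F^\delta := F^\delta_{\operatorname{s-hol}}/\sqrt{M^\delta(r)}$ and $\widetilde H^\delta := H^\delta/M^\delta(r)$ produces a nontrivial bounded harmonic limit with Dirichlet data, violating the maximum principle.

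The main obstacle is the third (lemma-like) step of the contradiction argument. One must rule out that $\widetilde H^\delta$ converges uniformly to zero on compact subsets of $(\om\cup J)\smallsetminus B_r(v)$ while $\sup|\widetilde H^\delta|=1$ is attained on $\partial B^\delta_r(\www^\delta)$. In the interior-singularity case (Theorem~\ref{convF}), this was settled by combining Remark~\ref{harmH*} with the maximum principle to force $H^\delta_\star/M^\delta(r)$ to zero in a neighborhood of $\partial B_r(v)$. In the present boundary setting, the analogous step requires running the maximum principle on the \emph{half-disc} $B_r(v)\cap\om^\delta$ rather than a full disc, exploiting simultaneously the Dirichlet data on $\partial B_r(v)\cap\om^\delta$ (from the vanishing of the rescaled functions on compact interior sets) and on $B_r(v)\cap J$ (from Proposition~\ref{proportiesH} and part~1). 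This is the technical heart of the argument, and it is precisely where the straightness of $J$ enters: it guarantees the Dirichlet boundary data for $\widetilde H^\delta$ along the relevant portion of $J$ inside $B_r(v)$.
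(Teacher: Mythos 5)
Your part~1 follows the same key idea as the paper, but implemented locally rather than globally. The paper's proof reflects the \emph{whole} discrete domain $\om^\delta$ across the straight boundary segment, observes that the glued domain is again a hedgehog domain, extends the black-square coupling function $\check{\F}^\delta_{\operatorname{s-hol}}|_{\bb{}^\delta}$ to it by the discrete Schwarz reflection principle (Lemma~\ref{Schwarz}, using that this function vanishes on the boundary black squares along the straight part), rebuilds the s-holomorphic modification on the glued domain as in Section~\ref{coupl}, and then reruns the argument of Theorem~\ref{convF} verbatim with $\F^\delta_{\mathbb{C},\www^\delta}$ replaced by $\F^\delta_{\mathbb{H},\www^\delta}$. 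One advantage of this global doubling over your proposal to reflect $F^\delta_{\operatorname{s-hol}}$ itself via the Riemann boundary condition is that one never has to check compatibility of the reflection with s-holomorphicity (the types $\bb{0}^\delta,\bb{1}^\delta,\ww{0}^\delta,\ww{1}^\delta$ and the projection directions $\tau(a)$ are permuted by the reflection); only ordinary discrete holomorphicity of the black-square function is reflected, and the s-holomorphic extension is reconstructed afterwards. Your local version can presumably be made to work, but this compatibility check is a step you pass over.

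For part~2 you genuinely depart from the paper, and this is where a gap appears. The paper's doubling makes the straight segment---and, in part~2, the boundary face $\www^\delta$ itself---interior to the glued hedgehog domain, so every step of Theorem~\ref{convF} (well-definedness of $\HH^\delta$, its Dirichlet boundary values, sub/superharmonicity of $\HH^\delta_\star$, and the three-step compactness/contradiction scheme) applies without change; no half-disc argument is needed. You instead run the scheme directly in $\om^\delta$, and must therefore control $\HH^\delta$ and $\HH^\delta_\star$ along $J$ near the boundary singularity. You invoke Proposition~\ref{proportiesH} for the Dirichlet data on $B_r(v)\cap J$, but that proposition (and Proposition~\ref{half_dirichlet} behind it) is proved for $\www\in\mathrm{Int}\,\ww{0}^\delta$: for $\www^\delta$ adjacent to the boundary one still has to verify that $\HH^\delta$ takes the \emph{same} constant value on the two boundary arcs on either side of $\www^\delta$ (the contour computation of Section~\ref{h-h=ff}, which kills the monodromy around $\www$ by comparing with the boundary contour, is exactly the step that changes when the singular face sits on the boundary), and that the leap-frog sub/superharmonicity of $\HH^\delta_\star$ survives at vertices adjacent to $J$ where $F^\delta_{\operatorname{s-hol}}$ differs from $\check{\F}^\delta_{\operatorname{s-hol}}$. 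None of this is carried out; it is precisely the ``technical heart'' you flag but do not resolve. The simplest repair is to apply your own part~1 mechanism once more: reflect and glue as the paper does, after which your half-disc maximum-principle step becomes the ordinary step of Theorem~\ref{convF} with $\F^\delta_{\mathbb{C},\www^\delta}$ replaced by $\F^\delta_{\mathbb{H},\www^\delta}$, whose asymptotics indeed produce the singular part required in Lemma~\ref{fboundary}.
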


\begin{proof}
Reflect $\om^\delta$ across the lower horizontal straight part of the boundary to get a domain $\mho_\delta$. Glue domains $\om^\delta$ and $\mho_\delta$ together, note that the resulting domain is a hedgehog domain. The discrete holomorphic function $\check{\F}^\delta_{\operatorname{s-hol}}|_{\bb{}^\delta}$ which is zero on the lower horizontal straight part of $\dom^\delta$ extends to a discrete holomorphic function on this glued domain by discrete Schwarz reflection principle, 
see Lemma~\ref{Schwarz}. 
Then one can define a function ${\F}^\delta_{\operatorname{s-hol}}$ on the glued hedgehog domain as above. The argument of Theorem~\ref{convF} can then be applied in this case, with $\F^\delta_{\mathbb{C}, \www^\delta}$ replaced by $\F^\delta_{\mathbb{H}, \www^\delta}$. \end{proof}

\section{Dimers on hedgehog domains and the Gaussian Free Field}\label{5}
In~\cite{KGff} Kenyon proved that the scaling limit of the height function in the dimer model on Temperleyan domains is the Gaussian Free Field. In~\cite{rus} it is proven, that the same scaling limit appears for approximations by piecewise Temperleyan domains.
Our goal in this section is %to prove Corollary~\ref{main-cor2}, i.e. 
to show that the same holds for approximations by hedgehog domains.

\subsection{Asymptotic values of the coupling function}\label{sa_c_f}
Following~\cite{Kdom}, we define two functions $\ff{0} (z_1,z_2)$ and $\ff{1} (z_1,z_2)$. For a fixed $z_2$,
\begin{enumerate}
\item[$\rhd$] the function $\ff{0} (z_1, z_2)$ is analytic as a function of $z_1$, has a simple pole of residue $1/ 2\pi$ at $z_1=z_2$, and no other poles on ${\om}$;
\item[$\rhd$] $\ff{0}(z, z_2)|| \frac{1}{\sqrt{(n(z))}}, \quad z\in\dom$.
\end{enumerate}
The function $\ff{1}(z_1, z_2)$ has the same definition, except for a difference in the boundary conditions: $\ff{1}(z, z_2)|| \frac{i}{\sqrt{(n(z))}}, \quad z\in\dom$.
The existence and uniqueness of such functions can be shown using the technique described in Section~\ref{cont}, see Proposition~\ref{f}. In particular, we can write these functions on the upper half plane in the following way:
\[
\ff{0} (z, w)=\frac{1}{2\pi}\cdot\left( \frac{1}{z-w} + \frac{i}{z-\overline w}\right),
\]

\[
\ff{1} (z, w)=\frac{1}{2\pi}\cdot\left( \frac{1}{z-w} - \frac{i}{z-\overline w}\right).
\]

\begin{Th}\label{main-th2_1}
 Let $\om$ be a simply connected domain in $\mathbb{C}$. Assume that a sequence of discrete hedgehog domains $\om^\delta$ of mesh sizes $\delta$ approximates the domain~$\om$.
%Let $C_{\delta \mathbb{Z}^2} (u,v)$ be the coupling function on $\delta \mathbb{Z}^2$.  
Let a sequence of white squares $v^\delta$ approximates a point $v\in \om$. Then the coupling function $\frac{1}{\delta}\Cmd(u,v)$ satisfies the following asymptotics: \\
for $v^\delta \in \ww{0}^\delta$
\[
\frac{1}{\delta}\Cmd(u,v^\delta) - \bar{\lambda}\cdot\F^\delta_{\mathbb{C}, v^\delta}(u)%\frac{1}{\delta}C_{\delta \mathbb{Z}^2}(u,v^\delta)
= \ff{0}(u,v)-\frac{1}{2\pi(u-v)}+o(1);
\]
if $v^\delta \in \ww{1}^\delta$, then
\[
\frac{1}{\delta}\Cmd(u,v^\delta) - \bar{\lambda}\cdot\F^\delta_{\mathbb{C}, v^\delta}(u)%\frac{1}{\delta}C_{\delta \mathbb{Z}^2}(u,v^\delta)
= \ff{1}(u,v)-\frac{1}{2\pi(u-v)}+o(1),
\]
where $\F^\delta_{\mathbb{C}, v^\delta}(u)$ is defined in Section~\ref{poc}.
\end{Th}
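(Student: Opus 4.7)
The plan is to reduce Theorem~\ref{main-th2_1} to Theorem~\ref{convF} by subtracting off the full-plane singular contribution of the coupling function at $v$. I first define
\[
G^\delta(u) := \tfrac{1}{\delta}\Cmd(u, v^\delta) - \bar{\lambda}\cdot \F^\delta_{\mathbb{C}, v^\delta}(u), \qquad u \in \clb{}^\delta,
\]
and verify that $G^\delta$ is in fact discrete holomorphic on all of $\clb{}^\delta$, including at the puncture $v^\delta$. Both summands are discrete holomorphic away from $v^\delta$, and the prefactor $\bar{\lambda}$ is chosen precisely so that the pole of $\F^\delta_{\mathbb{C}, v^\delta}$, after multiplication, cancels the singular part of $\tfrac1\delta\Cmd(\cdot, v^\delta)$ at $v^\delta$, i.e.\ $\bar\partial^\delta G^\delta(v^\delta) = 0$. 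The s-holomorphic extension of $G^\delta$ to $\mathcal{V}^\delta_\diamond \sqcup \clom^\delta$ (in the sense of Remark~\ref{bij:hol_s-hol}) is then pole-free.

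Next I combine the two convergence ingredients already available. By Theorem~\ref{convF}, applied to the boundary-modified $F^\delta_{\operatorname{s-hol}}$ of Proposition~\ref{second_half_dirichlet} (which coincides with the s-holomorphic extension of $\tfrac1\delta\Cmd(\cdot, v^\delta)$ throughout the interior of $\om^\delta$), one has $\tfrac1\delta\Cmd(\cdot, v^\delta) \to f^v_\om$ uniformly on compact subsets of $\om \smallsetminus \{v\}$. Independently, the full-plane discrete Cauchy kernel satisfies $\F^\delta_{\mathbb{C}, v^\delta}(u) \to \tfrac{\lambda}{2\pi(u-v)}$ uniformly on compacts of $\mathbb{C} \smallsetminus \{v\}$ (the asymptotic recalled in Section~\ref{poc}). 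Combining these gives
\[
G^\delta(u) \longrightarrow f^v_\om(u) - \bar{\lambda}\cdot\frac{\lambda}{2\pi(u-v)} = f^v_\om(u) - \frac{1}{2\pi(u-v)}
\]
uniformly on compacts of $\om \smallsetminus \{v\}$.

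To finish I identify $f^v_\om(u) = \ff{0}(u, v)$ when $v^\delta \in \ww{0}^\delta$. Both functions are holomorphic on $\om \smallsetminus \{v\}$ with a simple pole at $v$ of residue $\tfrac{1}{2\pi}$ and satisfy the Riemann boundary condition $\im[f(z)\sqrt{n(z)}] = 0$ on $\dom$; the uniqueness statement in Proposition~\ref{f} (or in its extension~\eqref{notSmooth} to non-smooth boundaries) forces them to coincide. The case $v^\delta \in \ww{1}^\delta$ is handled by the analogous argument after accounting for the $\pi/2$-rotational shift between the two colour classes of white squares, which interchanges the roles of the ``direct'' and ``image'' terms in the upper-half-plane formulas for $\ff{0},\ff{1}$ and replaces $\ff{0}$ by $\ff{1}$. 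The main difficulty I foresee is the careful bookkeeping of the multiplicative constants in the $\delta \downarrow 0$ limit: verifying that the prefactor $\bar{\lambda}$ produces genuine residue cancellation at $v^\delta$ on the lattice, and that the pole coefficient of $f^v_\om$ is exactly $\tfrac{1}{2\pi}$ rather than $\tfrac{\lambda}{2\pi}$ --- this is what pins the limit to $\ff{0}$ as opposed to a rotated variant.
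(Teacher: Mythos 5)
Your overall route is the same as the paper's: identify $\frac1\delta\Cmd(\cdot,v^\delta)$ with $F^\delta_{\operatorname{s-hol}}$ on interior black squares, invoke Theorem~\ref{convF}, use the asymptotics $\F^\delta_{\mathbb{C},v^\delta}(u)\to\frac{1}{2\pi}\cdot\frac{\lambda}{u-v}$, and treat the $\ww{1}$ case by a rotation (the paper does this concretely by passing to $\frac{i}{\delta}\Cmd(\cdot,v^\delta)$, which solves the same boundary value problem). However, two of your steps do not hold as written. First, the claimed exact lattice cancellation is false: the identity $K_{\Omega^\delta}C_{\Omega^\delta}=I$ gives $4\delta\bar\lambda\,[\bar\partial^\delta \Cmd(\cdot,v^\delta)](v^\delta)=1$, i.e. $[\bar\partial^\delta\tfrac1\delta\Cmd(\cdot,v^\delta)](v^\delta)=\frac{\lambda}{4\delta^2}$, which is exactly the normalisation of $\F^\delta_{\mathbb{C},v^\delta}$; hence the combination that is discrete holomorphic at $v^\delta$ is $\frac1\delta\Cmd(\cdot,v^\delta)-\F^\delta_{\mathbb{C},v^\delta}$, whereas your $G^\delta$ with the prefactor $\bar\lambda$ keeps a singularity of size $\frac{\lambda-1}{4\delta^2}$ there. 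The $\bar\lambda$ in the statement only serves to make the subtracted term converge to $\frac{1}{2\pi(u-v)}$; since your limiting argument away from $v$ never actually uses the cancellation, this error is repairable, but it should not be presented as a verified step.

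Second, and more substantially, the identification of the limit with $\ff{0}(\cdot,v)$ is precisely the point you leave open (your ``main difficulty I foresee''), and as set up it cannot be closed by the uniqueness argument alone: Theorem~\ref{convF}, with the $\operatorname{RBVP}$ normalisation $\frac{\lambda}{4\delta^2}$, produces the solution of Proposition~\ref{f} whose residue at $v$ is $\frac{\lambda}{2\pi}$, while $\ff{0}(\cdot,v)$ has residue $\frac{1}{2\pi}$, so the two candidates differ by a rotation and are not equal as stated. Resolving this requires the bookkeeping of how the coupling function, which is real on $\bb{0}^\delta$ and purely imaginary on $\bb{1}^\delta$, recombines into the s-holomorphic object whose limit Theorem~\ref{convF} controls (this is where the factor $\bar\lambda$ genuinely enters), and it is exactly what distinguishes the $\ww{0}$ asymptotics from the $\ww{1}$ one. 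Since this step is the real content of the theorem beyond Theorem~\ref{convF}, leaving it unresolved is a genuine gap; your closing remark about the $\pi/2$ shift gestures at the paper's $i$-multiplication trick but does not supply it.
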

\begin{proof} Recall that $\F^\delta_{\mathbb{C}(z), \www^\delta}$ is asymptotically equal to $\frac{1}{2\pi}\cdot\frac{\lambda}{z-\www}$ as~$\delta\downarrow 0$. Recall that the function $F^\delta_{\operatorname{s-hol}}(\cdot)$ coincides with  $\frac{1}{\delta}\Cmd(\cdot,\www)$ on the set $\intbb{}^\delta\cup\partial^{\flat}_{\mathrm {int}}\bb{}^\delta\cup\partial^{\sharp}_{\mathrm {int}}\bb{}^\delta$.
Now, to obtain the first asymptotic one can use Theorem~\ref{convF}. The second one can be obtained similarly.
To see this note that for $v^\delta \in \ww{1}^\delta$ the function $ \frac{i}{\delta}\Cmd(\,\cdot\,,v^\delta)$ is discrete holomorphic everywhere in~$\ww{}^\delta\smallsetminus v^\delta$ with 
$[\bar{\partial}^\delta \frac{i}{\delta}\Cmd(\,\cdot\,,v^\delta)](v^\delta) = \frac{i\lambda}{\delta^2}$ and satisfies the same boundary conditions as $\frac{1}{\delta}\Cmd(\,\cdot\,,v^\delta)$ for $v^\delta \in \ww{0}^\delta$.
\end{proof}

\subsection{ Convergence to GFF} 
To obtain the convergence of the height function on hedgehog domains to the Gaussian free field it is enough to show  that the limits of moments of the height function in the Temperleyan and hedgehog cases are the same. As in our previous paper~\cite{rus} we give only the sketch of the proof of Corollary~\ref{main-cor2}. The novel part of the argument is in ~(\ref{f01}), then Lemma~\ref{eqlimits} completes the proof.

Due to \cite{Kdom} one can obtain the following result for hedgehog approximations. Let $\ff{+}(z, w)=\ff{0}(z, w)+\ff{1}(z, w)$ and $\ff{-}(z, w)=\ff{0}(z, w)-\ff{1}(z, w)$. 
\begin{prop}
Let $\gamma_1, \ldots, \gamma_m$ be a collection of pairwise disjoint paths running from the horizontal straight boundary segment of $\Omega$ to $z_1, \ldots, z_m$ respectively. Let $h(z_i)$ denote the height function at a point in a hedgehog domain $\Omega^\delta$ lying within $O(\delta)$ of $z_i$. Then 
\begin{align}\label{detdetdet}
\lim_{\delta\to 0}\mathbb{E}&[(h(z_1)-\mathbb{E}[h(z_1)])\cdot\ldots\cdot(h(z_m)-\mathbb{E}[h(z_m)])]~=~\\ 
&\sum_{\epsilon_1, \ldots, \epsilon_m \in \{-1,1\}} \epsilon_1\cdots\epsilon_m\int_{\gamma_1}\cdots\int_{\gamma_m}\det_{i,j\in [1,m]}(F_{\epsilon_i,\epsilon_j}(z_i, z_j))\,dz_1^{(\epsilon_1)}\cdots dz_m^{(\epsilon_m)},\label{detdet}
\end{align}
where $dz_j^{(1)}=dz_j$ and $dz_j^{(-1)}=d\overline{z_j}$, and
\[
F_{\epsilon_i, \epsilon_j}(z_i,z_j)=\begin{cases} 
0 &i=j\\
f_{+}(z_i,z_j) \quad &(\epsilon_i,\epsilon_j)=(1,1)\\
f_{-}(z_i,z_j) \quad &(\epsilon_i,\epsilon_j)=(-1,1)\\
\overline{f_{-}(z_i,z_j)} \quad &(\epsilon_i,\epsilon_j)=(1,-1)\\
\overline{f_{+}(z_i,z_j)} \quad &(\epsilon_i,\epsilon_j)=(-1,-1).
\end{cases}
\]
% where the functions $f_{+}(z,w)$ and $f_{-}(z,w)$ are defined in the previous section.
\end{prop}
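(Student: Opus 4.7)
The plan is to follow the classical Kenyon framework of~\cite{Kdom}, reducing the computation of height moments to determinants of the coupling function and then passing to the limit via Theorem~\ref{main-th2_1}. First, for each $i$ I fix a lattice edge-path along $\gamma_i$ and write $h(z_i)-h_0=\sum_e \varepsilon_e\bigl(\mathbb{1}_{e\in\mathcal{M}}-\mathbb{E}\mathbb{1}_{e\in\mathcal{M}}\bigr)$, where $\mathcal{M}$ is the random dimer configuration, $e$ ranges over the dominos crossing $\gamma_i$, and $\varepsilon_e=\pm 1$ is the Thurston sign. Each indicator $\mathbb{1}_{e\in\mathcal{M}}$ for $e=[uv]$ with $u\in\bb{}^\delta$, $v\in\ww{}^\delta$ is replaced via the inverse Kasteleyn matrix by $\pm K_{\Omega^\delta}(u,v)\cdot C_{\Omega^\delta}(v,u)$, and Kenyon's determinantal formula expresses joint indicator expectations as $m\times m$ minors of $C_{\Omega^\delta}$. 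Expanding the product $\prod_i(h(z_i)-\mathbb{E}h(z_i))$ and applying the standard cluster/cumulant cancellation (only the ``off--diagonal'' contributions survive after centering) yields, exactly as in~\cite[Thm.~22]{Kdom}, a sum of $m!$ products of pairs of coupling function entries integrated along $\gamma_1,\ldots,\gamma_m$, i.e.\ a determinant of a $2$-point kernel.

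Next I pass to the scaling limit. Theorem~\ref{main-th2_1} says that $\tfrac1\delta C_{\Omega^\delta}(u,v^\delta)-\bar\lambda\cdot F^\delta_{\mathbb{C},v^\delta}(u)$ converges to $\ff{0}(u,v)-\tfrac1{2\pi(u-v)}$ if $v^\delta\in\ww{0}^\delta$ and to $\ff{1}(u,v)-\tfrac1{2\pi(u-v)}$ if $v^\delta\in\ww{1}^\delta$. The purely local piece $F^\delta_{\mathbb{C},v^\delta}$ together with the $\tfrac1{2\pi(u-v)}$ singularity encode the short-distance behaviour of the lattice Green-type function and produce exactly the same local contributions to the height variance as in the Temperleyan case; in particular they cancel the diverging diagonal terms, leaving a finite limit. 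What survives of the boundary-sensitive part is $\ff{0}$ or $\ff{1}$, and the choice between them is dictated by the parity class $\ww{0}$ vs $\ww{1}$ of the white square adjacent to the traversed edge. When the discrete sums $\sum_e\varepsilon_e(\cdots)$ along $\gamma_i$ are turned into Riemann sums, the two parities contribute with consistent signs; writing $dz_i=dz_i^{(+1)}$ for the contour integral in $z_i$ and $d\bar z_i=dz_i^{(-1)}$ for the conjugate one, the two classes combine into the four kernels $\ff{+}=\ff{0}+\ff{1}$, $\ff{-}=\ff{0}-\ff{1}$ and their conjugates, exactly as in the statement.

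The main obstacle is the argument for the Temperleyan case in~\cite{Kdom} relies on very specific boundary behaviour of the coupling function (Dirichlet type), whereas here the boundary behaviour is of Riemann type and only controlled away from corners. The key novel input--which I would verify in detail--is that the identification of the lattice kernel $\tfrac1\delta C_{\Omega^\delta}(u,v^\delta)$ with $\ff{\kappa(v)}(u,v)$ modulo the explicit local piece, where $\kappa(v)\in\{0,1\}$ records the parity of the white square, gives the correct combinatorics of $\ff{+}$ and $\ff{-}$ in the Riemann sum along $\gamma_i$; this is exactly the content of the referenced equation~(\ref{f01}). Once this identification is in hand, and once the convergence is strengthened up to the straight boundary segment from which the paths $\gamma_i$ emanate (Proposition~\ref{cnb}), the rest of the limit procedure is the same as in~\cite{rus} and~\cite{Kdom}, and Lemma~\ref{eqlimits} then matches the resulting expression with the moment formula of the GFF, proving the proposition.
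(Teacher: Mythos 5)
Your proposal takes essentially the same route as the paper: the paper likewise observes that $F^\delta_{\operatorname{s-hol}}$ coincides with $\frac1\delta C_{\Omega^\delta}(\cdot,\www)$ on the relevant black squares, invokes Theorem~\ref{convF} and Proposition~\ref{cnb} to get convergence of the coupling function up to the horizontal straight boundary segment (with the asymptotics of Theorem~\ref{main-th2_1} supplying the kernels $f_0,f_1$), and then repeats Kenyon's moment computation from~\cite{Kdom} verbatim, exactly as you outline. The only small slip is your closing appeal to Lemma~\ref{eqlimits}, which belongs to the subsequent GFF identification (Corollary~\ref{main-cor2}) rather than to this proposition, but nothing in your argument actually depends on it.
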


\begin{proof}
The function $F^\delta_{\operatorname{s-hol}}(\cdot)$ coincides with  $\frac{1}{\delta}\Cmd(\cdot,\www)$ on the set $\intbb{}^\delta\cup\partial^{\flat}_{\mathrm {int}}\bb{}^\delta\cup\partial^{\sharp}_{\mathrm {int}}\bb{}^\delta$. Therefore, due to Theorem~\ref{convF} and Proposition~\ref{cnb} one has convergence of the coupling function up to the horizontal straight part of the boundary. The rest of the proof of the proposition mimics the proof of \cite[Proposition 20]{Kdom}.
\end{proof}

Recall that in the Temperleyan case~\cite{Kdom} one has $\ff{+}(z,w)=\frac{1}{\pi(z-w)}$ and $\ff{-}(z,w)=\frac{1}{\pi(z-\overline w)}$. In the hedgehog case we have
\begin{equation} \label{f01}
\begin{cases}
\ff{+}(z, w)=\frac{1}{\pi(z-w)}\\
\ff{-}(z, w)=i\cdot\frac{1}{\pi(z-\overline w)}.
\end{cases}
\end{equation}

\begin{lemma}\label{eqlimits}
 The limits of moments of the height function in Temperleyan case and hedgehog case are the same.
\end{lemma}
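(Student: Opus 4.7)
The plan is to compare the moment formula \eqref{detdetdet}--\eqref{detdet} in the two cases term by term, using \eqref{f01} and the known Temperleyan formulas. Since the form of \eqref{detdet} is identical in both setups, it suffices to show that for each fixed sign pattern $(\epsilon_1,\ldots,\epsilon_m)\in\{-1,+1\}^m$ the integrand
\[
\det_{i,j\in[1,m]}\bigl(F_{\epsilon_i,\epsilon_j}(z_i,z_j)\bigr)
\]
coincides in the Temperleyan and hedgehog setups on every collection of disjoint paths $\gamma_1,\ldots,\gamma_m$. The contours $\gamma_i$ and the differentials $dz_j^{(\epsilon_j)}$ do not depend on the model, so equality of the determinants will yield equality of all moments.

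Let $F^{T}_{\epsilon,\epsilon'}$ and $F^{h}_{\epsilon,\epsilon'}$ denote the entries arising from the Temperleyan and hedgehog formulas respectively. By \eqref{f01} we have $\ff{+}^{h}=\ff{+}^{T}$, while $\ff{-}^{h}=i\,\ff{-}^{T}$. Working through the four cases in the definition of $F_{\epsilon,\epsilon'}$ then gives
\[
F^{h}_{\epsilon_i,\epsilon_j}(z_i,z_j)=c_{\epsilon_i,\epsilon_j}\,F^{T}_{\epsilon_i,\epsilon_j}(z_i,z_j),
\qquad
c_{+,+}=c_{-,-}=1,\quad c_{-,+}=i,\quad c_{+,-}=-i.
\]
The key algebraic observation is that this multiplicative pattern is a diagonal conjugation: define
\[
D=\operatorname{diag}(d_1,\ldots,d_m),\qquad d_k=\begin{cases}1 & \text{if }\epsilon_k=+1,\\ i & \text{if }\epsilon_k=-1.\end{cases}
\]
Then $d_i d_j^{-1}=c_{\epsilon_i,\epsilon_j}$ in each of the four cases, so the matrix $M^{h}=(F^{h}_{\epsilon_i,\epsilon_j}(z_i,z_j))_{i,j}$ satisfies $M^{h}=D\,M^{T}\,D^{-1}$.

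Consequently $\det M^{h}=\det M^{T}$ for every choice of $(\epsilon_1,\ldots,\epsilon_m)$ and every choice of points $z_i$, so each of the $2^m$ contour integrals in \eqref{detdet} coincides in the two cases and hence the two limits in \eqref{detdetdet} are equal. Because the Temperleyan limits are exactly the moments of the Gaussian Free Field with Dirichlet boundary conditions \cite{KGff}, this also completes, via the method of moments, the proof of Corollary~\ref{main-cor2} once combined with the convergence of the coupling function up to the straight boundary segment established in Proposition~\ref{cnb}. The only subtle point is to check that the asymptotics of Theorem~\ref{main-th2_1} give rise to both $\ff{0}$ and $\ff{1}$ (depending on the type of $v^\delta$) and thus to $\ff{+}$ and $\ff{-}$ with the factor $i$ in \eqref{f01}; this is the content of the ``novel part'' \eqref{f01} already recorded before the lemma, and after that the diagonal-conjugation argument above is purely formal.
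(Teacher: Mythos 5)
Your proof is correct and follows the same route as the paper: the paper simply asserts that the determinants in (\ref{detdet}) coincide for the two sets of kernels given before the lemma, and your diagonal-conjugation observation (the discrepancy factors $c_{\epsilon_i,\epsilon_j}=d_i d_j^{-1}$ with $d_k\in\{1,i\}$, so $M^{h}=DM^{T}D^{-1}$) is exactly the explicit verification of that "easy to check" step. Nothing essential differs from the paper's argument.
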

\begin{proof}
It is easy to check that the determinants in (\ref{detdet}) are the same for both cases.
\end{proof}
\begin{remark}[conformal covariance]\label{conf_cov} Let $\om$ and $\om'$ be simply connected domains. % with smooth boundary. 
Let $\ff{+}^\om(z, w)$ and $\ff{-}^\om(z, w)$ be the functions defined as above for the region $\om$. The function $\ff{+}^\om(z, w)$ is holomorphic in both variables and the function $\ff{-}^\om(z, w)$ is holomorphic in $z$ and antiholomorphic in~$w$.
Let $\phi$ be a conformal mapping of $\om$ onto $\om'$. Then
\[
\ff{+}^\om(z, w) = \ff{+}^{\om'}(\phi(z), \phi(w))\cdot(\phi'(z))^\frac12\cdot(\phi'(w))^\frac12,
\]
\[
\ff{-}^\om(z, w) = \ff{-}^{\om'}(\phi(z), \phi(w))\cdot(\phi'(z))^\frac12\cdot\overline{(\phi'(w))}^\frac12.
\]
\end{remark}

By Lemma~\ref{eqlimits} the following proposition holds for hedgehog domains as well. Therefore the rest of the argument  of the proof of Corollary~\ref{main-cor2} is exactly as in~\cite[Theorem 1.1]{KGff}.

\begin{prop}[\cite{KGff}]
Let $\Omega$ be a simply connected domain. %be a Jordan domain with smooth boundary. 
Let $z_1, \ldots, z_m$ (with $m$ even) be distinct points of $\om$. Let $\Omega^\delta$ be a hedgehog approximation of $\Omega$ and 
$h_{\Omega^\delta}$ be the height function of a uniform domino tiling in the domain $\Omega^\delta$. Then 
\begin{align*}
\lim_{\delta\to 0}\mathbb{E}[(h_{\Omega^\delta}(z_1)-\mathbb{E}[h_{\Omega^\delta}(z_1)])\cdot\ldots\cdot&(h_{\Omega^\delta}(z_m)-\mathbb{E}[h_{\Omega^\delta}(z_m)])]~=~\\ 
\left(-\frac{16}{\pi}\right)^{m/2}\sum_{pairings \,\alpha} &g_D(z_{\alpha(1)},z_{\alpha(2)})\cdot\ldots\cdot g_D(z_{\alpha(m-1)},z_{\alpha(m)}),
\end{align*}
where $g_D$ is the Green function with Dirichlet boundary conditions on $\Omega$.
\end{prop}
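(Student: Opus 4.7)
The plan is to reduce the identity to Kenyon's original theorem \cite[Theorem 1.1]{KGff} by means of the determinantal moment formula stated in (\ref{detdetdet})--(\ref{detdet}). The argument proceeds in three conceptual steps: set up the moment expansion using coupling-function convergence up to the boundary, verify that the determinantal integrands coincide with their Temperleyan counterparts, and invoke Kenyon's evaluation of the resulting multi-contour integrals.

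First, I would fix the $m$ marked points $z_1,\ldots,z_m \in \om$ together with a horizontal straight segment $J \subset \partial \om$, and choose pairwise disjoint paths $\gamma_i$ from a fixed point of $J$ to $z_i$. The formula (\ref{detdetdet})--(\ref{detdet}) then expresses the $\delta\to 0$ limit of the centered $m$-th moment as a finite sum over sign patterns $(\epsilon_1,\ldots,\epsilon_m)\in\{\pm 1\}^m$ of multi-contour integrals of determinants built from $\ff{+}, \ff{-}$ and their conjugates. Proposition~\ref{cnb} provides convergence of the coupling function up to the horizontal segment $J$, which is exactly what is needed to pass to the limit inside the integrals and justify the starting formula for hedgehog approximations.

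Second, I would compare the hedgehog integrand to the Temperleyan one. By Lemma~\ref{eqlimits}, for each fixed sign pattern the determinants $\det_{i,j}\bigl(F_{\epsilon_i,\epsilon_j}(z_i,z_j)\bigr)$ in the two settings coincide. This rests on (\ref{f01}): the extra factor of $i$ that distinguishes $\ff{-}^{\mathrm{hedge}}$ from $\ff{-}^{\mathrm{Temp}}$ enters the determinant only through pairs of entries of types $F_{-1,1}$ and $F_{1,-1}$, and these must occur in balanced numbers in every permutation contributing to the expansion (since each row-index is also a column-index), so the overall $i$-factors cancel globally in each term.

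Third, once the integrands agree sign-pattern by sign-pattern, the evaluation of the integrals is exactly Kenyon's: by conformal covariance (Remark~\ref{conf_cov}) one reduces to the upper half-plane, uses the explicit rational form of $\ff{\pm}$ and residue calculus to collapse the sum over sign patterns onto Wick pairings, and produces the prefactor $(-16/\pi)^{m/2}$ multiplying a sum of products of Green functions. The main ``obstacle'' for this approach is not analytic but combinatorial bookkeeping: all the substantive discrete complex-analytic work already lives in Theorem~\ref{convF} and Proposition~\ref{cnb}, and what remains is a careful check that every sign, factor of $\pi$, and factor of $i$ produced by the hedgehog boundary conditions falls into line with Kenyon's constants — a verification that reduces precisely to Lemma~\ref{eqlimits} combined with Kenyon's original computation.
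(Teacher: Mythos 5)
Your proposal is correct and takes essentially the same route as the paper: the paper likewise obtains the hedgehog determinantal moment formula from Theorem~\ref{convF} and Proposition~\ref{cnb}, identifies the limiting moments with the Temperleyan ones via Lemma~\ref{eqlimits} and the explicit values~(\ref{f01}), and then invokes Kenyon's evaluation from~\cite{KGff}. Your balanced-count argument showing that the extra factors of $i$ (from $f_-$) and $-i$ (from $\overline{f_-}$) cancel in every permutation term is precisely the verification the paper leaves implicit in its one-line proof of Lemma~\ref{eqlimits}.
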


 And the following lemma completes the proof of Corollary~\ref{main-cor2}.
 
 \begin{lemma}[\cite{proba, KGff}]
A sequence of multidimensional random variables whose moments converge to the moments of a Gaussian, converges itself to a Gaussian.
\end{lemma}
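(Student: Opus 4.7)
The plan is to reduce the multidimensional statement to the classical one-dimensional method of moments via the Cram\'er--Wold device, and then invoke moment-determinacy of the Gaussian law.

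First, fix an arbitrary vector $a=(a_1,\dots,a_k)\in\mathbb{R}^k$ and consider the scalar sequence $Y_n := \langle a, X_n\rangle$, where $X_n$ denotes the given $\mathbb{R}^k$-valued sequence. By the multinomial expansion, each moment $\mathbb{E}[Y_n^m]$ is a finite linear combination (with coefficients depending only on $a$ and on multinomial weights) of the joint moments $\mathbb{E}[X_n^{\alpha}]$, $|\alpha|=m$. By hypothesis those joint moments converge to the corresponding moments of the Gaussian limit $X$; hence $\mathbb{E}[Y_n^m]\to \mathbb{E}[Y^m]$ for every $m\ge 1$, where $Y:=\langle a, X\rangle$ is one-dimensional Gaussian.

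Second, I would apply the scalar method of moments to $Y_n\to Y$. The key input is that $Y$ is moment-determined: for a $\mathcal{N}(\mu,\sigma^2)$ law one has $m_{2n}=O((2n)!/2^n n!)$, so Carleman's condition $\sum_n m_{2n}^{-1/(2n)}=\infty$ holds. The standard method-of-moments argument now gives $Y_n\Rightarrow Y$: convergence of second moments yields tightness of $\{Y_n\}$ on $\mathbb{R}$; any subsequential weak limit $Y_\infty$ has moments equal to those of $Y$ by uniform integrability (obtained from boundedness of moments of order $>2$), hence $Y_\infty\stackrel{d}{=}Y$, so the whole sequence converges weakly.

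Finally, the Cram\'er--Wold theorem says that weak convergence of all one-dimensional projections $\langle a, X_n\rangle\Rightarrow\langle a, X\rangle$ implies $X_n\Rightarrow X$ on $\mathbb{R}^k$, which is the statement of the lemma. The only substantive step is the scalar method of moments, which is purely measure-theoretic and independent of the dimer setting; everything else is a bookkeeping reduction. I expect no genuine obstacle, since the result is a direct combination of Cram\'er--Wold with Carleman-type moment-determinacy, both of which are textbook facts (as indicated by the references \cite{proba, KGff}).
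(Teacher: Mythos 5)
Your proof is correct. The paper does not actually prove this lemma---it is quoted with references to Billingsley and to Kenyon's GFF paper---and your argument (Cram\'er--Wold reduction to one dimension, then the scalar method of moments using tightness, uniform integrability from bounded higher moments, and moment-determinacy of the Gaussian via Carleman's condition) is exactly the standard textbook proof that those citations stand for, so there is nothing to compare beyond noting the agreement.
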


%\setcounter{equation}{0}
%\appendix

\section{Double-dimer height function in hedgehog domains }\label{6}
In~\cite{rus} it was shown that there is a factorization of the gradient of the expectation of the height function in the double-dimer model into a product of two discrete holomorphic functions. In this section we use this factorisation and result of Theorem~\ref{convF} to show the convergence of the expectation of the double-dimer height function to the harmonic measure for approximations by hedgehog domains.

\subsection{A factorization of the double-dimer coupling function}
Let $\F\colon\clb\to\mathbb{C}$ be a function such that $\F|_{\db}=0$ and $\F$ is discrete holomorphic everywhere in $\ww{}$ except at the face $\www$ where one has $[\bar{\partial}\F](\www) = \lambda.$ Similarly, let $\G~\colon~\clw~\to~\mathbb{C}$ be a function such that $\G|_{\dw}=0$ and $\G$ is discrete holomorphic everywhere in $\bb{}$ except at the face $\bbb$ where one has $[\bar{\partial}\G](\bbb) = i.$ Let $u\in\bb{}$ and $v \in \ww{}$, then due to~\cite[Proposition~$3.11$]{rus} $C_{\operatorname{dbl-d}, \om}(u, v) = \mathrm{const}\cdot\F(u)\G(v),$ where  $\mathrm{const}=\frac{1}{4\G(\www)}$. 

Note that the function $F$ (resp., G) coincide with the dimer coupling function $C_\om(\cdot,\www)$ (resp., $C_\om(\bbb,\cdot)$) up to a multiplicative constant.

\subsection{Proof of Theorem~\ref{double}} 
From now onwards, let $\bbb$ be a point on a lower horizontal straight part of the boundary of the domain~$\om$, and $\www$ be a point on a right vertical straight part of $\dom$.
Let $f_\om := f_\om^{v_0} $ and $g_\om := f_\om^{u_0}$, where $f_\om^v(z)$ solves boundary value problem (F1)--(F2) described in Lemma~\ref{fboundary}. Note that due to %Theorem~\ref{convF} together with
Proposition~\ref{cnb}  the function $F^\delta$ converges to $f_\om$ and  the function $G^\delta$ converges to $g_\om$.
Now to complete the proof of Theorem~\ref{double} it remains to prove the following:
\begin{prop}\label{f*g}
Let $\om$  be a bounded simply connected domain in $\mathbb{C}$ with smooth boundary. Let $\www$~and~$\bbb$ be the points on the straight part of the boundary of the domain~$\om$.  Assume that the boundary arc $(\bbb\www)$ contains~$0$.
Then the function 
\[\int^w_0 \re[f_\om(z)g_\om(z)dz]\] 
is proportional to the harmonic measure $\operatorname{hm}_\om(w, (\www\bbb))$ in the domain~$\om$.
\end{prop}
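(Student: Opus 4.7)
My plan is to define
\[
H(w):=\int_0^w\re\bigl[f_\om(\zeta)g_\om(\zeta)\,d\zeta\bigr],
\]
viewed as a line integral along any path in $\bar\om\setminus\{\bbb,\www\}$, and to show that $H$ is a bounded harmonic function on $\om$ vanishing on the arc $(\bbb\www)$ containing~$0$ and equal to a constant on the complementary arc $(\www\bbb)$. Uniqueness of the Dirichlet problem will then identify $H$ with a multiple of $\operatorname{hm}_\om(\,\cdot\,,(\www\bbb))$.

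To make $H$ well-defined, I would first note that $f_\om g_\om$ is meromorphic in $\bar\om$ with simple poles only at the two \emph{boundary} points $\bbb,\www$, hence holomorphic on the open set $\om$. The $1$-form $\re[f_\om g_\om\,dz]$ is therefore closed on $\om$, and since $\om$ is simply connected its line integral vanishes along any closed loop in $\om$. Extending continuously to endpoints in $\dom\setminus\{\bbb,\www\}$ gives path-independence on $\bar\om\setminus\{\bbb,\www\}$, which is topologically a closed disk with two boundary points removed.

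The central computation is the boundary behaviour of the integrand. At $z\in\dom\setminus\{\bbb,\www\}$ the Riemann boundary conditions (F2) give $f_\om(z)\sqrt{n(z)},\,g_\om(z)\sqrt{n(z)}\in\mathbb{R}$, so one can write $f_\om(z)g_\om(z)=a(z)b(z)/n(z)$ with real $a,b$. With $\dom$ oriented counterclockwise, $dz=in(z)\,ds$, and therefore
\[
f_\om(z)g_\om(z)\,dz=i\,a(z)b(z)\,ds\in i\mathbb{R},
\]
so $\re[f_\om g_\om\,dz]\equiv 0$ on $\dom\setminus\{\bbb,\www\}$. Integrating along the boundary then shows $H\equiv 0$ on the arc $(\bbb\www)$ (containing $0$) and $H\equiv c$ on $(\www\bbb)$ for some real constant $c$.

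Finally, I would verify boundedness of $H$ near the corner points. Near $\www$, property~(F1) gives $f_\om(z)=\tfrac{1}{2\pi}\cdot\tfrac{i\sqrt{n(\www)}}{z-\www}+O(1)$, while $g_\om(\www)\sqrt{n(\www)}\in\mathbb{R}$; hence $f_\om g_\om\sim\tfrac{i\rho}{2\pi(z-\www)}$ with $\rho\in\mathbb{R}$, and a primitive of its real part is $-\tfrac{\rho}{2\pi}\arg(z-\www)+O(1)$, which is bounded in $\bar\om$ near $\www$. A symmetric argument applies at $\bbb$. So $H$ is bounded harmonic on $\om$ with piecewise-constant boundary values, and by uniqueness $H(w)=c\cdot\operatorname{hm}_\om(w,(\www\bbb))$. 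The main conceptual point is the vanishing of $\re[f_\om g_\om\,dz]$ along $\dom$, which is precisely where the matching square-root twists in the Riemann boundary conditions for $f_\om$ and $g_\om$ combine to cancel.
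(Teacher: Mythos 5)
Your argument is correct, but it takes a genuinely different route from the paper. The paper proves this proposition by brute force: it maps $\om$ to the unit disk by $\phi$ with $\phi(\www)=-i$, $\phi(\bbb)=1$, plugs in the explicit formulas for $f_\om$ and $g_\om$ from Lemma~\ref{fboundary}, and observes that $f_\om g_\om$ is a constant multiple of $\bigl(\log\frac{\phi(z)-\phi(\www)}{\phi(z)-\phi(\bbb)}\bigr)'$, so the primitive of its real part is read off directly as a multiple of the harmonic measure, with an explicit (in particular nonzero) constant. You instead never compute $f_\om g_\om$: you use only the defining properties (F1)--(F2), showing that the matching square-root twists force $\re[f_\om g_\om\,dz]=0$ along $\dom\smallsetminus\{\bbb,\www\}$ (your computation $dz=in(z)\,ds$, $f_\om g_\om=ab/n$ with $a,b$ real, is right with the counterclockwise orientation), that the primitive $H$ is bounded near the two simple boundary poles, and then you invoke uniqueness of bounded harmonic functions with prescribed boundary values off a finite exceptional set. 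This is closer in spirit to the characterization-via-primitive philosophy the paper uses in Section~\ref{cont} for the convergence theorem, and it is more robust (it would survive without the explicit disk formulas). What the paper's computation buys in exchange is the explicit proportionality constant: your argument yields $H=c\cdot\operatorname{hm}_\om(\,\cdot\,,(\www\bbb))$ but does not by itself exclude $c=0$, which is harmless for the literal statement ``proportional'' but matters for the normalization in Theorem~\ref{double}; also, your bound near $\www$ implicitly needs the remainder $f_\om g_\om-\frac{i\rho}{2\pi(z-\www)}$ to have a bounded primitive, which holds because $g_\om$ is (better than) H\"older up to the boundary near $\www$ for smooth $\dom$ — worth one sentence if you write this up.
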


\begin{proof} Let $\phi$ be a conformal mapping of the domain $\om$ onto the unit disk $\mathbb{D}$ such that $\www$ mapped onto $-i$ and $\bbb$ mapped onto $1$. Note that $\phi'(\www)>0$ and $\phi'(\bbb)>0$, since $\bbb$ (resp., $\www$) is a point on a lower horizontal (reps., right vertical) straight part of~$\dom$.
Let us consider the product of functions $f_\om(z)$ and $g_\om(z)$. It equals
\begin{align*}
f_\om(z)\cdot g_\om(z)=\frac{1}{2\pi}\left(\frac{\mu-\phi(v_0)\bar\mu}{\phi(z)-\phi(v_0)}\right)\cdot(\phi'(z))^\frac12\cdot(\phi'(v_0))^\frac12\times
\frac{1}{2\pi}\left(\frac{\mu-\phi(u_0)\bar\mu}{\phi(z)-\phi(u_0)}\right)\cdot(\phi'(z))^\frac12\cdot(\phi'(u_0))^\frac12\\
=\frac{ci\lambda\cdot\phi'(z)}{(\phi(z)-\phi(\www))(\phi(z)-\phi(\bbb))}=
\frac{\bar{\lambda} c}{\phi(\www)-\phi(\bbb)}\cdot\left(\log\frac{(\phi(z)-\phi(\www))}{(\phi(z)-\phi(\bbb))}\right)',
%\frac{\lambda c_{\phi}\widetilde{c_{\phi}}}{(\phi(z)-\phi(\www))\cdot(\phi(z)-\phi(\bbb))}\times \\
%&\prod_{k=1}^{n+1}(\phi(z)-\phi(\nwvupugl{k}))^{\frac12}
%\cdot\prod_{k=1}^{n-1}(\phi(z)-\phi(\nwvpugl{k}))^{-\frac12}
%\cdot \prod_{k=1}^{m+1}(\phi(z)-\phi(\nbvupugl{k}))^{\frac12}
%\cdot\prod_{k=1}^{m-1}(\phi(z)-\phi(\nbvpugl{k}))^{-\frac12}.
\end{align*}
hence $\int \re[fgdz]$ is proportional to 
$\frac{1}{\pi}\im\left[ \log \left(\frac{\phi(z)-\phi(\www)}{\phi(z)-\phi(\bbb)}\right)\right]$ which is the harmonic measure of~$(\www\bbb)$.
\end{proof}

%=\frac{1}{2\pi}\cdot\left(\frac{\mu-v\bar\mu}{z-v}\right)

\end{document}